\def\eps{\varepsilon}
\newcommand{\beqar}{\begin{eqnarray*}}
\newcommand{\eeqar}{\end{eqnarray*}}
\newcommand{\beqarl}{\begin{eqnarray}}
\newcommand{\eeqarl}{\end{eqnarray}}
\newcommand{\be}{\begin{equation}}
\newcommand{\ee}{\end{equation}}
\newcommand{\lp}{\left(}
\newcommand{\rp}{\right)}
\def\R{\mathbb{R}}
\begin{document}

\markboth{Pierre Degond, Sara Merino-Aceituno}{Nematic alignment of self-propelled particles in the macroscopic regime}

%
\catchline{}{}{}{}{}
%

\title{Nematic alignment of self-propelled particles in the macroscopic regime}

\author{Pierre Degond}

\address{Department of Mathematics, Imperial College London, 
South Kensington Campus,\\
 London SW7 2AZ, United Kingdom\\
pdegond@imperial.ac.uk}

\author{Sara Merino-Aceituno}

\address{Faculty of Mathematics, University of Vienna, 
Oskar-Morgenstern-Platz 1, 1090 Vienna, Austria \\
sara.merino@univie.ac.at\\
Department of Mathematics, University of Sussex Falmer,
Brighton BN1 9RH United Kingdom
}

\maketitle

\begin{abstract}
Starting from a particle model describing self-propelled particles interacting through nematic alignment, we derive a macroscopic model for the particle density and mean direction of motion. We first propose a mean-field kinetic model of the particle dynamics. After diffusive rescaling of the kinetic equation, we formally show that the distribution function converges to an equilibrium distribution in particle direction, whose local density and mean direction satisfies a cross-diffusion system. We show that the system is consistent with symmetries typical of a nematic material. The derivation is carried over by means of a Hilbert expansion. It requires the inversion of the linearized collision operator for which we show that the generalized collision invariants, a concept introduced to overcome the lack of momentum conservation of the system, plays a central role. This cross diffusion system poses many new challenging questions.  
\end{abstract}

\keywords{Collective dynamics; Vicsek model; Q-tensor; diffusion approximation; generalized collision invariant; symmetries}

\ccode{AMS Subject Classification: 5Q80, 35L60, 35K99, 82C22, 82C31, 82C44, 82C70, 92D50}

\section{Introduction}

Systems of active (or self-propelled) particles have received a great deal of attention in the last decade due to their potential for explaining emergent phenomena occurring for instance in animal collective behavior,\cite{couzin2002collective} development and cancer\cite{kabla2012collective} or social mass phenomena.\cite{helbing1995social} We refer to Ref.~\refcite{vicsek2012collective} for a review on the subject. Among all the models, the Vicsek model\cite{vicsek1995novel} has been particularly studied due to its simplicity. In the Vicsek model, self-propelled particles tend to align with their neighbors up to some random uncertainty. To be more specific about the type of considered alignment, we clarify first the difference between `orientation' and 'direction': two vectors have the same orientation if after normalization, they are equal; two vectors have the same direction if after normalization, they are equal or opposite (so every direction has two orientations). The alignment in the Vicsek model is so called polar in the sense that if a particle's orientation and the neighbors' mean particle orientation are opposite, the particle will make a U-turn  to adopt the same orientation as the neighbors' mean particle orientation. With this model, Vicsek and followers\cite{chate2008collective} exhibited a wealth of intriguing patterns which attracted a lot of literature. 

In a series of papers \refcite{chate2008modeling,ginelli2010large}, Chat\'e and his team proposed a variant of the Vicsek model in which the particles interact nematically. In this case, returning to particle alignment as described above, the particle would not undertake a U-turn because the particle's direction and the mean particle direction are the same (even though in the example considered they have opposite orientations). In other words, what matters in a nematic interaction is the angle of lines between the two directions and not the angle of vectors betwen the two orientations. The word ``nematic'' originates from the physics of liquid crystals, in which this kind of interaction is a model for the excluded volume interaction between rod-like polymers.\cite{ball2017mathematics,ball2010nematic} In Ref.~\refcite{chate2008modeling,ginelli2010large}, new patterns were seen compared to the Vicsek model, which suggests that the change from polar to nematic alignment makes a big difference. The present work aims at studying nematic alignment further by means of macroscopic models. 

Macroscopic, i.e. fluid-like, models of large particle systems are important tools in the analysis of such systems. Indeed, macroscopic models consist of partial differential equations which are amenable to different kinds of qualitative and quantitative studies such as stability and bifurcation analyses, asymptotic behavior, rate of convergence towards equilibria, etc., that the discrete particle models do not allow. However, a key requirement is to derive the macroscopic models from the particle ones as rigorously as possible, otherwise results derived from the macroscopic level could lack relevance for the particle system.  

The first macroscopic version of the Vicsek model was proposed by Toner and Tu in Ref.~\refcite{toner1995long} from pure symmetry consideration. We will see below that symmetry considerations are quite important. However, Toner \& Tu's model was not -per se- derived from the Vicsek model. To overcome this question, Bertin and coworkers in Ref.~\refcite{bertin2006boltzmann} proposed a binary collision mechanism supposed to mimic the Vicsek interaction and used a Boltzmann approach to derive Toner \& Tu's model. However, beside the fact that their derivation has not been performed on the original model, the approach itself leaves a lot of unanswered mathematical questions some of which have been addressed in Ref.~\refcite{carlen2015boltzmann}. The first rigorous derivation of a macroscopic model for the Vicsek model has been performed in Ref.~\refcite{degond2008continuum} using the techniques of kinetic theory. A fully rigorous treatment of this derivation  can be found in Ref.~\refcite{jiang2016hydrodynamic} and related mathematical investigations in Ref.~\refcite{figalli2018global,gamba2016global,zhang2017local}. The resulting model now referred to as ``Self-Organized Hydrodynamics (SOH)'' is not the Toner \& Tu model, although the latter can be related to an approximation of the former by relaxation. The SOH model has been elaborated further to accommodate other kinds of interactions. A noticeable one is that performed in Ref.~\refcite{degond2017new,degond2018quaternions} for full body-attitude coordination. In particular, Ref.~\refcite{degond2018quaternions} highlights the connection between full body-attitude coordination and nematic alignment of the corresponding quaternions (body attitude can be encoded in a unit quaternion, i.e. a normalized vector in dimension $4$). In the present work, we will rely on Ref.~\refcite{degond2018quaternions} for several technical aspects. 

Using the same approach as in Ref.~\refcite{bertin2006boltzmann}, the article \refcite{peshkov2012nonlinear} proposes a model for nematically interacting particles. A similar approach based on a slightly different collision mechanism is developed in Ref.~\refcite{baskaran2008enhanced}. But these approaches suffer from the same drawback as in Ref.~\refcite{bertin2006boltzmann}: they do not start from the genuine Vicsek model for nematic particles and a rigorous mathematical framework for their derivation is still missing. We also note a mean-field approach in Ref.~\refcite{peruani2008mean}. Here, we aim to derive a macroscopic model from the genuine nematic Vicsek dynamics based on rigorous asymptotic theory in which the small parameter $\varepsilon$ is related to the change of scale from the microscopic to the macroscopic scale. We will show that the relevant scaling is a diffusive scaling by which the dilation parameter between the micro and macro time scales is $\varepsilon^{-2}$ while the corresponding dilation parameter for the spatial scales $\varepsilon^{-1}$. 

Our approach is valid for any dimension $d \geq 2$. It relies first on the derivation of an associated mean-field kinetic model and second on a diffusion approximation of that model. The derivation of the mean-field model from the particle model is not rigorous but, based on previous results in the Vicsek case,\cite{bolley2012mean} we conjecture that the former is the limit of the latter when the number of particles tends to infinity. To perform the diffusion approximation of the kinetic model, we use a classical Hilbert expansion method (see e.g. Ref.~\refcite{degond2004macroscopic} for a review, REf.~\refcite{cercignani2013mathematical} for a general presentation of mathematical kinetic theory and Ref.~\refcite{guo2010global} for a recent application of the Hilbert expansion technique). However, there are several technical difficulties. One of them lies in the inversion of the linearized collision operator (which describes the combined influence of alignment and noise within the kinetic model). As usual, solvability conditions need to be satisfied for this linearized operator to be invertible. We show that these conditions involve the so-called Generalized Collision Invariants (GCI) which were first introduced in Ref.~\refcite{degond2008continuum} to overcome the lack of momentum conservation in the model (indeed, the alignment interaction does not preserve momentum, a feature related to the self-propulsion of the particles). The fact that the GCI span the kernel of the adjoint of the linearized collision operator has already been noticed in Ref.~\refcite{aceves2019hydrodynamic} and is also verified here. The GCI for the nematic alignment collision operator were first derived in Ref.~\refcite{degond2018quaternions}. 

The macroscopic model is a system of cross-diffusion equations for the particle density $\rho$ and for the mean nematic direction $u$. The mean nematic direction is a direction of anisotropy for the system. Therefore, the local response of the system is different whether it is acted upon along the direction $u$ or across it, but such responses are equivalent when it is acted in different directions lying in the subspace $\{u\}^\bot$. Therefore, gradients need to be decomposed along the $u$ direction or across it which generates a large combinatoric complexity of different second order derivatives in the model. Likewise, gradients in $\rho$ and $u$ fuel the dynamics of the system, which results in the presence of quadratic terms in first order derivatives. Again, due to the large number of ways to multiply first order gradients in $\rho$ and $u$ decomposed in their parallel and transverse components to $u$, this results in a large combinatoric complexity of first-order terms as well. However, there is an order in this apparent complexity. This order is powered by the symmetries of the system and we will show that some combinations of derivatives which would superficially appear as possible are turned off as incompatible with the symmetries of the system. 

There are different variants of these nematic alignment models. For instance, in Ref.~\refcite{chate2006simple}, the motion of the particles is also nematic: they have a certain probability of reversing i.e. of changing the orientation of their motion along their given direction in such a way that there is no preferred orientation along the direction of motion. In Ref.~\refcite{degond2017continuum,degond2018age}, a similar model has been proposed to model colonies of myxobacteria. In this model, the reversal probability was weak compared to the nematic alignment probability. Also, a feature of the noise in the interaction term allowed the two densities of particles moving along a given direction in the two possible orientations to be different. This left the possibility of a net mean motion and consequently, the macroscopic limit was of hydrodynamic type. It led to a hyperbolic model which corresponded to a coupled system of two SOH models with identical mean directions and with reaction terms describing the reversals. Here, we do not leave the possibility to the densities of these two populations to be different. So the net mean motion is actually zero and what the macroscopic model captures are the fluctuations around this zero-average motion in the form of a diffusion system. So, the resulting model is completely different. 

We also mention Ref.~\refcite{degond2010diffusion} in which an asymptotic expansion to the solution of the kinetic Vicsek model up to the first order in $\varepsilon$ were given. This led to an SOH model perturbed by diffusion terms of order $\varepsilon$. These diffusion terms had similar structure as those presented in this paper, with a decomposition of the gradients along and normal to the mean direction of motion. This is not surprising as the structure of these terms were conditioned by the symmetries of the system which were, for the second order terms, the same as the ones we encounter here. In Ref.~\refcite{degond2010diffusion}, instead of relying on a Hilbert expansion, the methodology was based on a micro-macro decomposition. In the end, the two approaches should be equivalent and in the present paper, we chose to investigate the Hilbert expansion approach. In doing so, important structural properties were revealed, such as the relation between the GCI and the inversion of the linearized collision operator. 

As already mentioned, the body orientation model of Ref.~\refcite{degond2018quaternions} mostly corresponded to nematic alignment in dimension $4$. However, there is a major difference with the  model investigated here, which lies in the motion term. In Ref.~\refcite{degond2018quaternions}, the particle velocity was a quadratic function of the unit quaternion. Thus, two opposite quaternions gave rise to the same direction of motion. So, in this model, a net motion was achieved in average. The macroscopic limit was of hydrodynamic type and the limit model was of SOH type, i.e. was hyperbolic. Here, two opposite orientations give rise on average to no net motion. Therefore, the macroscopic limit is of diffusive type and again, completely different from what we get in Ref.~\refcite{degond2018quaternions}. 

The paper is organized as follows. In Section \ref{sec:modeling} we introduce the modelling framework, i.e. the particle model and the associated mean-field model. In Section \ref{sec:main_result}, we state the main result, i.e. Theorem \ref{th:macro} which gives the macroscopic model and discuss the properties of the model. In Section \ref{sec:main_proof}, we give the proof of the main result. Finally, in Section \ref{sec:summary}, we provide a conclusion and some perpectives.

\section{Modelling framework}
\label{sec:modeling}

\subsection{Individual Based Model}

We present a particle (or Individual-Based Model (IBM)) of collective motion where agents move at a constant speed while undergoing nematic alignment with their neighbours.
Consider $N$ agents described by their positions $X_i\in {\mathbb R}^d$ and orientations $\omega_i\in \mathbb{S}^{d-1}$, $i=1,\hdots, N$, where $\mathbb{S}^{d-1}$ is the $d-1$-sphere. In all this document, we assume $d \geq 2$.  
The evolution of the system is given by:
\begin{subequations}\label{eq:IBM}
\begin{numcases}{}
\label{eq:IBM_x}
dX_i = \omega_i dt,\\
\label{eq:IBM_omega}
d\omega_i = P_{\omega_i^\perp} \circ \left[ \nu (\omega_i \cdot \bar \omega_i) \bar \omega_i + \sqrt{2D} dB^i_t \right],
\end{numcases}
\end{subequations}
where $\nu, D>0$ are given constants, $(B_t^i)_{i=1,\hdots, N}$ denotes $N$ independent Brownian motions in ${\mathbb R}^d$ and $P_{\omega^\perp}$ denotes the orthogonal projection onto the orthogonal space to $\omega$ in ${\mathbb R}^d$ denoted by $\{\omega\}^\bot$. More generally, for any unit vector $\xi \in {\mathbb R}^d$, $|\xi|=1$, we will denote by  $P_{\xi^\perp}$ the orthogonal projection of ${\mathbb R}^d$ onto  $\{\xi\}^\bot$, namely
$$  P_{\xi^\perp} = \mbox{Id} - \xi \otimes \xi, $$
where $\otimes$ denotes the tensor product and $\mbox{Id}$ the identity matrix. The symbol '$\circ$' in Eq. \eqref{eq:IBM_omega} indicates that the stochastic differential equation (SDE)  \eqref{eq:IBM_omega} must be understood in the Stratonovich sense. Indeed, it is shown in Ref.~\refcite{hsu2002stochastic} that a SDE involving a Brownian motion projected on the tangent space to a manifold provides a Brownian motion on this manifold provided the SDE is understood in the Stratonovich sense. Finally, $\bar \omega_i$ denotes any of the two unitary leading eigenvectors of the matrix $Q_i$ defined by:
\begin{equation}  \label{eq:def_Q}
Q_i = \frac{1}{N} \sum_{j=1}^N \frac{1}{R^d} K \left( \frac{|X_i-X_j|}{R} \right)  \left( \omega_j \otimes \omega_j - \frac{1}{d}\mbox{Id} \right),
\end{equation}
where the function $K$ corresponds to a sensing kernel and $R>0$ is the typical radius of the sensing region. We assume that $K\geq 0$ and
$$\int_{{\mathbb R}^d}\frac{1}{R^d} K\left( \frac{|x|}{R} \right)\, dx =1.$$
We assume that the leading eigenvalue of $Q_i$ is simple. Therefore, there are only two unitary leading eigenvectors which are opposite to each other. However, the expression $(\omega_i \cdot \bar \omega_i) \bar \omega_i$ in Eq. \eqref{eq:IBM_omega} is independent of the choice of sign for $\bar \omega_i$ and is well-defined. System \eqref{eq:IBM} is supplemented with initial conditions, namely $(X_i(0),\omega_i(0))= (X_{i0}, \omega_{i0})$, $\forall i \in \{1, \ldots, N\}$, where $(X_{i0}, \omega_{i0})$ are points in the phase space ${\mathbb R}^d \times {\mathbb S}^{d-1}$ which are independently and identically distributed according to a probability distribution having density $f_0(x,\omega)$ with respect to the Lebesgue measure. 

Eq. \eqref{eq:IBM_x} expresses that agent $i$ moves in the direction and orientation of $\omega_i$ at speed $1$. The constancy of the speed is a way to express the particles' self-propulsion (think of fish which would be able to instantaneously adjust their stroke to maintain a constant cruising speed). The specification of a unit speed is possible by choosing a convenient ratio between the time and space units. Eq. \eqref{eq:IBM_omega} expresses how the orientation $\omega_i$ changes over time: it is the sum of two competing phenomena, a noise term given by the Brownian motion on the one hand, and an alignment term corresponding to the term involving $\bar \omega_i$. Eq. \eqref{eq:IBM_omega} without the noise term can be written 
\begin{equation}
\frac{d\omega_i}{dt}=\nu P_{\omega_i^\perp}\Big((\omega_i\cdot \bar\omega_i) \bar \omega_i\Big)= \frac{\nu}{2}\nabla_\omega \Big( (\omega_i\cdot \bar\omega_i)^2 \Big),
\label{eq:liqcrystsimpl}
\end{equation}
where $\nabla_\omega$ is the gradient in the sphere ${\mathbb S}^{d-1}$. Eq. \eqref{eq:liqcrystsimpl} describes the relaxation of the orientation $\omega_i$ towards a maximum of the potential given by $(\omega_i\cdot \bar\omega_i)^2$. These maxima corresponds to either $\bar \omega_i$ if $(\omega_i \cdot \bar \omega_i) \geq 0$ or $-\bar \omega_i$ if $(\omega_i \cdot \bar \omega_i) \leq 0$ which is what is called ``nematic alignment'' in reference to nematic liquid crystal theory. Alignment occurs with intensity $\nu$ (in the fish example above, they would change orientation at time intervals of average duration $1/\nu$). 

The direction of $\bar \omega_i$ corresponds to the mean nematic direction of the particles. Indeed, to be consistent with the fact that the particles tend to adopt the orientation of $\bar \omega_i$ or $- \bar \omega_i$ according to the sign of $(\omega_i \cdot \bar \omega_i)$, one must compute an average of the mean orientations $\omega_j$ which is invariant under the change $\omega_j \to - \omega_j$.This is the purpose of constructing the tensor $Q_i$, which is called the Q-tensor in the language of liquid crystals.\cite{ball2017mathematics,ball2010nematic}  The expression \eqref{eq:def_Q} of $Q_i$ is quadratic with respect to any of the vectors $\omega_j$ involved in the sum, and consequently respects this invariance. On the other hand, if there is only one particle $j$ involved in the sum (for instance if $K$ is compactly supported and only particle $j$ different from $i$ lies in the support of $K ( \frac{|X_i \, - \, \cdot|}{R} )$), then the nematic alignment direction should be $\pm \omega_j$. The corresponding $Q_i$ is proportional to $(\omega_j \otimes \omega_j - \frac{1}{d}\mbox{Id})$ and its leading eigenvectors are precisely $\pm \omega_j$. So, it makes sense to retain this property and refer to the mean alignment direction as the direction of the leading eigenvector of $Q_i$. 
For an alternative explanation of the relation between the $Q$-tensor and the mean nematic direction through a minimisation of a potential, the reader is referred to Ref.~\refcite[Sec. 3]{degond2018quaternions}.

This nematic alignment model is different from models encountered in liquid crystals.\cite{ball2017mathematics,ball2010nematic} Indeed, in such models, the alignment dynamics (ignoring the noise) is written: 
\begin{equation} 
\frac{d\omega_i}{dt} = \nu \, P_{\omega_i^\perp} (Q_i \omega_i). 
\label{eq:liqcryst}
\end{equation}
Definition \eqref{eq:liqcryst} is more straightforward to handle than \eqref{eq:liqcrystsimpl} as it does not impose the leading eigenvector to be simple. By a manipulation involving \eqref{eq:def_Q}, it is also easy to show that the interaction \eqref{eq:liqcryst} is additive: i.e. the total contribution to $\frac{d\omega_i}{dt}$ of all the particles is a sum of the contributions of every individual particle. Expression \eqref{eq:liqcrystsimpl} does not enjoy this additivity property. However, in most self-organization systems, interactions are not additive so it might happen that \eqref{eq:liqcrystsimpl} is more accurate to model them than \eqref{eq:liqcryst}. Furthermore, \eqref{eq:liqcrystsimpl} has the advantage to rule out any phase transition which are present with \eqref{eq:liqcryst} and are associated with a change in the multiplicity of the leading eigenvalue (we refer to Ref.~\refcite{ball2017mathematics,ball2010nematic,han2015microscopic,wanghoffman08,wangzhou11} for literature on phase transitions in liquid crystals and to Ref.~ \refcite{degond2019phase,degond2013macroscopic,degond2015phase,frouvelle2012dynamics} for the corresponding mathematical literature on the Vicsek model). The techniques developed in the present paper and notably, the GCI technique (see Section \ref{subsec:GCI}) are not yet ready to handle \eqref{eq:liqcryst} and their elaboration is still in progress. The fact that \eqref{eq:liqcrystsimpl} does not exhibit phase transitions is not a problem when one wants to focus on the dynamics of the ordered phase, which is our case here. In this case, from a phenomenological viewpoint, both models encompass the same effects and can be used to investigate them qualitatively. 
 
Here, we stress that although subject to nematic alignment, the particles are polar in their movement i.e. two particles having orientations $\omega$ and $-\omega$ move in opposite directions. Hence, system \eqref{eq:IBM} is not invariant by the reversal of the orientations $\omega_i$ of the particles. However, one may think that if there are many particles, the nematic interaction will contribute to quickly relax the distribution of $\omega$'s to a symmetric distribution, invariant by the change of $\omega$ to $-\omega$. This is indeed what we will observe in the macroscopic regime.

\subsection{Mean-field limit}

In this section, we formally establish the mean-field limit as the number of agents $N \to \infty$ of System \eqref{eq:IBM}. We construct the empirical measure $f^N(t)$ of the particles, given by
\begin{equation}
f^N(t)(x,\omega) = f^N(t,x,\omega)= \frac{1}{N}\sum_{i=1}^N \delta_{(X_i(t), \omega_i(t))}(x,\omega), 
\label{eq:empmeas}
\end{equation}
where $\delta_{(X_0, \omega_0)}(x,\omega)$ stands for the Dirac delta distribution on ${\mathbb R}^d \times {\mathbb S}^{d-1}$ located at $(X_0, \omega_0)~\in~{\mathbb R}^d~\times~{\mathbb S}^{d-1}$. We also introduce the initial measure $f^N_0$:
$$
f^N_0(x,\omega)= \frac{1}{N}\sum_{i=1}^N \delta_{(X_{i0}, \omega_{i0})}(x,\omega),
$$
such that $f^N(0) = f^N_0$. Then $f^N(t)$ is a random measure on ${\mathbb R}^d \times {\mathbb S}^{d-1}$ for all $t \geq 0$. For many kinds of particle systems, it can be shown that, as $N \to \infty$, $f^N$ converges to a deterministic measure which satisfies a partial differential equation.\cite{bolley2012mean,hauray2007n,jabin2014review} In the present case, the same result is conjectured, although the proof might be delicate due to the necessity to avoid configurations where the leading eigenvalue is multiple. So, the following result is purely formal.

\begin{proposition}[Formal mean-field limit]
The empirical distribution \eqref{eq:empmeas} converges to a function $f=f(t,x,\omega)$ which satisfies the following kinetic equation:
\begin{equation} \label{eq:kinetic_eq}
\partial_t f + \nabla_x \cdot (\omega f)  = \nabla_\omega \cdot \left[ -\nu \, (\omega \cdot \bar \omega_{R,f}) \, P_{\omega^\perp} \bar \omega_{R,f} f + D \, \nabla_\omega f \right]:=C_R(f),
\end{equation}
where $\nabla_\omega$ and $\nabla_\omega \cdot$ denote the gradient and divergence operators on $\mathbb{S}^{d-1}$, respectively, and where $\bar \omega_{R,f}$ is the unitary leading eigenvector (up to a sign) of 
\begin{equation}
Q_{R,f}(t,x):= \int_{{\mathbb R}^d}\int_{\mathbb{S}^{d-1}} \frac{1}{R^d} K\left( \frac{|x-y|}{R}\right) \left(\omega \otimes \omega - \frac{1}{d}\mbox{Id} \right)\, f\, d\omega\, dy.
\label{eq:QKf}
\end{equation}
The initial condition to \eqref{eq:kinetic_eq} is $f(0,x,\omega)=f_0(x,\omega)$.
\end{proposition}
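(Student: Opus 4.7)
The plan is a standard weak-formulation / test-function calculation, with the novelty being only the non-smooth dependence of $\bar\omega_{R,f}$ on $f$, which is why the statement is formal. I would proceed in four steps.

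First, I would fix a smooth test function $\varphi\in C^\infty_c({\mathbb R}^d\times{\mathbb S}^{d-1})$ and compute $d\langle f^N(t),\varphi\rangle = \frac{1}{N}\sum_i d\varphi(X_i,\omega_i)$. Applying Itô's formula together with the Stratonovich rule for the projected Brownian motion (which gives, on ${\mathbb S}^{d-1}$, the Laplace--Beltrami operator $\Delta_\omega$ via $P_{\omega^\perp}\circ dB^i_t$), I obtain
\begin{align*}
d\langle f^N,\varphi\rangle = \frac{1}{N}\sum_i\bigl[\omega_i\cdot\nabla_x\varphi + \nu(\omega_i\cdot\bar\omega_i)P_{\omega_i^\perp}\bar\omega_i\cdot\nabla_\omega\varphi + D\,\Delta_\omega\varphi\bigr](X_i,\omega_i)\,dt + dM^N_t,
\end{align*}
where $M^N_t$ is a martingale of order $N^{-1/2}$.

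Second, I would rewrite the sum in terms of $f^N$ itself. The key observation is that with the empirical measure \eqref{eq:empmeas}, the tensor $Q_i$ defined by \eqref{eq:def_Q} is exactly $Q_{R,f^N}(t,X_i(t))$, and therefore $\bar\omega_i = \bar\omega_{R,f^N}(t,X_i(t))$. Thus the drift integrand in the above expression can be written as a function $\Phi[f^N](t,x,\omega)$ evaluated at $(X_i,\omega_i)$, and the sum becomes $\langle f^N(t),\Phi[f^N(t)]\rangle$. The resulting equation is the weak form of $\partial_t f^N + \nabla_x\cdot(\omega f^N) = C_R(f^N) + \text{(martingale remainder)}$, after transferring the $x$- and $\omega$-derivatives onto $f^N$ through integration by parts on ${\mathbb R}^d$ and ${\mathbb S}^{d-1}$.

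Third, I would send $N\to\infty$. The martingale term vanishes (in probability) because its quadratic variation is $O(1/N)$. Assuming propagation of chaos, $f^N(t)\to f(t)$ in the sense of measures, with $f$ deterministic. The linear terms $\omega\cdot\nabla_x\varphi$ and $D\Delta_\omega\varphi$ pass trivially to the limit. The nonlinear term $(\omega\cdot\bar\omega_{R,f^N})P_{\omega^\perp}\bar\omega_{R,f^N}$, viewed as a quadratic expression in $\bar\omega_{R,f^N}$, is independent of the sign ambiguity, so one only needs continuity of the leading-eigenvector projector with respect to the convolved tensor $Q_{R,f^N}$, which in turn depends continuously on $f^N$ through the spatial convolution with $K$. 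Identifying the limit integral as the weak form of \eqref{eq:kinetic_eq} yields the stated equation, with initial condition $f_0$ coming from the i.i.d.\ sampling of $(X_{i0},\omega_{i0})$ and the law of large numbers.

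The main obstacle, and the reason the derivation is only formal, is the third step: the leading eigenvector $\bar\omega_{R,f}$ is a smooth function of $Q_{R,f}$ only where the leading eigenvalue is simple, and the nonlinear drift is merely locally Lipschitz off the crossing set. A rigorous mean-field proof in the spirit of Ref.~\refcite{bolley2012mean} would require either a-priori control ensuring that the empirical $Q^N_i$ remains uniformly away from the eigenvalue-crossing manifold, or a selection/regularisation argument for $\bar\omega$ across it; both are delicate and are explicitly set aside here. Given that difficulty, one just states the convergence formally and reads off the limiting PDE from the weak form derived above.
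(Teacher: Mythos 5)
The paper offers no proof of this proposition — it explicitly declares the mean-field limit ``purely formal,'' conjectured by analogy with the polar Vicsek case (Ref.~\refcite{bolley2012mean}) and deliberately sidesteps the difficulty of eigenvalue crossings. Your derivation fills in the standard weak-formulation argument that the paper implicitly invokes: It\^o's formula with the Stratonovich-to-generator conversion $P_{\omega^\perp}\circ\sqrt{2D}\,dB \rightsquigarrow D\Delta_\omega$ (with the convenient fact that the Stratonovich symbol affects only the noise, not the drift), the key identity $Q_i = Q_{R,f^N}(t,X_i)$ so that $\bar\omega_i = \bar\omega_{R,f^N}(t,X_i)$ and the particle drift closes in terms of $f^N$ alone, the $O(N^{-1/2})$ martingale estimate, and propagation of chaos. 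You also correctly isolate the one genuine obstruction — the leading-eigenvector map is only locally Lipschitz away from the crossing set of $Q$, so the nonlinear drift is not globally regular — which is exactly the point the paper makes when it calls the result formal. The argument is correct and is the natural route; the only minor presentation quibble is that ``transferring derivatives onto $f^N$ by integration by parts'' should be understood distributionally (as $f^N$ is a sum of Diracs), with the genuine integration by parts performed only after passing to the limit, but that is a matter of phrasing and not substance.
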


In the language of kinetic theory, the left-hand side of \eqref{eq:kinetic_eq} is called the transport operator, and its right-hand side, namely $C_R(f)$, is the collision operator. 
In \eqref{eq:kinetic_eq}, the time-derivative is balanced by a space-derivative term which corresponds to \eqref{eq:IBM_x} and the collision operator which corresponds to \eqref{eq:IBM_omega}. In the latter, the first term is the contribution of alignment while the second one is that of the noise. The alignment term depends on the leading eigenvector of the Q-tensor $Q_{R,f}$ whose expression \eqref{eq:QKf} is a continuous version of the expression \eqref{eq:def_Q} of the discrete Q-tensor $Q_i$. 

We note that the space derivative term is antisymmetric in the transformation $\omega \to - \omega$ while the collision term is invariant under this transformation. Again, this reflects the fact that the motion of the particles is polar (i.e. depends on the orientation of $\omega$), while the alignment is nematic (i.e. depends on the direction of $\omega$ but not on its orientation). Again, we expect that, if the latter dominates, the limit model will be purely nematic. This is what we observe in the macroscopic below.

\section{The main result: macroscopic equations}
\label{sec:main_result}

\subsection{Parabolic rescaling}

The goal of this paper is to investigate the behavior of \eqref{eq:kinetic_eq} at macroscopic scales. This means that we must simultaneously dilate the space and time units so as to be able to observe the system on large regions and on large times. The dilation factor for space and time are not independent and their relation depends on the problem studied. Here, we will see that the convenient one is the so-called parabolic or diffusive rescaling, whereby the time dilation factor is quadratic in terms of the spatial dilation factor. 

More precisely, we rescale space and time in the kinetic equation \eqref{eq:kinetic_eq} by introducing a small parameter $\varepsilon \ll 1$ which corresponds to a spatial dilation factor of $1/\varepsilon$. We then introduce new time and space variables $t'$ and $x'$ by letting
$$x'=\varepsilon x, \qquad t'=\varepsilon^2 t,$$
and a new kinetic distribution function $f'(t',x',\omega)$ by
$$  f'(t',x',\omega) \, dx' \, d\omega = f(t,x,\omega) \, dx \, d\omega, $$
i.e.
$$ f'(t',x',\omega) = \frac{1}{\varepsilon^d} \, f \big( \frac{t'}{\varepsilon^2}, \frac{x'}{\varepsilon}, \omega \big). $$
This choice allows us to keep the number of particles in a given volume in phase space unchanged through the scaling. Note that we do not rescale the orientation $\omega$. Similarly, we define a rescaled Q-tensor as follows:
$$ Q'(t',x') = \frac{1}{\varepsilon^d} \, Q_{R,f} \big( \frac{t'}{\varepsilon^2}, \frac{x'}{\varepsilon} \big). $$
We easily verify that $Q'(t',x') = Q_{\varepsilon R, f'}(t',x')$. So, after removing the primes and renaming $f'$ into $f^\varepsilon$, we obtain
\begin{equation} \label{eq:rescaled_kinetic_eq_0}
\varepsilon^2 \partial_t f^\varepsilon + \varepsilon \nabla_x \cdot (\omega f^\varepsilon) = C_{\varepsilon R} (f^\varepsilon).
\end{equation}
Now, we make the key assumption that $R$ is independent of $\varepsilon$. This means that the sensing region does not change in the scaling. Note that different assumptions could be made, leading to different results.\cite{degond2013hydrodynamic} Now, we expand $Q_{\varepsilon R}$ in powers of $\varepsilon$. 

\begin{lemma} 
When $\varepsilon \to 0$, we have:
\begin{eqnarray} 
Q_{\varepsilon R,f}(t,x) &=& Q_f + \mathcal{O}(\varepsilon^2), \label{eq:expanQKf} \\
\bar \omega_{\varepsilon R ,f} &=& u_f + \mathcal{O}(\varepsilon^2), \label{eq:expanbaromKf} \\
C_{\varepsilon R} (f)  &=& \Gamma (f)  + \mathcal{O}(\varepsilon^2),  \label{eq:expanCRf}
\end{eqnarray}
where
\begin{equation} \label{eq:def_Q_f}
Q_f := \int_{\mathbb{S}^{d-1}}  \left(\omega \otimes \omega - \frac{1}{d}\mbox{Id} \right)\, f\, d\omega,  
\end{equation}
$u_f$ is one of the two normalized leading eigenvector of $Q_f$ (here too, we assume that the leading eigenvalue of $Q_f$ is simple) and
\begin{equation}
\Gamma(f) = \nabla_\omega \cdot \left[ -\nu (\omega \cdot u_f) P_{\omega^\perp}u_f\, f + D\nabla_\omega f\right] . \label{eq:defGamm}
\end{equation}
\label{lem:expan}
\end{lemma}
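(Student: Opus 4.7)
The plan is to prove the three expansions in order: first \eqref{eq:expanQKf}, which is the main computational step, then deduce \eqref{eq:expanbaromKf} by perturbation theory, and finally \eqref{eq:expanCRf} by substitution.

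For \eqref{eq:expanQKf}, I would rescale the convolution so that the kernel becomes $\varepsilon$-independent. Writing
\[
Q_{\varepsilon R, f}(t,x) = \int_{\mathbb{R}^d}\int_{\mathbb{S}^{d-1}} \frac{1}{(\varepsilon R)^d} K\!\left(\frac{|x-y|}{\varepsilon R}\right)\!\left(\omega\otimes\omega - \tfrac{1}{d}\mbox{Id}\right) f(t,y,\omega)\, d\omega\, dy,
\]
the change of variables $z = (y-x)/(\varepsilon R)$ converts this into $\int K(|z|) (\omega\otimes\omega - \frac1d\mbox{Id}) f(t,x+\varepsilon R z,\omega) \, d\omega\, dz$. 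A Taylor expansion of $f$ in its spatial argument gives a zeroth-order term equal to $Q_f$, a first-order term proportional to $\int K(|z|)\, z\, dz$, and an $O(\varepsilon^2)$ remainder. The first-order term vanishes because $K$ depends only on $|z|$, so the integrand is odd in $z$. This yields \eqref{eq:expanQKf}, assuming $f$ has sufficient spatial regularity and decay for the integrals and expansions to make sense; I would simply state this as a qualitative assumption rather than dwell on minimal hypotheses.

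For \eqref{eq:expanbaromKf}, I would invoke analytic perturbation theory for symmetric matrices. Both $Q_f$ and $Q_{\varepsilon R,f}$ are symmetric (as integrals of $\omega\otimes\omega - \frac{1}{d}\mbox{Id}$), and by assumption $Q_f$ has a simple leading eigenvalue with associated unit eigenvector $u_f$. Since \eqref{eq:expanQKf} shows that $Q_{\varepsilon R,f} - Q_f = O(\varepsilon^2)$ in operator norm, standard Rellich--Kato perturbation theory ensures that, for $\varepsilon$ small enough, $Q_{\varepsilon R,f}$ has a simple leading eigenvalue with a unique unit eigenvector (up to sign) which depends analytically on the perturbation parameter. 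Fixing the sign consistently, we obtain $\bar\omega_{\varepsilon R, f} = u_f + O(\varepsilon^2)$.

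For \eqref{eq:expanCRf}, I would substitute the expansion of $\bar\omega_{\varepsilon R, f}$ into the expression \eqref{eq:kinetic_eq} defining $C_R$. The map
\[
v \;\longmapsto\; \nabla_\omega\cdot\Bigl[ -\nu(\omega\cdot v)\, P_{\omega^\perp} v\, f + D\,\nabla_\omega f \Bigr]
\]
is a smooth (in fact quadratic) function of $v\in\mathbb{R}^d$ with values in distributions on $\mathbb{S}^{d-1}$, so replacing $v = u_f + O(\varepsilon^2)$ directly gives $C_{\varepsilon R}(f) = \Gamma(f) + O(\varepsilon^2)$.

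The only genuine obstacle is the eigenvector expansion \eqref{eq:expanbaromKf}: it relies crucially on the simplicity assumption on the leading eigenvalue of $Q_f$, which is also what guarantees that $u_f$ is well-defined up to sign and varies smoothly with $f$. The other two items are essentially Taylor expansion with a parity cancellation and direct substitution, respectively.
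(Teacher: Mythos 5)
Your proposal matches the paper's proof essentially point for point: the change of variables (you scale by $\varepsilon R$, the paper by $\varepsilon$, but both rescalings make the kernel $\varepsilon$-independent), Taylor expansion of $f$ in the spatial argument with the first-order term killed by the rotational invariance of $K$, perturbation theory for simple eigenvalues of symmetric matrices for the second claim, and direct substitution for the third. The paper states the eigenvector step as a citation to Prop.~4.3 of Ref.~[degond2018quaternions] rather than invoking Rellich--Kato by name, but the substance is identical.
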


\begin{proof} Introduce the change of variables $y=x + \varepsilon \xi$, $\xi \in {\mathbb R}^d$ into \eqref{eq:QKf} with $R$ replaced by $\varepsilon R$ and Taylor expand with respect to $\varepsilon$. Because the kernel $K(|x|)$ is rotationally invariant, the odd powers in $\varepsilon$ vanish by antisymmetry. So, the first non-zero term following the leading order term appears with the power $\varepsilon^2$, hence the formula \eqref{eq:expanQKf}. Then \eqref{eq:expanbaromKf} follows from the Taylor expansion of a simple eigenvector of a matrix with respect to its coeficients (see also Ref.~\refcite[Prop 4.3]{degond2018quaternions}), and \eqref{eq:expanCRf} is a straightforward consequence of \eqref{eq:expanbaromKf}. 
\end{proof}

Now, inserting \eqref{eq:expanCRf} into \eqref{eq:rescaled_kinetic_eq_0} and neglecting powers of $\varepsilon$ larger than $2$ (because they will have no influence on the results) leads to the following problem: 
\begin{equation} 
\label{eq:rescaled_kinetic_eq}
\varepsilon^2 \partial_t f^\varepsilon + \varepsilon \nabla_x \cdot (\omega f^\varepsilon) = \Gamma (f^\varepsilon).
\end{equation}
This paper investigates the formal limit $\varepsilon \to 0$ in this equation.

We define 
$$\kappa:= \frac{\nu}{D}. $$
For any $u\in \mathbb{S}^{d-1}$, we introduce the probability distribution on ${\mathbb{S}^{d-1}}$ defined by 
\begin{equation} \label{eq:equilibria}
M_u(\omega)= \frac{1}{Z}\exp\left( \frac{\kappa}{2} (\omega \cdot u)^2 \right), \quad Z := \int_{\mathbb{S}^{d-1}}\exp\left( \frac{\kappa}{2} (\omega \cdot u)^2\right)\, d\omega. 
\end{equation}
We note that, using the change of variables \eqref{eq:int_spheriq} defined below, we can write $Z$ as
$$ Z = \frac{1}{W_{d-2}} \int_0^\pi \exp \Big( \frac{\kappa}{2} \cos^2 \theta \Big) \, \sin^{d-2} \theta \, d \theta, $$
(with $W_{d-2}$ a constant given by \eqref{eq:jacobspher}), which shows that $Z$ is independent of $u$ and only depends on $\kappa$. A simple computation following the remark that 
$$ \nabla_\omega \big( (\omega \cdot u)^2 \big) = 2 \, (\omega \cdot u) \, P_{\omega^\perp} u, $$
(see e.g. Ref.~\refcite{degond2018quaternions}) shows that $\Gamma$ can be written as follows: 
\begin{equation}
\Gamma(f) = D\nabla_\omega \cdot \left[M_{u_f}\nabla_\omega \left( \frac{f}{M_{u_f}}\right) \right].   
\label{eq:def_gamma}
\end{equation}
We note that $\Gamma$ can be defined as an operator on functions of $\omega$ only.

\subsection{Statement of the main result}
\label{subsec:main_result}

Before stating the main result, we need to introduce some notations. For two real numbers $\mu_1$ and $\mu_2$, we define the Hilbert space ${\mathcal H}_{\mu_1,\mu_2}$ by:
\begin{equation}
\begin{split} \label{eq:def_space_H}
{\mathcal H}_{\mu_1,\mu_2}:= \Big\{h:(-1,1)\longrightarrow {\mathbb R}, \mbox {such that } \int_{-1}^1 (1-r^2)^{\mu_1} h^2(r)\,  dr <\infty \mbox{ and }\\
\int_{-1}^1 (1-r^2)^{\mu_2}  \big(h'(r) \big)^2\, dr<\infty \Big\}.
\end{split}
\end{equation}

We define the following functions whose existence and uniqueness will be proved further:

\noindent {\tiny$\blacksquare$} $h$: $[-1,1] \to {\mathbb R}$, $r \mapsto h(r)$, is the unique solution in ${\mathcal H}_{\frac{d-1}{2},\frac{d+1}{2}}$ of the problem
\begin{equation}
\begin{split} \label{eq:ode_h} -(1-r^2)^{(d-1)/2} \exp\left( \frac{\kappa r^2}{2}\right) \left( \kappa r^2 + (d-1) \right) h(r) + \frac{d}{d r} \left[  (1-r^2)^{(d+1)/2} \exp\left( \frac{\kappa r^2}{2} \right) h'(r) \right]\\
=r\, (1-r^2)^{(d-1)/2}  \exp\left( \frac{\kappa r^2}{2}\right). 
\end{split}
\end{equation}
$h$ is an odd function of $r$ and $h(r) \leq 0$ for $r \geq 0$. 

\noindent {\tiny$\blacksquare$} $a$: $[-1,1] \to {\mathbb R}$, $r \mapsto a(r)$, is the unique solution in ${\mathcal H}_{\frac{d-1}{2},\frac{d+1}{2}}$  to Eq. \eqref{eq:ode_h} with right-hand side $(1-r^2)^{(d-1)/2}  \exp( \frac{\kappa r^2}{2} )$ (note a factor $r$ has been dropped compared to the right-hand side that defines $h$). $a$ is even and $a(r) \leq 0$, for all $r \in [-1,1]$.
 
\noindent {\tiny$\blacksquare$} $b$: $[-1,1] \to {\mathbb R}$, $r \mapsto b(r)$,  is the unique solution in ${\mathcal H}_{\frac{d-1}{2},\frac{d+1}{2}}$ to Eq. \eqref{eq:ode_h} with right-hand side $r^2 \, (1-r^2)^{(d-1)/2}  \exp ( \frac{\kappa r^2}{2} )$ (note the factor $r$ appears with exponent $2$ compared to exponent $1$ at the right-hand side of the equation that defines $h$). $b$ is even and $b(r) \leq 0$, for all $r \in [-1,1]$.

\noindent {\tiny$\blacksquare$} $c$: $[-1,1] \to {\mathbb R}$, $r \mapsto c(r)$,  is the unique solution in $\dot {\mathcal H}_{0,\frac{d-1}{2}}$ to the equation
\begin{equation}
 \frac{d}{d r} \left[  (1-r^2)^{(d-1)/2} \exp\left( \frac{\kappa r^2}{2} \right) c'(r) \right]
=r\, (1-r^2)^{(d-2)/2}  \exp\left( \frac{\kappa r^2}{2}\right), 
\label{eq:def_c}
\end{equation}
where 
\begin{equation} 
\dot {\mathcal H}_{0,\frac{d-1}{2}} = \Big\{ \varphi \in {\mathcal H}_{0,\frac{d-1}{2}} \, \, | \, \, \int_{-1}^1 \varphi (r) \, d r = 0 \Big\},
\label{eq:dotH0}
\end{equation}
$c$ is odd and $c(r) \leq 0$ for $r \geq 0$. 
 
\noindent{\tiny$\blacksquare$} $e$ is the unique solution in ${\mathcal H}_{\frac{d+1}{2},\frac{d+3}{2}}$ to the equation:
\begin{eqnarray}  
-2(1-r^2)^{(d+1)/2} \exp\left( \frac{\kappa r^2}{2}\right) \left( \kappa r^2 + d \right) e(r) + \frac{d}{d r} \left[  (1-r^2)^{(d+3)/2} \exp\left( \frac{\kappa r^2}{2} \right) e'(r) \right] \nonumber \\
=r\, (1-r^2)^{(d+1)/2}  \exp\left( \frac{\kappa r^2}{2}\right),\nonumber\\
\label{eq:def_e}
\end{eqnarray}
$e$ is odd and $e(r) \leq 0$ for $r \geq 0$.  

\noindent{\tiny$\blacksquare$} $k$
is the unique solution in $\dot {\mathcal H}_{0,\frac{d-1}{2}}$ to Eq. \eqref{eq:def_c} with right-hand side $- 2 e(r) \, (1-r^2)^{(d-2)/2}  \exp( \frac{\kappa r^2}{2})$. $k$ is odd and $k(r) \leq 0$ for $r \geq 0$. 

\medskip
For two functions $f$, $g$: $[0,\pi] \to {\mathbb R}$, with $g \geq 0$ and $\int_0^\pi g(\theta) \, d \theta >0$, we denote by $\langle f \rangle_{g}$ the average of $f$ with respect to the probability density $g(\theta ) d\theta / \int_0^\pi g(\theta) \, d \theta$, i.e.
$$ \langle f \rangle_g = \frac{\int_0^\pi f(\theta) \, g(\theta) \, d \theta}{\int_0^\pi g(\theta) \, d \theta}. $$

We now state the main result:
\begin{theorem}[Formal macroscopic limit]
\label{th:macro}
Suppose that $f^\varepsilon$ converges to $f$ as $\varepsilon\to 0$. Then, it holds that
 $$ f^\varepsilon \to \rho M_u, \quad \mbox{ with } \, \, \rho=\rho(t,x) \in [0,\infty), \quad u =u(t,x) \in {\mathbb S}^{d-1},$$ 
 where $M_u$ is given in Eq. \eqref{eq:equilibria}. If the convergence is strong enough and $\rho, \ u$ are smooth enough, then they satisfy the following system:
\begin{subequations}\label{eq:macro_equations}
\begin{empheq}[left=\empheqlbrace]{align}
\partial_t \rho &+ \nabla_x \cdot \big( C_1 \, (u \cdot \nabla_x \rho) \, u + C_2 \, P_{u^\perp} \nabla_x \rho  + C_3 \, \rho \, (u\cdot\nabla_x) u \nonumber \\
&+ C_4 \, (\nabla_x \cdot u) \, \rho u \big) = 0, \label{eq:macro_equations_rho} \\
\rho \partial_t u & + E_1 \, P_{u^\bot} \nabla_x \big( (u \cdot \nabla_x) \rho \big)
\nonumber \\
& + F_1 \, \rho \, P_{u^\bot} \big[ (u \cdot \nabla_x) \big( (u \cdot \nabla_x) u \big)\big] + F_2 \, \rho \, P_{u^\bot} \big( \nabla_x \cdot (P_{u^\bot} \nabla_x u) \big) \nonumber\\
&+ F_3 \, \rho \, P_{u^\bot} \nabla_x (\nabla_x \cdot u) \nonumber \\
& + G_1 \, ( u \cdot \nabla_x \rho ) \, (u \cdot \nabla_x) u  + G_2 \, (P_{u^\bot} \nabla_x u) (P_{u^\bot} \nabla_x \rho)   \nonumber \\
& + G_3 \, \big( (P_{u^\bot} \nabla_x \rho) \cdot P_{u^\bot} \nabla_x \big) u + G_4 \, (\nabla_x \cdot u) \, P_{u^\bot} \nabla_x \rho  \nonumber \\
& + H_1  \, (u  \cdot \nabla_x \log \rho) \, (P_{u^\bot} \nabla_x \rho)
 + H_2 \,  \rho \, (P_{u^\perp}  \nabla_x u) \big( (u \cdot \nabla_x) u \big)  \nonumber \\
& + H_3 \, \rho \big[ \big( (u \cdot \nabla_x) u \big) \cdot P_{u^\bot} \nabla_x \big] u + H_4 \, \rho \, (\nabla_x \cdot u) \, (u \cdot \nabla_x) u = 0,
\label{eq:macro_equations_u} \\
|u|&=1. \label{eq:macro_equations_|u|}
\end{empheq}
\end{subequations}
The constants $C_i$, $E_i$, $F_i$, $G_i$, $H_i$ are given by (where all functions $h$, $a$, $b$, $c$, $e$ and $k$ have argument $\cos \theta$): 
\begin{eqnarray}
C_1 &=& \big \langle c \, \cos \theta \big \rangle_q \, , \label{eq:C1_bis}\\
C_2 &=& \Big \langle \frac{1}{d-1} \, a \, \sin^2 \theta  \Big \rangle_q \, , \label{eq:C2_bis} \\
C_3 &=& \Big  \langle \frac{\kappa}{d-1}  \, b \, \sin^2 \theta \Big \rangle_q \, ,\label{eq:C3_bis}\\
C_4 &=& \Big \langle \kappa  \, \cos \theta \, \Big( e \,  \frac{\sin^2 \theta}{d-1} + k \Big) \Big \rangle_q \, , \label{eq:C4_bis} \\
E_1 &=& \Big\langle \frac{1}{\kappa} \Big( a + \frac{c}{\cos \theta} \big) \Big\rangle_s, \label{eq:E1} \\
F_1 &=& \big \langle b \big \rangle_s, \label{eq:F1} \\
F_2 &=&  \Big\langle \frac{e \, \sin^2 \theta}{(d+1) \, \cos \theta} \Big\rangle_s, \label{eq:F2} \\
F_3 &=&  \Big\langle \frac{1}{\cos \theta} \Big( \frac{2}{d+1} e \, \sin^2 \theta + k \Big) \Big\rangle_s = 2 F_2 + \Big\langle \frac{k}{\cos \theta} \Big\rangle_s , \label{eq:F3} \\
G_1 &=& \Big\langle  c \, \cos \theta + b + \frac{1}{\kappa} (c' - a) \Big\rangle_s, \label{eq:G1} \\
G_2 &=& \Big\langle - 2 \frac{a}{\kappa}  + \frac{\sin^2 \theta}{d+1} \Big( \frac{e}{\cos \theta} + a  + \frac{a'}{\kappa \, \cos \theta} \Big) \Big\rangle_s = G_3 - 2 \Big\langle \frac{a}{\kappa} \Big\rangle_s, \label{eq:G2} \\
G_3 &=& \Big\langle  \frac{\sin^2 \theta}{d+1} \Big( \frac{e}{\cos \theta} + a  + \frac{a'}{\kappa \, \cos \theta} \Big) \Big\rangle_{s}, \label{eq:G3} \\
G_4 &=& \Big\langle  \frac{k}{\cos \theta} + \frac{\sin^2 \theta}{d+1} \Big( \frac{e}{\cos \theta} + a  + \frac{a'}{\kappa \, \cos \theta} \Big) \Big\rangle_s = G_3 + F_3 - 2 F_2, \label{eq:G4} \\
H_1 &=&  \Big\langle  \frac{1}{\kappa} \Big( a + \frac{c}{\cos \theta} \Big) \Big\rangle_s = E_1, \label{eq:H1} \\
H_2 &=& \Big\langle - b - e\,\cos\theta + \frac{\sin^2 \theta}{d+1} \Big(\kappa \, e \, \cos \theta +  \kappa b + \frac{b'}{\cos \theta} + e' + \frac{e}{\cos \theta} \Big)  \Big\rangle_s \nonumber \\
&=& -F_1 + F_2 + \Big\langle- e\,\cos\theta+ \frac{\sin^2 \theta}{d+1} \Big(\kappa \, e \, \cos \theta +   \kappa b + \frac{b'}{\cos \theta} + e' \Big)  \Big\rangle_s, \label{eq:H2} \\
H_3 &=& \Big\langle -  e \, \cos \theta  +  \frac{\sin^2 \theta}{d+1} \Big( \kappa \, e \, \cos \theta + \kappa b + \frac{b'}{\cos \theta} + e' \Big) \Big\rangle_s  \nonumber \\
&=& H_2 + F_1 - F_2, \label{eq:H3} \\
H_4 &=& \Big\langle \kappa \, k \, \cos \theta + k' +  \frac{\sin^2 \theta}{d+1} \Big( \kappa \, e \, \cos \theta + \kappa b + \frac{b'}{\cos \theta} + e' \Big) \Big\rangle_s \nonumber \\
&=& H_3 + \Big\langle (\kappa \, k +e) \cos \theta + k' \Big\rangle_s, \label{eq:H4}
\end{eqnarray}
where
$$
q(\theta) = \exp(\frac{\kappa}{2} \cos^2 \theta) \sin^{d-2} \theta, \quad s(\theta) = \exp(\frac{\kappa}{2} \cos^2 \theta) \, |h(\cos \theta) \cos \theta| \, \sin^d \theta, $$
for all $ \theta \in [0,\pi]$.

\end{theorem}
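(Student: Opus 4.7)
The approach is the standard Hilbert expansion for diffusive limits: write $f^\varepsilon = f^{(0)} + \varepsilon f^{(1)} + \varepsilon^2 f^{(2)} + O(\varepsilon^3)$, expand the operator $\Gamma$ (nonlinear through its dependence on the eigenvector $u_f$) as $\Gamma(f^{(0)} + \varepsilon g) = \varepsilon L g + \tfrac{1}{2}\varepsilon^2 Q(g,g) + O(\varepsilon^3)$ for the linearization $L$ and quadratic form $Q$ around $f^{(0)}$, plug into \eqref{eq:rescaled_kinetic_eq}, and match powers of $\varepsilon$. At order $\varepsilon^0$ one gets $\Gamma(f^{(0)}) = 0$; using the gradient-flow form \eqref{eq:def_gamma}, multiplying by $f^{(0)}/M_{u_{f^{(0)}}}$ and integrating by parts yields the entropy identity forcing $f^{(0)}/M_{u_{f^{(0)}}}$ to be constant, hence $f^{(0)} = \rho M_u$; the self-consistency $u = u_{f^{(0)}}$ is automatic because $M_u$ is axisymmetric around $u$, so that $u$ is an eigenvector of $Q_{\rho M_u}$ and, in the ordered phase, its leading one.

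At order $\varepsilon^1$, the equation is $L f^{(1)} = -\nabla_x \cdot (\omega f^{(0)})$, whose solvability against $1$ is automatic since $\int \omega M_u \, d\omega = 0$. To invert $L$ explicitly, I would decompose $\omega = \cos\theta\, u + \sin\theta\, v$ with $v \in \{u\}^\perp \cap \mathbb{S}^{d-1}$, and expand the right-hand side onto the irreducible tensor structures (scalar, vector in $\{u\}^\perp$, \emph{etc.}) under the subgroup of rotations fixing $u$. This reduces $L$ to a family of one-dimensional Sturm-Liouville operators in $r = \omega \cdot u$, whose forcings coincide with the right-hand sides of \eqref{eq:ode_h}, \eqref{eq:def_c}, \eqref{eq:def_e}, explaining why $a, b, c, e, k$ appear as the profiles of $f^{(1)}/M_u$, contracted against the decomposition of $\nabla_x \rho$ and $\nabla_x u$ into parallel and transverse components. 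At order $\varepsilon^2$, the hierarchy yields $L f^{(2)} = \partial_t f^{(0)} + \nabla_x\cdot(\omega f^{(1)}) - \tfrac{1}{2} Q(f^{(1)}, f^{(1)})$, whose solvability conditions \emph{are} the macroscopic system \eqref{eq:macro_equations}. The kernel of $L^\ast$ is not just $\mathbb{R}\cdot 1$, because $\Gamma$ does not preserve momentum; following Ref.~\refcite{degond2008continuum} I would show that $\ker L^\ast$ is spanned by constants together with the generalized collision invariants $\psi_A(\omega) = h(\omega \cdot u)\, A \cdot P_{u^\perp}\omega$ for $A \in \{u\}^\perp$, where $h$ is the unique solution to \eqref{eq:ode_h}. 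Testing against $1$ then produces \eqref{eq:macro_equations_rho}, while testing against $\psi_A$ and varying $A \in \{u\}^\perp$ produces \eqref{eq:macro_equations_u}.

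The remaining task is purely computational: substitute the explicit form of $f^{(1)}$ into $\nabla_x \cdot (\omega f^{(1)})$ and $Q(f^{(1)}, f^{(1)})$, pair with the test functions, and integrate out the transverse variable $v$ so that every coefficient reduces to a scalar $\theta$-average against either $q$ (for the $\rho$-equation) or $s$ (for the $u$-equation, the weight $|h(\cos\theta)\cos\theta|$ in $s$ being inherited from the GCI); repeated integration by parts against the ODEs \eqref{eq:ode_h}, \eqref{eq:def_c}, \eqref{eq:def_e} then collapses the result into the compact forms \eqref{eq:C1_bis}–\eqref{eq:H4}, together with the algebraic identities such as $F_3 = 2F_2 + \langle k/\cos\theta\rangle_s$, $E_1 = H_1$, $G_2 = G_3 - 2\langle a/\kappa\rangle_s$, $G_4 = G_3 + F_3 - 2 F_2$ and $H_3 = H_2 + F_1 - F_2$. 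I expect two obstacles. The first, conceptual, is proving that $\ker L^\ast$ is exactly $\mathbb{R}\oplus\{\psi_A : A \in \{u\}^\perp\}$: one must solve the adjoint equation in spherical coordinates and rule out any further solution in the weighted spaces $\mathcal{H}_{\mu_1,\mu_2}$ and $\dot{\mathcal{H}}_{0,(d-1)/2}$, extending the analysis of Ref.~\refcite{degond2018quaternions}. The second, combinatorial, is the bookkeeping of the many tensor monomials in \eqref{eq:macro_equations_u}: symmetry under the rotations fixing $u$, together with the constraint $|u|=1$, must be used systematically to identify the short list of independent monomials that survives and to group the angular integrals producing each coefficient, which is by far the most error-prone part of the derivation.
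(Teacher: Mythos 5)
Your overall strategy coincides with the paper's: Hilbert expansion, entropy dissipation to identify $f^{(0)} = \rho M_u$, explicit inversion of the linearized operator to get $f^{(1)}$ as a linear combination of $a, b, c, e, k$ against the parallel/transverse gradients, then solvability conditions at order $\varepsilon^2$ tested against $1$ and the GCI. However, there are a few structural facts the paper establishes that your outline leaves implicit or seems unaware of, and that materially affect the computation. First, because $\Gamma$ depends nonlinearly on $f$ through $u_f$, the linearization $L = D_{f_0}\Gamma$ is \emph{not} the frozen operator $\bar\Gamma(\cdot, u_0) = D\nabla_\omega\cdot[M_{u_0}\nabla_\omega(\cdot/M_{u_0})]$; it carries an extra term $-\kappa\nabla_\omega\cdot[f_0 \nabla_\omega((\omega\cdot u_0)(\omega\cdot u_1))]$ where $u_1$ is a linear functional of $f_1$ obtained by perturbing the leading eigenvector of $Q_f$ (Lemmas \ref{lem:equation_u_1} and \ref{lem:linearisation}). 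So "solving the adjoint equation for $L^*$ in spherical coordinates," as you propose, is not the right way in: the GCI $\vec\psi_u$ is constructed as a solution of $\bar\Gamma^*(\psi,u) = P_{u^\perp}\omega\,(\omega\cdot u)$, and one shows separately, by an $\varepsilon$-expansion of the identity $\int\Gamma(f^\varepsilon)\vec\psi_{u^\varepsilon}\,d\omega = 0$, that the extra term drops out when tested against $\vec\psi_{u_0}$ (Lemma \ref{lem:Df_against_GCI}); the converse (sufficiency of the two conditions) then comes from checking that the unique solution of $\bar\Gamma(f,u_0) = g$ in the appropriate space automatically satisfies $P_{u_0^\perp}(Q_f u_0) = 0$, hence $u_1 = 0$ (Lemma \ref{lem:solvcond_sufficient}).

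Second, your plan to "substitute $f^{(1)}$ into $Q(f^{(1)},f^{(1)})$ and pair with the test functions" presupposes that the quadratic term contributes. The paper shows (Lemma \ref{lem:second_order_terms}) that $\int D^2_{f_0}\Gamma(f_1,f_1)\,d\omega$ and $\int D^2_{f_0}\Gamma(f_1,f_1)\,\vec\psi_{u_0}\,d\omega$ both vanish, essentially by another $\varepsilon$-expansion of $\int\Gamma(f^\varepsilon)\vec\psi_{u^\varepsilon}\,d\omega = 0$ combined with parity, so this entire piece never enters the final system. Third, $f^{(1)}$ is only determined modulo the kernel of $L$, i.e.\ up to $\hat f_1 = M_{u_0}(\hat\rho_1 + (\omega\cdot u_0)(\omega\cdot\hat u_1))$; for the macroscopic system to be well defined one must show this arbitrary piece contributes nothing to the order-$\varepsilon^2$ solvability conditions (Lemma \ref{lem:terms_hatf1}), which is a parity argument after expanding $M_{u^\varepsilon}$ and $\vec\psi_{u^\varepsilon}$ around a shifted direction. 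These three facts are what make the "purely computational" step finite and unambiguous; without them, the derivation of \eqref{eq:macro_equations} would be incomplete or would appear to depend on the arbitrary choices in $f^{(1)}$.
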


\begin{remark}
The apparent singularity in the expression of some of the coefficients is only fictitious as, indeed, the probability distribution $s$ involves the same factors at the numerator and these cancel the singular factors.
\end{remark}

\subsection{Comments on System \eqref{eq:macro_equations}}
\label{subsec:comments}

System \eqref{eq:macro_equations} is a system of diffusion equations. We will leave the check of the ellipticity of the second-order differential operator for future work. However, given the sign conditions on $a$, $b$, $c$, $e$ and $k$, all $C$, $E$ and $F$ coefficients, which correspond to the second order operators involved, are positive.  
Although this is not a sufficient condition of ellipticity, this a good start, because at least, each equation on $\rho$ and $u$ separately is elliptic. Even in the case where the system is not elliptic, we might be able to fix it by incorporating additional effects, such as a different scaling of the interaction radius $R$,\cite{degond2013hydrodynamic} which may introduce stabilizing terms. Also, some instability is needed for the generation of patterns (which have been observed in simulations of the IBM.\cite{chate2008modeling,ginelli2010large}) So, a weak breakup of the ellipticity condition might just be the manifestation of the patterning capabilities of the model. We will address these points in future work. 

The model has strong structural properties. First, the normalization constraint \eqref{eq:macro_equations_|u|} is preserved for all times as soon as the initial condition satisfies it. Indeed, it is readily seen that the spatial gradient terms in \eqref{eq:macro_equations_u} are all vectors normal to $u$, so that $u$ satisfies the conservation relation $\partial_t |u|^2 = 0$. 
The system is also invariant under the change $u \to -u$. So,  if $(\rho,u)$ is a solution of the system, $(\rho,-u)$ is another one. Indeed, in \eqref{eq:macro_equations_rho}, each term involves an even number of copies of $u$, while in \eqref{eq:macro_equations_u}, each term involves an odd number of such copies. In both cases, the change $u \to -u$ leaves the equations unchanged. Thus, the orientation of $u$ is unimportant, only its direction matters. This means that we should consider $u$ as belonging to the projective space ${\mathbb P}^{d-1}$, (i.e. the quotient of the sphere ${\mathbb S}^{d-1}$ by the symmetry $u \to -u$) rather than to the sphere ${\mathbb S}^{d-1}$ itself. Since the macroscopic equations are derived in a regime where the collision operator is large, the system retains the nematic symmetry of the collision operator and ignores the disruption of this symmetry caused by the polar transport operator. 

We now comment on the structure of these equations and justify their apparent complexity. First, we note that the density equation \eqref{eq:macro_equations_rho} is in divergence (or conservative) form, i.e. it has the following structure: 
\begin{equation} 
\partial_t \rho + \nabla_x \cdot {\mathcal J} = 0, 
\label{eq:rhodiv}
\end{equation}
where ${\mathcal J}$ is the particle flux, given by the quantity inside the bracket in \eqref{eq:macro_equations_rho}. This divergence form is a consequence of the fact that the particle interactions are conservative, i.e. there is no creation or destruction of particle during an interaction. Therefore, the rate of change of the particle number in a small volume is exactly balanced by the net flux of entering particles in this volume (this flux can take negative values if there are more particles leaving that volume that entering it). This balance is what is expressed by the conservative form \eqref{eq:rhodiv} of \eqref{eq:macro_equations_rho}. On the other hand, the interactions do not conserve momentum and consequently, the equation \eqref{eq:macro_equations_u} for $u$ is in non-conservative form, and presumably cannot be put in divergence form. So, the number of terms is higher than for the $\rho$ equation (indeed, when developed, each conservative term in the $\rho$ equation would give rise to several non-conservative terms, so, the conservative form is more 'compact'). We also note that \eqref{eq:macro_equations_u} involves two kinds of terms: (i) terms which are linear in the second order derivatives (these are all terms in factor of an $E$ or $F$ coefficient) and (ii) terms that are quadratic in first order derivatives (these are all terms in factor of a $G$ or $H$ coefficient). 

Now, we comment on the structure of these terms. Due to the special role taken by self-propulsion, which occurs macroscopically in the direction of $u$, this direction is an anisotropy direction for the system. On the other hand, the system is isotropic in any direction belonging to $\{u\}^\bot$, which means that two directions belonging to $\{u\}^\bot$ should be equivalent. Therefore, we expect that the system's response to gradients in the macroscopic quantities $\rho$ and $u$ will be different for gradients along $u$ and gradients normal to $u$ but responses to gradients in directions that are normal to $u$ will be the same.  This is why all gradients have been decomposed into gradients along $u$, namely $(u \cdot \nabla_x \ldots ) \, u$ and gradients in the normal direction, namely $P_{u^\bot} \nabla_x \ldots$, where the $\ldots$ stand for any quantity that needs to be differentiated. But for second order derivatives, these terms are operated twice: these are: 
\begin{equation}
(u \cdot \nabla_x) \big( (u \cdot \nabla_x \ldots) \big), \quad  P_{u^\bot} \nabla_x \,  (u \cdot \nabla_x \ldots), \quad  P_{u^\bot} \nabla_x \,( P_{u^\bot} \nabla_x \ldots), 
\label{eq:symbols} 
\end{equation}
where these notations are purely symbolic. In each case, the exact form taken by the operator must take into account the nature of the objects to which they are applied and which they produce (scalars, vectors or tensors). Note that $(u \cdot \nabla_x) (P_{u^\bot} \nabla_x    \ldots)$ can be written as $P_{u^\bot} \nabla_x \,  (u \cdot \nabla_x \ldots)$ up to first order terms so, these two operators are not independent and we have chosen to express the cross-derivatives in terms of the latter as it makes it clear that the result is a vector normal to $u$. Indeed, the term in factor of the $E$ coefficient corresponds to \eqref{eq:symbols} applied to $\rho$ with the following correspondence 
\begin{subequations} \label{eq:decomprho}
\begin{empheq}[left=\empheqlbrace]{align}
(u \cdot \nabla_x) \big( (u \cdot \nabla_x  \ldots) \big) & \, \, \longrightarrow \, \, 
\emptyset, \label{eq:decomprho1} \\
P_{u^\bot} \nabla_x \,  (u \cdot \nabla_x  \ldots) & \, \, \longrightarrow \, \, 
E_1 \, P_{u^\bot} \nabla_x \big( (u \cdot \nabla_x) \rho \big),  \label{eq:decomprho2} \\
P_{u^\bot} \nabla_x \,( P_{u^\bot} \nabla_x  \ldots) & \, \, \longrightarrow  \, \, \emptyset. \label{eq:decomprho3}
\end{empheq}
\end{subequations}
Similarly, the terms in factor of the $F$ coefficients correspond to \eqref{eq:symbols} applied to $u$ as follows
\begin{subequations} \label{eq:decompD2u}
\begin{empheq}[left=\empheqlbrace]{align}
(u \cdot \nabla_x) \big( (u \cdot \nabla_x \ldots) \big) & \, \, \longrightarrow \, \, 
F_1 \, \rho \, P_{u^\bot} \big[ (u \cdot \nabla_x) \big( (u \cdot \nabla_x) u \big) \big], \label{eq:decompD2u1}\\
P_{u^\bot} \nabla_x \,  (u \cdot \nabla_x \ldots) & \, \, \longrightarrow \, \,  \emptyset,  \label{eq:decompD2u2}\\
P_{u^\bot} \nabla_x \,( P_{u^\bot} \nabla_x \ldots) & \, \, \longrightarrow \, \,  F_2 \, \rho \, P_{u^\bot} \big( \nabla_x \cdot (P_{u^\bot} \nabla_x u) \big)\nonumber\\
& \qquad\quad + F_3 \, \rho \, P_{u^\bot} \nabla_x (\nabla_x \cdot u). \label{eq:decompD2u3}
\end{empheq}
\end{subequations}
The last line \eqref{eq:decompD2u3} corresponds to the third term in \eqref{eq:symbols} in which a contraction or trace operation has been intercalated. Indeed, we can easily check that 
\begin{subequations} \nonumber 
\begin{empheq}[left=\empheqlbrace]{align}
P_{u^\bot} \big( \nabla_x \cdot (P_{u^\bot} \nabla_x u) \big) &= \mbox{Tr}_{[12]} \Big( (P_{u^\bot} \nabla_x) ( P_{u^\bot} \nabla_x u) \Big) \nonumber\\
&\qquad+ (((u\cdot \nabla_x)u)\cdot \nabla_x)u\nonumber \\
& \qquad  - u((P_{u^\perp}\nabla_x u):(P_{u^\perp}\nabla_x u)), \label{eq:aux_decompPDPD}\\
P_{u^\bot} \nabla_x (\nabla_x \cdot u) & = P_{u^\bot} \nabla_x \big( \mbox{Tr} (P_{u^\bot} \nabla_x u) \big). \nonumber
\end{empheq}
\end{subequations}
In the first line $(P_{u^\bot} \nabla_x) ( P_{u^\bot} \nabla_x u)$ is a tensor of order $3$ and, up to terms which involve first order derivatives only, its contraction with respect to the first two indices (hence the notation $\mbox{Tr}_{[12]}$) is equal to the left-hand side  of \eqref{eq:aux_decompPDPD}. Terms involving first order derivatives are those in factor of the $G$ and $H$ coefficients and will be discussed below. The proof of \eqref{eq:aux_decompPDPD} can be found in  \ref{sec:proof_aux_decomp}. In the second line, $(P_{u^\bot} \nabla_x u)$ is a tensor of order $2$ and we take its trace in the usual way (we will prove further that $\nabla_x~\cdot u = P_{u^\perp}~:~(\nabla_x u) = \mbox{Tr}~(P_{u^\bot}~\nabla_x~u)$, see \eqref{eq:divu}). In fact, it corresponds to contracting the third order tensor $(P_{u^\bot} \nabla_x) ( P_{u^\bot} \nabla_x u)$ with respect to the last two indices. Since this tensor is symmetric with respect to the first two indices, there is no other way to contract two of its indices. 

Now, we can explain why there are missing terms in the series \eqref{eq:decomprho} and \eqref{eq:decompD2u}. This corresponds to the fact that no operator constructed with these operators would respect the symmetries of the system. Indeed, let us analyze \eqref{eq:decomprho3} for instance. The tensor $P_{u^\bot} \nabla_x \,( P_{u^\bot} \nabla_x \rho)$ is of order 2. So it cannot be used as it is because we need a vector. The only two operations compatible with the symmetries which would give rise to a vector are presumably multiplication by $u$ (either to the right or to the left) or contraction with respect to its two indices (which would give a scalar) followed by multiplication by $u$. In the former case the result is either $0$ or a first order operator. In the second case, it leads to a vector proportional to $u$ which is not allowed since we need a vector normal to $u$ to preserve $|u|=1$. Therefore, there is no possibility to construct a genuinely second order operator from $P_{u^\bot} \nabla_x \,( P_{u^\bot} \nabla_x \rho)$ which respects the symmetries of the system. Similar considerations can be developed for the other missing lines in \eqref{eq:decomprho} and \eqref{eq:decompD2u}. To make these arguments rigorous, we need representation theory.\cite{faraut2008analysis}  This will be explored in forthcoming works. 

We now turn towards the structure of the second series of terms in \eqref{eq:macro_equations_u}, those which are quadratic in gradients of $\rho$ and $u$. Again, the gradients are decomposed along $u$ and normal to $u$, which leads to the following combination of terms: 
\begin{eqnarray*}
&& \hspace{-0.8cm} 
\big((u \cdot \nabla_x) \rho \big) \, \big((u \cdot \nabla_x) u \big) , \,   \big((u \cdot \nabla_x) \rho \big) \, (P_{u^\bot} \nabla_x u), \,  \big((u \cdot \nabla_x) u \big) \, (P_{u^\bot} \nabla_x \rho), \,   (P_{u^\bot} \nabla_x \rho) \, (P_{u^\bot} \nabla_x u), \\
&& \hspace{-0.8cm} 
\big((u \cdot \nabla_x) \rho \big)^2, \quad \big((u \cdot \nabla_x) \rho \big) \, (P_{u^\bot} \nabla_x \rho), \quad (P_{u^\bot} \nabla_x \rho)^2, \\
&& \hspace{-0.8cm} 
\big((u \cdot \nabla_x) u \big)^2, \quad ((u \cdot \nabla_x) u \big) \, (P_{u^\bot} \nabla_x u), \quad (P_{u^\bot} \nabla_x u)^2.
\end{eqnarray*}
The first line corresponds to cross-product terms of one gradient in $\rho$ and one gradient in $u$ ; the second line corresponds to quadratic terms in $\nabla_x \rho$ ; the third line to quadratic terms in $\nabla_x u$. Again, the products are taken symbolically. The exact form of the result depends on the nature of the objects involved (scalars, vectors or tensors):  
The terms in factor of the $G$ coefficients correspond to cross-product terms of one gradient in $\rho$ and one gradient in $u$ as follows: 
\begin{subequations} \nonumber 
\begin{empheq}[left=\empheqlbrace]{align*}
\big((u \cdot \nabla_x) \rho \big) \, \big((u \cdot \nabla_x) u \big) & \, \, \longrightarrow \, \, G_1 \, \big( (u \cdot \nabla_x) \rho \big) \, (u \cdot \nabla_x) u, \\
\big((u \cdot \nabla_x) \rho \big) \, (P_{u^\bot} \nabla_x u) & \, \, \longrightarrow \, \, \emptyset,   \\
\big((u \cdot \nabla_x) u \big) \, (P_{u^\bot} \nabla_x \rho) & \, \, \longrightarrow \, \,  \emptyset,   \\
(P_{u^\bot} \nabla_x \rho) \, (P_{u^\bot} \nabla_x u) & \, \, \longrightarrow \, \,   G_2 \, (P_{u^\bot} \nabla_x u) (P_{u^\bot} \nabla_x \rho) \nonumber \\
&\qquad\quad + G_3 \, \big( (P_{u^\bot} \nabla_x \rho) \cdot P_{u^\bot} \nabla_x \big) u \nonumber \\ 
& \qquad\quad + G_4 \, (\nabla_x \cdot u) \, P_{u^\bot} \nabla_x \rho.
\end{empheq}
\end{subequations}
Indeed, 
the terms in factor of $G_2$, $G_3$ and $G_4$ can be respectively written $(P_{u^\bot} \nabla_x u) (P_{u^\bot} \nabla_x \rho)$, $(P_{u^\bot} \nabla_x u)^T (P_{u^\bot} \nabla_x \rho)$ and $\mbox{Tr} (P_{u^\bot} \nabla_x u) \, P_{u^\bot} \nabla_x \rho$ and correspond to three ways to realize the symbolic operation $(P_{u^\bot} \nabla_x \rho) \, (P_{u^\bot} \nabla_x u)$ while respecting the symmetries of the system. The terms in factor of the $H$ coefficients correspond to quadratic terms in either gradients of $\rho$ or gradients of $u$ as follows: 
\begin{subequations} \nonumber 
\begin{empheq}[left=\empheqlbrace]{align*}
\big((u \cdot \nabla_x) \rho \big)^2 & \, \, \longrightarrow \, \, \emptyset, \\
\big((u \cdot \nabla_x) \rho \big) \, (P_{u^\bot} \nabla_x \rho) & \, \,  \longrightarrow \, \, H_1  \, (u  \cdot \nabla_x \log \rho) \, (P_{u^\bot} \nabla_x \rho),  \\
(P_{u^\bot} \nabla_x \rho)^2 & \, \, \longrightarrow \, \,  \emptyset,  \\
\big((u \cdot \nabla_x) u \big)^2 & \, \, \longrightarrow \, \,  \emptyset,  \\
((u \cdot \nabla_x) u \big) \, (P_{u^\bot} \nabla_x u) & \, \, \longrightarrow \, \,  H_2 \,  \rho \, (P_{u^\perp}  \nabla_x u) \big( (u \cdot \nabla_x) u \big)\nonumber \\
& \qquad\quad+ H_3 \, \rho \Big( \big( (u \cdot \nabla_x) u \big) \cdot P_{u^\bot} \nabla_x \Big) u \nonumber\\
&\qquad\quad + H_4 \, \rho \, (\nabla_x \cdot u) \, (u \cdot \nabla_x) u, \\
(P_{u^\bot} \nabla_x u)^2 & \, \, \longrightarrow \, \, \emptyset.
\end{empheq}
\end{subequations}
The terms in factor of $H_2$, $H_3$ and $H_4$ involve respectively $(P_{u^\perp}  \nabla_x u) \big( (u \cdot \nabla_x) u \big)$, $(P_{u^\perp}~\nabla_x~u)^T~\big( (u~\cdot~\nabla_x)~u \big)$, $\mbox{Tr} (P_{u^\perp}  \nabla_x u) \, \big( (u \cdot \nabla_x) u \big)$. They correspond to three ways we can multiply $(u \cdot \nabla_x) u $ and $P_{u^\perp}  \nabla_x u$, while respecting the symmetries of the system. Again, we conjecture that for the missing lines (those indicated by $\emptyset$) there is an obstruction to construct a non-trivial operator with the requirements imposed by the symmetries of the system. 

Comparatively, the structure of the $\rho$ equation \eqref{eq:macro_equations_rho} is simpler: inside the divergence, the four different gradients allowed by the symmetries of the system appear according to the following correspondence: 
\begin{subequations} \nonumber 
\begin{empheq}[left=\empheqlbrace]{align*}
(u \cdot \nabla_x) \rho & \, \, \longrightarrow \, \, C_1 \, (u \cdot \nabla_x \rho) \, u, \\
P_{u^\bot} \nabla_x \rho & \, \, \longrightarrow \, \, C_2 \, P_{u^\perp} \nabla_x \rho,  \\
(u \cdot \nabla_x) u & \, \, \longrightarrow \, \,  C_3 \, \rho \, (u\cdot\nabla_x) u, \\
P_{u^\bot} \nabla_x u & \, \, \longrightarrow \, \,  C_4 \, (\nabla_x \cdot u) \, \rho u.  
\end{empheq}
\end{subequations}
Indeed, the term in factor of $C_4$ can be written $\big( \mbox{Tr} (P_{u^\bot} \nabla_x u) \big) \, \rho u$. 

Physically, this system describes the anisotropic diffusion of a mass density $\rho$, which has different diffusivities in the direction along $u$ and normal to $u$ as the first two terms in \eqref{eq:macro_equations_rho} show. If the anisotropy direction $u$ was given and did not evolve with time, the last two terms of \eqref{eq:macro_equations_rho} would appear as convection terms for $\rho$ powered by gradients of $u$. However, the anisotropy direction $u$  is subject to a diffusion equation and these last two terms of \eqref{eq:macro_equations_rho} must be seen as cross-diffusion terms. Now, the $u$ equation is itself an anisotropic diffusion equation where the anisotropic diffusion terms in $u$ are seen in factor of the $F$ coefficients. In this equation, the cross-diffusivities, i.e. how second derivatives in $\rho$ affect $u$ are seen in factor of the $E$-coefficient. The terms in factor of $G$ and $H$ coefficients can be seen as convection terms drifting $u$ in directions depending on the various gradients in the system. 

\medskip
The rest of this article is devoted to the proof of Th. \ref{th:macro}.

\section{Proof of the main result (Th. \ref{th:macro}).}
\label{sec:main_proof}

\subsection{Preliminaries: decomposition of ${\mathbb S}^{d-1}$}
\label{sec:preliminaries_notation}

Let $u \in {\mathbb S}^{d-1}$ be given. For all $\omega \in {\mathbb S}^{d-1}$, we will use the decomposition
\begin{equation} \label{eq:space_decomposition}
 \omega= (\omega\cdot u) u + \omega_\perp, \qquad \omega_\perp := P_{u^\perp}(\omega).
\end{equation}
We note that $\omega_\perp$ depends on $u$ although not explicitly stated. The vector $u$ with respect to which the decomposition \eqref{eq:space_decomposition} is considered will be clear from the context. We will denote by $\sigma_{e,o}$ the set of functions $f=f(\omega) = f((\omega\cdot u) u + \omega_\perp)$ that are even in $(\omega\cdot u)$ and odd in $\omega_\perp$. Analogously we will define $\sigma_{o,e}$, $\sigma_{e,e}$, $\sigma_{o,o}$. Any function of $\omega$ can be decomposed uniquely into 
$$ \omega = \omega_{e,o} + \omega_{o,o} +\omega_{o,e} +\omega_{e,e}, \, \, \mbox{ with } \,  \omega_{e,o} \in \sigma_{e,o}, \, \,  \omega_{o,o} \in \sigma_{o,o} \, \, \mbox{ and so on.} $$

\medskip
Using \eqref{eq:space_decomposition}, we define the following change of variables: ${\mathbb S}^{d-1} \setminus \{ \pm u \} \to (0,\pi) \times {\mathbb S}^{d-2}$, $\omega \mapsto (\theta,z)$ such that
\begin{equation} \label{eq:change_variable_projection}
\omega \cdot u = \cos \theta, \quad \omega_\perp = \sin \theta \, z, \, \mbox{ or equivalently } \, \,  \omega = \cos \theta \, u + \sin \theta \, z ,
\end{equation}
where $\mathbb{S}^{d-2}$ is identified with $\mathbb{S}^{d-1}\cap u^\perp$. We endow unit spheres of all dimensions with their associated Lebesgue measure normalized such that the total measure of the sphere is equal to $1$. With this convention, we have 
\begin{equation}
d\omega = \frac{\sin^{d-2} \theta \, d \theta}{W_{d-2}} \, dz \, \, \mbox{ with } \, \,  W_{d-2} = \int_0^\pi \sin^{d-2} \theta \, d \theta.
\label{eq:jacobspher}
\end{equation}
We note that $W_0=\pi, \, W_1=2$ and that $W_d$ is twice the Wallis integral for integer $d$.  For any function $f=f(\omega)$, we get:
\begin{equation}
\int_{{\mathbb S}^{d-1}} f(\omega) \ d\omega = \frac{1}{W_{d-2}}\int^\pi_0 \int_{{\mathbb S}^{d-2}} f (\cos \theta \, u + \sin \theta \, z ) \ \sin^{d-2} \theta\ dz \, d\theta.
\label{eq:int_spheriq}
\end{equation}
For $d=2$, the convention is that ${\mathbb S}^{d-2}$ is just the pair of points which intersect ${\mathbb S}^1$ and the line $u^\bot$ endowed with half the counting measure. We will also use the variable $r = \cos \theta$, in which case, the change of variable formula \eqref{eq:int_spheriq} takes the form 
\begin{equation}
\int_{{\mathbb S}^{d-1}} f(\omega) \ d\omega = \frac{1}{W_{d-2}}\int_{-1}^1 \int_{{\mathbb S}^{d-2}} f (r \, u + \sqrt{1-r^2} \, z )\ (1-r^2)^{\frac{d-3}{2}} \ dz \, dr.
\label{eq:int_spheriq2}
\end{equation}

For a vector $\xi = (\xi_1, \ldots, \xi_d) \in {\mathbb R}^d$ and an integer $p \in {\mathbb N}$, we denote by $\xi^{\otimes p}$ the $p$-th tensor power of $\xi$ i.e. $\xi^{\otimes p}$ is the order-$p$ tensor defined by $(\xi^{\otimes p})_{i_1, \ldots , i_p} = \xi_{i_1} \ldots \xi_{i_p}$, \, $\forall (i_1, \ldots , i_p) \in \{1, \ldots , d\}^p$. Similarly for two order-2 tensors ${\mathcal A}  = ({\mathcal A}_{ij})_{(i,j) \in \{1, \ldots, d\}^2}$ and ${\mathcal B}  = ({\mathcal B}_{ij})_{(i,j)}$, the tensor ${\mathcal A} \otimes {\mathcal B}$ is the order-$4$ tensor defined by $({\mathcal A} \otimes {\mathcal B})_{ijk \ell} = {\mathcal A}_{ij} {\mathcal B}_{k \ell}$. Finally, if ${\mathcal T}$ is an order-$p$ tensor, $\mbox{Sym}({\mathcal T})$ is the symmetric order-$p$ tensor generated by ${\mathcal T}$ i.e. $(\mbox{Sym}({\mathcal T}))_{i_1, \ldots , i_p} = \frac{1}{p!} \sum_{\tau \in {\mathfrak S}_p} {\mathcal T}_{i_{\tau(1)}, \ldots , i_{\tau(p)}}$, with ${\mathfrak S}_p$ being the group of permutations of $p$ elements. Then, we have the following identities:  

\begin{lemma}
Let $d \geq 2$. For any function $a$: $[-1,1] \to {\mathbb R}$, $r \mapsto a(r)$, we have:  
\begin{eqnarray}
&& \hspace{-1cm}
\int_{{\mathbb S}^{d-1}} a(\omega \cdot u) \, \omega_\perp^{\otimes (2k+1)} \ d\omega =0, \quad \forall k \in {\mathbb N}, 
\label{eq:moments1} \\
&& \hspace{-1cm}
\int_{{\mathbb S}^{d-1}} a(\omega \cdot u) \, \omega_\perp \otimes \omega_\perp d\omega = \frac{1}{d-1} \int_{{\mathbb S}^{d-1}} a(\omega \cdot u) \, (1-(\omega \cdot u)^2) \, d\omega \, \,  P_{u^\perp},
\label{eq:moments2} \\
&& \hspace{-1cm}
\int_{{\mathbb S}^{d-1}} a(\omega \cdot u) \, \omega_\perp^{\otimes 4} d\omega = \frac{1}{(d-1)(d+1)} \int_{{\mathbb S}^{d-1}} a(\omega \cdot u) \, (1-(\omega \cdot u)^2)^2 \, d\omega \,  \, \Sigma,
\label{eq:moments4}
\end{eqnarray}
where $\Sigma$ is the symmetric order-$4$  tensor defined by:
$$ \Sigma = 3 \, \mbox{Sym} (P_{u^\perp} \otimes P_{u^\perp}). $$
In cartesian coordinates, $\Sigma$ is given by: 
\begin{equation}
\Sigma_{ijk\ell} = (P_{u^\perp})_{ij} (P_{u^\perp})_{k \ell} + (P_{u^\perp})_{ik} (P_{u^\perp})_{j \ell} + (P_{u^\perp})_{i \ell} (P_{u^\perp})_{jk}. 
\label{eq:defSigma}
\end{equation}
\label{lem:moments}
\end{lemma}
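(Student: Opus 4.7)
The plan is to reduce each of the three identities to a single tensor integral on the equatorial sphere $\mathbb{S}^{d-2}=\mathbb{S}^{d-1}\cap u^\perp$, and then to evaluate that integral by symmetry arguments. Using the decomposition \eqref{eq:change_variable_projection}, one has $\omega_\perp=\sin\theta\,z$ with $z\in\mathbb{S}^{d-2}$, and hence $\omega_\perp^{\otimes p}=\sin^p\theta\,z^{\otimes p}$. Combined with the Jacobian formula \eqref{eq:int_spheriq}, this gives
\begin{equation}
\int_{\mathbb{S}^{d-1}} a(\omega\cdot u)\,\omega_\perp^{\otimes p}\,d\omega \;=\; \Bigl(\frac{1}{W_{d-2}}\int_0^\pi a(\cos\theta)\,\sin^{p+d-2}\theta\,d\theta\Bigr)\,T^{(p)},
\end{equation}
where $T^{(p)}:=\int_{\mathbb{S}^{d-2}} z^{\otimes p}\,dz$. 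It remains to identify $T^{(p)}$ for $p=2k+1$, $p=2$, and $p=4$.

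For odd $p=2k+1$, I would invoke the antipodal map $z\mapsto -z$, which is an isometry of $\mathbb{S}^{d-2}$ and changes the sign of $z^{\otimes(2k+1)}$; hence $T^{(2k+1)}=0$, proving \eqref{eq:moments1}. For $p=2$, the tensor $T^{(2)}$ is a symmetric endomorphism of $\{u\}^\perp$ which commutes with the full orthogonal action of $O(u^\perp)$, so by Schur's lemma it must be a scalar multiple of $P_{u^\perp}$. Taking the trace and using $|z|=1$ together with the normalization of the measure on $\mathbb{S}^{d-2}$, I get $\mathrm{tr}(T^{(2)})=\int_{\mathbb{S}^{d-2}}|z|^2\,dz=1$ while $\mathrm{tr}(P_{u^\perp})=d-1$, so $T^{(2)}=\tfrac{1}{d-1}P_{u^\perp}$. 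Substituting back and recognizing $\sin^2\theta=1-(\omega\cdot u)^2$ yields \eqref{eq:moments2}.

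For $p=4$, the key fact is that the space of symmetric order-$4$ tensors on $\{u\}^\perp$ invariant under $O(u^\perp)$ is one-dimensional, spanned by $\mathrm{Sym}(P_{u^\perp}\otimes P_{u^\perp})=\Sigma/3$ (classical invariant theory; alternatively, a direct componentwise argument using invariance under axis sign flips and coordinate permutations inside $u^\perp$ shows that $T^{(4)}_{ijk\ell}$ vanishes unless the indices pair up and that all surviving two-pair patterns are equal). Hence $T^{(4)}=\lambda\,\Sigma$ for some scalar $\lambda$, which I would pin down by the double trace $\sum_{i,j}(\cdot)_{iijj}$. On the left,
\begin{equation}
\sum_{i,j}T^{(4)}_{iijj}=\int_{\mathbb{S}^{d-2}}|z|^4\,dz=1,
\end{equation}
while on the right, using \eqref{eq:defSigma},
\begin{equation}
\sum_{i,j}\Sigma_{iijj}=(\mathrm{tr}\,P_{u^\perp})^2+2\,\mathrm{tr}(P_{u^\perp}^2)=(d-1)^2+2(d-1)=(d-1)(d+1),
\end{equation}
giving $\lambda=1/((d-1)(d+1))$. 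Combined with $\sin^4\theta=(1-(\omega\cdot u)^2)^2$, this produces \eqref{eq:moments4}.

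The only genuinely non-trivial step is the one-dimensionality of the $O(u^\perp)$-invariant symmetric order-$4$ tensors used to pinpoint the structure of $T^{(4)}$; everything else is bookkeeping. The case $d=2$ is slightly degenerate because $\mathbb{S}^{d-2}$ reduces to a two-point set with the half-counting measure, but the antipodal symmetry argument for odd $p$ and the invariance/trace arguments for $p=2,4$ continue to apply verbatim, so the statements hold uniformly for all $d\geq 2$.
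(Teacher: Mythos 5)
Your proof is correct, and it takes a genuinely different (and tidier) route from the paper. The paper works componentwise in a fixed orthonormal frame $(e_1,\dots,e_d)$ with $e_d=u$: it shows by antisymmetry and rotational symmetry which entries of the moment tensors vanish, derives the ratio $S_{1111}=3S_{1122}$ through a relation that requires $d\geq 3$, then computes $S_{1111}$ directly by integrating in spherical coordinates (which in the paper's argument needs $d\geq 4$), and finally checks $d=2,3$ separately by inspection. You instead factor out the $\theta$-integral at the outset, reducing everything to the equatorial moment tensor $T^{(p)}=\int_{\mathbb{S}^{d-2}} z^{\otimes p}\,dz$; you dispose of odd $p$ by the antipodal map, and for $p=2,4$ you identify $T^{(p)}$ up to a scalar by $O(u^\perp)$-invariance (Schur's lemma for $p=2$; one-dimensionality of the isotropic symmetric order-$4$ tensors for $p=4$), then fix the scalar by taking traces, using only $\int_{\mathbb{S}^{d-2}}|z|^{2}\,dz=\int_{\mathbb{S}^{d-2}}|z|^{4}\,dz=1$. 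This avoids any integration-by-parts on Wallis integrals and, as you correctly note, applies uniformly down to $d=2$ (where $u^\perp$ is one-dimensional and the invariance constraint is vacuous but the tensor spaces are already one-dimensional), so no separate check of low dimensions is needed. Your trace computations $\mathrm{tr}(P_{u^\perp})=d-1$ and $\sum_{i,j}\Sigma_{iijj}=(d-1)^2+2(d-1)=(d-1)(d+1)$ are correct and immediately reproduce the coefficients in the lemma. One small caveat: the parenthetical ``alternative'' componentwise argument you sketch (sign flips and index permutations) establishes only the vanishing pattern and the equality of the mixed two-pair entries, not the ratio $T^{(4)}_{1111}/T^{(4)}_{1122}=3$; that ratio still needs the full rotational invariance, which your main invariant-theory argument supplies, so the proof is complete as written but the aside should not be read as a self-contained replacement.
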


\begin{proof}
\eqref{eq:moments1} follows from antisymmetry. To prove \eqref{eq:moments2}, let ${\mathcal A}$ denote the matrix appearing at the left-hand side of \eqref{eq:moments2}, $(e_1, \ldots, e_{d})$ be an orthonormal basis of ${\mathbb R}^d$ with $e_d = u$ and $\omega_j = \omega \cdot e_j$ the $j$-th coordinate of $\omega$ in this basis. Then, $\omega_\perp$ has coordinates $(\omega_\perp)_j$ such that $(\omega_\perp)_j = \omega_j$, $\forall j \in \{ 1, \ldots , d-1\}$ and $(\omega_\perp)_d = 0$. In this basis, 
$$ {\mathcal A}_{ij} = \int_{{\mathbb S}^{d-1}} a(\omega \cdot u) \,  (\omega_\perp)_i \, (\omega_\perp)_j \, d\omega.$$
Since $(\omega_\perp)_d=0$, we have ${\mathcal A}_{dj} = {\mathcal A}_{id} = 0$, $\forall i,\, j \in \{1, \ldots , d \}$. We also have ${\mathcal A}_{ij} = 0$, $\forall i,j \in \{1, \ldots , d \}$, $i \not = j$ by antisymmetry through the change of variables corresponding to the exchange of the basis vectors $e_i$ and $e_j$. Finally, for $i=1, \ldots , d-1$, we have ${\mathcal A}_{ii} = {\mathcal A}_{jj}$ by rotational symmetry around $u$. Thus, for $i=1, \ldots , d-1$, we have
\begin{eqnarray*}
{\mathcal A}_{ii} &=& \int_{{\mathbb S}^{d-1}} a(\omega \cdot u) \,  \frac{1}{d-1} \sum_{j=1}^{d-1} \omega_j^2  \, d\omega  = \int_{{\mathbb S}^{d-1}} a(\omega \cdot u) \,  \frac{1}{d-1} |\omega_\perp|^2  \, d\omega \\
&=&  \frac{1}{d-1}  \int_{{\mathbb S}^{d-1}} a(\omega \cdot u) \, (1-(\omega \cdot u)^2) \,   d\omega. \\
\end{eqnarray*}
Since in the basis $(e_1, \ldots, e_d)$, the matrix $P_{u^\perp}$ has entries:
\begin{eqnarray*}
(P_{u^\perp})_{ij} &=& 0, \quad \forall  i,j \in \{1, \ldots , d \}, \quad i \not = j, \\
(P_{u^\perp})_{dd} &=& 0, \\
(P_{u^\perp})_{ii} &=& 1, \quad \forall  i \in \{1, \ldots , d-1 \}, 
\end{eqnarray*}
Eq. \eqref{eq:moments2} follows. 

We now prove \eqref{eq:moments4}. We denote by $S$ the order-$4$ symmetric tensor at the left-hand side of \eqref{eq:moments4}. In the basis $(e_1, \ldots , e_d)$, we have 
$$ 
S_{ijk\ell} = \int_{{\mathbb S}^{d-1}} a(\omega \cdot u) \, (\omega_\perp)_i \, (\omega_\perp)_j \, (\omega_\perp)_k \, (\omega_\perp)_\ell \, d\omega. 
$$
Using the same arguments as for ${\mathcal A}$, we get that $S_{ijk\ell} = 0$ when $d \in \{i,j,k,\ell\}$, or when one of the values $1$, $\ldots$, $d-1$ of the four integers $i$, $j$, $k$, $\ell$ is taken an odd number of times. So, there are two cases where $S_{ijk\ell} \not = 0$: either one of the values $1$, $\ldots$, $d-1$ is taken four times, corresponding to a term of the form $S_{iiii}$ with $i \in \{1, \ldots , d-1 \}$, or two values $1$, $\ldots$, $d-1$ are taken twice each, corresponding to a term of the form $S_{iijj}$, $S_{ijij}$ or $S_{ijji}$ with $i, j \in \{1, \ldots , d-1 \}$, $i \not = j$. Furthermore by rotational symmetry,  
$$ 
S_{iiii} = S_{1111} =  \int_{{\mathbb S}^{d-1}} a(\omega \cdot u) \, \omega_1^4 \,  \, d\omega, \quad \forall i \in \{1, \ldots , d-1 \}, 
$$
and 
$$ 
S_{iijj} = S_{ijij} =S_{ijji} = S_{1122} = \int_{{\mathbb S}^{d-1}} a(\omega \cdot u) \, \omega_1^2 \, \omega_2^2  \, d\omega, \quad \forall i ,\, j \in \{1, \ldots , d-1 \}, \quad i \not = j. 
$$
If $d \geq 3$, there is a relation between $S_{1111}$ and $S_{1122}$ because, again by rotational symmetry
\begin{eqnarray}
S_{1122} &=& \int_{{\mathbb S}^{d-1}} a(\omega \cdot u) \, \omega_1^2 \, \Big( \frac{1}{d-2} \sum_{j=2}^{d-1} \omega_j^2 \Big) \, d\omega \nonumber \\
&=& \frac{1}{d-2} \int_{{\mathbb S}^{d-1}} a(\omega \cdot u) \, \omega_1^2 \, (|\omega_\perp|^2 - \omega_1^2)  \, d\omega \nonumber \\
&=& \frac{1}{d-2} \int_{{\mathbb S}^{d-1}} a(\omega \cdot u) \, \Big( \frac{1}{d-1} \sum_{j=1}^{d-1} \omega_j^2 \Big) \, |\omega_\perp|^2 \, d \omega - \frac{1}{d-2} \int_{{\mathbb S}^{d-1}} a(\omega \cdot u) \, \omega_1^4 \, d \omega \nonumber \\
&=& \frac{1}{(d-2)(d-1)} \int_{{\mathbb S}^{d-1}} a(\omega \cdot u) \, |\omega_\perp|^4 \, d \omega - \frac{1}{d-2} S_{1111} \nonumber \\
&=& \frac{1}{(d-2)(d-1)} \int_{{\mathbb S}^{d-1}} a(\omega \cdot u) \, (1 - (\omega \cdot u)^2)^2 \, d \omega - \frac{1}{d-2} S_{1111}. \label{eq:relS}
\end{eqnarray} 
Now, we compute $S_{1111}$ using the change of variables \eqref{eq:int_spheriq}. We have 
\begin{eqnarray*}
S_{1111} &=& \int_0^\pi \int_{{\mathbb S}^{d-2}} a(\cos\theta) \, (\sin \theta \, z_1)^4 \,  \frac{\sin^{d-2} \theta \, d \theta}{W_{d-2}} \, dz, 
\end{eqnarray*} 
where $z_i$ are the coordinates of $z$ in the basis $(e_1, \ldots , e_d)$ (with $z_d=0$). Using again the change of variables \eqref{eq:int_spheriq} but on ${\mathbb S}^{d-2}$ this time, using $e_1$ as the polar vector, we have, in dimension $d \geq 4$: 
\begin{eqnarray*}
\int_{{\mathbb S}^{d-2}}  z_1^4 \, dz &=& \int_0^\pi \cos^4 \theta' \, \frac{\sin^{d-3} \theta' \, d \theta'}{W_{d-3}} , 
\end{eqnarray*} 
and after two rounds of integrations by parts, we get 
\begin{eqnarray*}
\int_0^\pi \cos^4 \theta' \, \sin^{d-3} \theta' \, d \theta' = \frac{3}{d(d-2)} W_{d+1}. 
\end{eqnarray*} 
Thus 
\begin{eqnarray*}
S_{1111} &=& \frac{3}{d(d-2)} \frac{W_{d+1}}{W_{d-3}} \, \int_0^\pi a(\cos \theta) \, \sin^4 \theta \,  \frac{\sin^{d-2} \theta \,d \theta}{W_{d-2}} \, dz, 
\end{eqnarray*} 
Using the usual recursion for Wallis's integrals: $W_{d+1} = \frac{d}{d+1} W_{d-1}$, we get 
\begin{eqnarray}
S_{1111} &=& \frac{3}{(d-1)(d+1)} \, \int_0^\pi a(\cos \theta) \, (1 - \cos^2 \theta)^2 \,  \frac{\sin^{d-2} \theta \,d \theta}{W_{d-2}} \nonumber \\
&=&
\frac{3}{(d-1)(d+1)} \, \int_{{\mathbb S}^{d-1}} a(\omega \cdot u) \, (1 - (\omega \cdot u)^2)^2 \,d \omega. \label{eq:S1111}
\end{eqnarray} 
Now, using \eqref{eq:relS}, we get 
\begin{eqnarray}
S_{1122} &=&
\frac{1}{(d-1)(d+1)} \, \int_{{\mathbb S}^{d-1}} a(\omega \cdot u) \, (1 - (\omega \cdot u)^2)^2 \,d \omega, \label{eq:S1122}
\end{eqnarray}
which, with \eqref{eq:S1111}, yields $ S_{1111} = 3 S_{1122}$. Now, a careful inspection shows that $\Sigma$ has the same zero terms as $S$ and that its non-zero terms satisfy 
\begin{eqnarray*}
 &&\Sigma_{iijj} = \Sigma_{ijij}  =  \Sigma_{ijji} = 1, \quad \forall i, \, j \in \{1, \ldots , d-1 \}, \quad i \not = j, \\
&&\Sigma_{iiii} = 3, \quad \forall i \in \{1, \ldots , d-1 \}.
\end{eqnarray*}
Therefore, $S$ and $\Sigma$ are proportional and the proportionality coefficient is $S_{1122}$ given by \eqref{eq:S1122}, which yields \eqref{eq:moments4}. A straightforward inspection of the cases $d=2$ and $d=3$ shows that \eqref{eq:moments4} is still valid in these cases. 
\end{proof}

\subsection{Properties of the operator $\Gamma$}

\begin{proposition}[Properties of the operator $\Gamma$]
\label{prop:properties_Gamma}
We have the following properties:
\begin{itemize}
\item[(i)] Entropy dissipation: the following inequality holds: 
\begin{equation}
H(f):= \int_{{\mathbb S}^{d-1}} \Gamma (f) \, \frac{f}{M_{u_f}} \, d \omega = - D \int_{{\mathbb S}^{d-1}} \Big| \nabla_\omega \Big( \frac{f}{M_{u_f}} \Big) \Big|^2 \, M_{u_f } \, d \omega \leq 0. 
\label{eq:entropy}
\end{equation}
\item[(ii)] Consistency relation: $u$ is the leading eigenvector (up to a sign) of 
$$Q_{M_u}= \int_{\mathbb{S}^{d-1}} M_u(\omega) \, \left( \omega \otimes \omega - \frac{1}{d}\mbox{Id} \right) \, d\omega.$$
\item[(iii)] Equilibria: the set ${\mathcal E}$ of functions $f=f(\omega) \geq 0$ such that $\Gamma(f) = 0$ are given by  
\begin{equation}
{\mathcal E}=\{ \rho M_u\, |\, \rho \in [0,\infty),\, u\in \mathbb{S}^{d-1} \}.
\label{eq:ker_Gamma}
\end{equation}
\end{itemize}
\end{proposition}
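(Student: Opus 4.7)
The plan is to handle the three items in sequence, using the divergence form \eqref{eq:def_gamma} of $\Gamma$ throughout and exploiting the rotational symmetry of $M_u$ around $u$.

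For part (i), I would multiply the identity $\Gamma(f)=D\nabla_\omega\cdot[M_{u_f}\nabla_\omega(f/M_{u_f})]$ by $f/M_{u_f}$ and integrate over $\mathbb{S}^{d-1}$. Since the sphere has no boundary, Green's formula transfers one gradient and produces $-D\int_{\mathbb{S}^{d-1}}|\nabla_\omega(f/M_{u_f})|^2\,M_{u_f}\,d\omega$, which is finite and non-positive, giving \eqref{eq:entropy}. No circularity arises, because $u_f$ is fixed by $f$ before the differentiation.

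For part (ii), the axial symmetry of $M_u$ around $u$ forces $Q_{M_u}$ to commute with every rotation fixing $u$, and $Q_{M_u}$ is traceless by construction, so it necessarily takes the form $Q_{M_u}=\lambda\bigl(u\otimes u-\tfrac{1}{d}\mbox{Id}\bigr)$ for some scalar $\lambda$. Contracting with $u\otimes u$ yields $\lambda\,(d-1)/d=\mu-1/d$, where $\mu:=\int_{\mathbb{S}^{d-1}}(\omega\cdot u)^2\,M_u(\omega)\,d\omega$. Thus $\pm u$ is the simple leading eigenvector of $Q_{M_u}$ precisely when $\lambda>0$, i.e.\ when $\mu>1/d$. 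To establish this strict inequality, I would set $X=(\omega\cdot u)^2$ and consider $F(\kappa)=\log\int_{\mathbb{S}^{d-1}} e^{\kappa X/2}\,d\omega$: then $F'(\kappa)=\mu/2$, and $F$ is strictly convex as the log-moment-generating function of a non-constant variable, so $F'(\kappa)>F'(0)=\tfrac{1}{2d}$ for every $\kappa>0$. Equivalently, this is the strict positivity of the covariance on $\mathbb{S}^{d-1}$ (with uniform measure) between the two increasing functions $X$ and $e^{\kappa X/2}$ of the same variable $X$. A direct one-dimensional verification via \eqref{eq:int_spheriq} is also available.

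For part (iii), the forward direction exploits (i). If $\Gamma(f)=0$ and $f\ge 0$, the identity \eqref{eq:entropy} forces $\nabla_\omega(f/M_{u_f})\equiv 0$ on $\mathbb{S}^{d-1}$, so $f/M_{u_f}$ is a constant $\rho\ge 0$ and $f=\rho\,M_{u_f}$, which is of the stated form. Conversely, if $f=\rho\,M_u$ for some $u\in\mathbb{S}^{d-1}$, then $Q_f=\rho\,Q_{M_u}$ and part (ii) gives $u_f=\pm u$; as $M_u$ depends only on $(\omega\cdot u)^2$, we have $M_{u_f}=M_u$, so $f/M_{u_f}=\rho$ is constant and $\Gamma(f)=0$.

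The only substantive obstacle is the strict inequality $\mu>1/d$ in part (ii): not because it is hard, but because it is what guarantees that $u_f$ is well defined as a simple leading eigenvector, and hence underwrites both the definition of $\Gamma$ and the backward direction of (iii). The convexity/covariance argument outlined above disposes of it in one line; the rest of the proof is symmetry-and-integration-by-parts.
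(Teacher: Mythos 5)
Your proof is correct. Parts~(i) and~(iii) follow the paper's route essentially verbatim (Stokes on the sphere; constancy of $f/M_{u_f}$ from entropy dissipation; the consistency relation (ii) feeding back into the converse direction). The genuine divergence is in part~(ii), and it is worth noting the contrast. The paper computes $Q_{M_u}u$ and $Q_{M_u}\xi$ directly for $\xi\perp u$ (using the moment identity \eqref{eq:moments2}) to produce $\lambda_\parallel$ and $\lambda_\perp$ with $\lambda_\perp=-\lambda_\parallel/(d-1)$, and then proves $\lambda_\parallel>0$ by an integration by parts on $\int_0^\pi e^{\frac{\kappa}{2}\cos^2\theta}\cos^2\theta\sin^{d-2}\theta\,d\theta$ and dropping the manifestly positive term $\kappa\sin^2\theta\cos^2\theta$. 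You instead deduce the block structure of $Q_{M_u}$ from equivariance under the stabilizer of $u$ plus tracelessness, and then prove $\mu>1/d$ from strict convexity of the log-moment-generating function $F(\kappa)=\log\int e^{\kappa X/2}\,d\omega$ with $X=(\omega\cdot u)^2$, or equivalently from strict positivity of $\mathrm{Cov}(X,e^{\kappa X/2})$ under the uniform measure. Your argument for $\lambda>0$ is cleaner and more conceptual; it also generalizes immediately to any strictly increasing function of $X$ in place of the Gibbs weight. Two small caveats: for the symmetry argument in $d=2$ the stabilizer of $u$ in $SO(2)$ is trivial, so you must invoke the reflection fixing $u$ (not a rotation) to kill the $u$--$u^\perp$ cross terms --- the paper's direct computation handles this automatically via oddness of $\omega_\perp$; and in part~(iii) the case $\rho=0$ (where $Q_f=0$ and $u_f$ is not uniquely defined) is glossed over in both accounts, harmlessly since $f\equiv0$ trivially satisfies $\Gamma(f)=0$.
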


The proof of this Proposition can be found in Ref.~\refcite[Prop 4.4]{degond2018quaternions} in the case $d=4$. We summarize the proof for a generic $d$ below for the reader's convenience. 

\begin{proof} (i): \eqref{eq:entropy} follows upon multiplying \eqref{eq:def_gamma} by $f/M_{u_f}$, integrating with respect to $\omega$ and using Stokes formula. 

(ii): using \eqref{eq:space_decomposition} we have
$$
Q_{M_u} u = \int_{\mathbb{S}^{d-1}} M_u \, (\omega \cdot u) \, [(\omega \cdot u) \, u + \omega_\perp] \, d\omega - \frac{u}{d}.
$$
But the term proportional to $\omega_\perp$ in the integral vanishes by antisymmetry. So, it only remains 
\begin{equation}
Q_{M_u} u = \lambda_\parallel \, u, \qquad \lambda_\parallel := \int_{\mathbb{S}^{d-1}} M_u \, (\omega \cdot u)^2 \, d\omega - \frac{1}{d}.
\label{eq:lambdapar}
\end{equation}
Now, taking $\xi \in {\mathbb R}^d$ such that $\xi \cdot u = 0$, we have 
$$
Q_{M_u} \xi = \int_{\mathbb{S}^{d-1}} M_u \, (\omega_\perp \cdot \xi) \, [(\omega \cdot u) \, u + \omega_\perp] \, d\omega - \frac{\xi}{d}, 
$$
and now the first term in the integral vanishes by antisymmetry. Then using \eqref{eq:moments2}, we get
$$Q_{M_u} \xi = \Big( \int_{\mathbb{S}^{d-1}} M_u \, (\omega_\perp \otimes \omega_\perp) \, d\omega \Big) \, \xi- \frac{\xi}{d} = \lambda_\bot \, \xi, 
$$
with  
\begin{equation}
\lambda_\bot := \frac{1}{d-1} \int_{\mathbb{S}^{d-1}} M_u \, (1-(\omega \cdot u)^2) \, d\omega - \frac{1}{d} = \frac{1}{d-1} \Big(1 - \big(\lambda_\parallel + \frac{1}{d})\Big) - \frac{1}{d} = - \frac{\lambda_\parallel}{d-1}. 
\label{eq:lambdabot}
\end{equation}
Therefore, $\lambda_\parallel$ is a simple eigenvalue associated with eigenvector $u$ while $\lambda_\bot$ is an eigenvalue of multiplicity $d-1$ associated with any vector orthogonal to $u$. To show that $u$ is the leading eigenvalue, it suffices to show that $\lambda_\parallel >0$. Using \eqref{eq:int_spheriq} and integrating by parts, we have 
\begin{eqnarray*} 
\int_{\mathbb{S}^{d-1}} e^{\frac{\kappa}{2} (\omega \cdot u)^2} \, (\omega \cdot u)^2 \, d\omega
&=& \int_0^\pi  e^{\frac{\kappa}{2} \cos^2 \theta} \, \cos^2 \theta \, \frac{\sin^{d-2} \theta \, d \theta}{W_{d-2}} \\
&=& \frac{1}{d-1} \int_0^\pi  e^{\frac{\kappa}{2} \cos^2 \theta} \, (1- \cos^2 \theta + \kappa \sin^2 \theta \cos^2 \theta ) \, \frac{\sin^{d-2} \theta \, d \theta}{W_{d-2}} \\
&>& \frac{1}{d-1} \int_0^\pi  e^{\frac{\kappa}{2} \cos^2 \theta} \, (1- \cos^2 \theta) \, \frac{\sin^{d-2} \theta \, d \theta}{W_{d-2}}.
\end{eqnarray*}
It follows that 
$$
\int_{\mathbb{S}^{d-1}} M_u \, (\omega \cdot u)^2 \, d\omega
> \frac{1}{d-1} \, (1- \int_{\mathbb{S}^{d-1}} M_u \, (\omega \cdot u)^2 \, d\omega) , 
$$
which is equivalent to 
$$
\int_{\mathbb{S}^{d-1}} M_u \, (\omega \cdot u)^2 \, d\omega
> \frac{1}{d}, 
$$
i.e. $\lambda_\parallel >0$. 

(iii): suppose that $f \in {\mathcal E}$. Then by \eqref{eq:entropy}, it follows that $f/M_{u_f}$ is a constant, which shows that there exist $\rho >0$ and $u \in \mathbb{S}^{d-1}$ such that $f = \rho M_u$. Conversely, suppose that $f = \rho M_u$. Then it obviously satisfies 
\begin{equation}
D\nabla_\omega \cdot \left[M_u \nabla_\omega \left( \frac{f}{M_u} \right) \right] = 0. 
\label{eq:gammarhoMu}
\end{equation}
But since $u$ is the leading eigenvalue of $Q_f$, we have $u = u_f$ and, upon substituting $u_f$ for $u$ into \eqref{eq:gammarhoMu}, we get $\Gamma(f) = \Gamma(\rho M_u) = 0$, showing \eqref{eq:ker_Gamma}. \end{proof}

\subsection{The Generalised Collision Invariant}
\label{subsec:GCI}

In Ref. \refcite{degond2008continuum} a new methodology was introduced through the concept of the Generalised Collision Invariant. This method was develop to coarse-grain non-conserved quantities, like the mean orientation in the Vicsek model. In this section we introduce this concept and main properties which will be used in the sequel. This section extends Ref.~\refcite[Sect. 4.3]{degond2018quaternions} to a generic dimension $d$ (Ref.~\refcite{degond2018quaternions} was restricted to the case $d=4$.)

\medskip
Collision invariants are fundamental in the derivation of macroscopic equations. They are defined as the scalar  functions $\psi = \psi(\omega)$ such that
\begin{equation} 
\label{eq:collision_invariant}
\int_{{\mathbb S}^{d-1}} \Gamma(f) \, \psi \, d\omega=0, \qquad \forall \mbox{ functions } f. 
\end{equation}
In the present case, $\psi=$constant clearly satisfies this relation. This is a consequence of Stokes' formula (in mathematical terms) or of the conservation of mass during the interactions between agents (in physical terms). It can be shown that there are no other collision invariants. This implies, particularly, that the dimension of the space of collision invariants is smaller than the dimension of the set of equilibria ${\mathcal E}$ (from \eqref{eq:ker_Gamma}, it follows that ${\mathcal E}$ is a nonlinear manifold of dimension $d$). Classical methods require the dimension of the space of collision invariants (they obviously form a vector space) to be the same as the dimension of the manifold of equilibria in order to enable the derivation of a closed system of macroscopic equations. The collision invariants corresponding to the constants will allow us to derive the equation for the spatial density $\rho=\int f d\omega$, but it will not be enough to determine the equation for the mean direction $u$. To sort out this problem, the concept of Generalised Collision Invariant (GCI) has been introduced in Ref.~\refcite{degond2008continuum}. 

To define the GCI, we first need to define a new operator $\bar \Gamma$ as follows:  

\begin{definition}
Let $u \in {\mathbb S}^{d-1}$ be given. The operator $\bar \Gamma (f,u)$ is defined by 
\begin{equation} \label{eq:def_bar_gamma}
\bar \Gamma(f,u) := D\,\nabla_\omega \cdot \left[ M_{u} \nabla_\omega\left(\frac{f}{M_{u}} \right)\right].
\end{equation}
\label{def:barGam}
\end{definition}

With this definition, we have
\begin{equation}
\Gamma(f) = \bar \Gamma(f, u_f).
\label{eq:relgambargam}
\end{equation}
Note that $\bar \Gamma(f, u_f)$ is not the linearization of $\Gamma$. It is rather the action of $\Gamma$ when one 'freezes' the parameter $u_f$ to the value $u$. Below, we will elaborate more on the relation between $\bar \Gamma$ and the linearization of $\Gamma$. Now, we can define the GCI: 

\begin{definition}
Let $u \in {\mathbb S}^{d-1}$ be given. A function $\psi$: ${\mathbb S}^{d-1} \to {\mathbb R}$ is called a `Generalised Collision Invariant (GCI)' associated to $u$ if and only if
\begin{equation}
\int_{{\mathbb S}^{d-1}}\bar \Gamma(f, u) \, \psi \, d\omega = 0,\quad \mbox{for all } f \mbox{ such that } P_{u^\perp} (Q_f \, u)=0.
\label{eq:GCI_def}
\end{equation}
\label{def:GCI}
\end{definition}

The condition on $f$ in \eqref{eq:GCI_def} means that $u$ is an eigenvector of $Q_f$. Since $u_f$ is the leading eigenvector of $Q_f$, we have $P_{u_f^\perp} (Q_f \, u_f)=0$ and consequently if $\psi$ is a GCI associated with $u_f$, we have 
\begin{equation}
\int_{{\mathbb S}^{d-1}} \Gamma(f) \, \psi \, d\omega =\int_{{\mathbb S}^{d-1}}\bar \Gamma(f, u_f) \,  \psi \, d\omega = 0. 
\label{eq:GCICI}
\end{equation}
Therefore, $\psi$ is 'like' a collision invariant except that it depends on $f$ through its dependence on~$u_f$. In the next proposition, we characterize the GCI. First, we introduce the formal $L^2$ adjoint of $\bar \Gamma (\cdot, u)$. For $\psi = \psi(\omega)$,  $\bar \Gamma (\psi, u)$ is defined as follows:  
\begin{equation}
\bar \Gamma^* (\psi,u) := \frac{1}{M_u} \nabla_\omega \cdot [ M_u \nabla_\omega \psi].
\label{eq:bargGamstar}
\end{equation}
We will also denote by $\{u\}^\bot$ the orthogonal space to $u$ in ${\mathbb R}^d$.

\begin{proposition}[Generalised Collision Invariant] 
\label{lem:GCI}
(i) Given $u \in {\mathbb S}^{d-1}$, introduce the function $\vec\psi_u$: ${\mathbb S}^{d-1} \ni \omega \mapsto \vec\psi_u(\omega) \in {\mathbb R}^d$, defined as the unique (componentwise) solution of
\begin{equation} 
\bar \Gamma^* (\vec\psi_u,u) (\omega) = P_{u^\perp}\omega \ (\omega\cdot u),
\label{eq:vecGCI}
\end{equation}
in the Hilbert space 
\begin{equation}
H^1_0 (\mathbb{S}^{d-1}) = \Big\{ \varphi\in H^1(\mathbb{S}^{d-1}) \mbox{ such that } \int_{\mathbb{S}^{d-1}}\varphi(\omega)\, d\omega=0 \Big\}. 
\label{eq:def_H10}
\end{equation}
$\vec\psi_u$ is called the vector GCI. The set ${\mathcal G}_u$ of GCIs associated to $u$ is given by 
\begin{equation}
{\mathcal G}_u = \big\{ B \cdot \vec\psi_{u}+C \, \, | \, \, B \in \{u\}^\bot, \, \, C \in {\mathbb R} \big\}. 
\label{eq:setofGCI}
\end{equation}
$\vec\psi_u$ is odd in both $(\omega \cdot u)$ and $\omega_\perp$, so, $\vec\psi_u \in \sigma_{o,o}$ in the sense of Section \ref{sec:preliminaries_notation}. \\
(ii) The vector GCI $\vec\psi_{u}$ is written:
\begin{equation}
\label{eq:GCI_explicit}
\vec{\psi}_{u}(\omega)=P_{u^\perp}\omega\  h(\omega\cdot u),
\end{equation}
where the function $h$ is the unique solution in ${\mathcal H}_{\frac{d-1}{2},\frac{d+1}{2}}$ of the equation \eqref{eq:ode_h} (with ${\mathcal H}_{\mu_1,\mu_2}$ defined at \eqref{eq:def_space_H}, see Section \ref{subsec:main_result}). We recall that $h$ is an odd function of $r$ and $h(r) \leq 0$ for $r \geq 0$. \\
(iii) For a given function $f: {\mathbb S}^{d-1} \to {\mathbb R}$, we consider 
\begin{equation} \label{eq:GCI}
\vec\psi_{u_f}(\omega)= P_{u_f^\perp} \omega\ h(\omega \cdot u_f ),
\end{equation}
then $\vec \psi_{u_f}$ satisfies
\begin{equation}
\label{eq:keyCGI}
\int_{{\mathbb S}^{d-1}} \Gamma(f) (\omega) \,  \vec \psi_{u_f} (\omega) \, d\omega =0.
\end{equation}
\end{proposition}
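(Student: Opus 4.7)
The plan has three parts: characterizing the set of GCIs via duality and Lax--Milgram (i), reducing the resulting vector PDE \eqref{eq:vecGCI} to the scalar ODE \eqref{eq:ode_h} via a separated ansatz (ii), and extracting the identity \eqref{eq:keyCGI} from the previous items (iii).

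For (i), I would first recast the defining identity \eqref{eq:GCI_def} as a dual condition. Two integrations by parts on ${\mathbb S}^{d-1}$ give $\int \bar\Gamma(f,u)\,\psi\,d\omega = D \int f\, \bar\Gamma^*(\psi,u)\,d\omega$, and since $P_{u^\perp}(Q_f u) = \int f(\omega)(\omega\cdot u)\, P_{u^\perp}\omega\,d\omega$, the admissible $f$ are exactly the $L^2$-orthogonal complement of the $(d-1)$-dimensional space spanned by the non-trivial components of $(\omega\cdot u)\, P_{u^\perp}\omega$. By Fredholm duality, $\psi$ is a GCI iff $\bar\Gamma^*(\psi,u) = B \cdot (\omega\cdot u)\, P_{u^\perp}\omega$ for some $B \in \{u\}^\perp$. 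Existence and uniqueness of $\vec\psi_u$ in $H^1_0({\mathbb S}^{d-1})$ then follow from Lax--Milgram applied to the weighted Dirichlet form $(\varphi,\psi) \mapsto \int M_u\, \nabla_\omega\varphi \cdot \nabla_\omega\psi\, d\omega$, coercive on $H^1_0$ by the spherical Poincaré inequality and boundedness above and below of $M_u$; the compatibility condition $\int M_u\,(\omega\cdot u)\, P_{u^\perp}\omega\, d\omega = 0$ holds by the sign change $(\omega\cdot u) \to -(\omega\cdot u)$. Since $\mathrm{Ker}\,\bar\Gamma^*(\cdot,u) = \mathbb{R}$, the general GCI has the form $B \cdot \vec\psi_u + C$, which is \eqref{eq:setofGCI}. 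Finally, $M_u \in \sigma_{e,e}$, so $\bar\Gamma^*(\cdot,u)$ preserves the four parity classes introduced in Section \ref{sec:preliminaries_notation}; since $(\omega \cdot u)\, P_{u^\perp}\omega \in \sigma_{o,o}$, uniqueness forces $\vec\psi_u \in \sigma_{o,o}$.

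For (ii), I would plug the ansatz $\vec\psi_u(\omega) = h(\omega\cdot u)\, P_{u^\perp}\omega$ into \eqref{eq:vecGCI} and work in the coordinates \eqref{eq:change_variable_projection}, so $\vec\psi_u = \sin\theta\, z\, h(\cos\theta)$ and the right-hand side equals $\cos\theta\, \sin\theta\, z$. Because $z \in {\mathbb S}^{d-2}$ is a first spherical harmonic, both sides factor as $z$ times a function of $\theta$; the vector PDE thus collapses to a single scalar equation in $r = \cos\theta$, and a direct computation of $\bar\Gamma^*$ using $M_u = Z^{-1}\, e^{\kappa r^2/2}$, followed by multiplication by the Jacobian factor $(1-r^2)^{(d-1)/2}\, e^{\kappa r^2/2}$, produces exactly \eqref{eq:ode_h}. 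Existence and uniqueness of $h$ in ${\mathcal H}_{\frac{d-1}{2},\frac{d+1}{2}}$ follow again by Lax--Milgram on the corresponding one-dimensional weighted Dirichlet form, which is the restriction of the spherical form to the ansatz subspace. Oddness of $h$ is a direct consequence of the symmetry $r \to -r$: the coefficients are even and the right-hand side is odd. For the sign, I rewrite the ODE as $(Bh')' - Ah = f$ with $A, B \geq 0$ and $f(r) = r\,(1-r^2)^{(d-1)/2}\, e^{\kappa r^2/2} \geq 0$ on $[0,1]$; this is the Euler--Lagrange equation of the strictly convex functional $J(h) = \frac{1}{2}\int B(h')^2 + \frac{1}{2}\int A h^2 + \int f h$, and restricted to odd functions (equivalently, to functions on $[0,1]$ with $h(0)=0$), replacing $h$ by $\min(h,0)$ does not increase $J$, so by uniqueness of the minimizer $h \leq 0$ on $[0,1]$.

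For (iii), since $u_f$ is a leading eigenvector of $Q_f$, we have $P_{u_f^\perp}(Q_f u_f) = 0$, so $f$ is admissible in \eqref{eq:GCI_def} at $u = u_f$; applying \eqref{eq:GCI_def} componentwise to $\vec\psi_{u_f}$ (each of its scalar components being a GCI associated with $u_f$ by \eqref{eq:setofGCI}) and combining with the identity $\Gamma(f) = \bar\Gamma(f,u_f)$ from \eqref{eq:relgambargam} yields \eqref{eq:keyCGI}. The main obstacle in this plan is the reduction step of part (ii): although not conceptually deep, computing $\bar\Gamma^*(\vec\psi_u,u)$ in the $(\theta,z)$ coordinates requires careful bookkeeping of derivatives of $\sin\theta\, z$ in both the radial and tangential directions together with the drift $M_u^{-1}\nabla_\omega M_u = \kappa\,(\omega\cdot u)\, P_{\omega^\perp} u$. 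A convenient route is to test \eqref{eq:vecGCI} against $B \cdot P_{u^\perp}\omega$ for arbitrary $B \in \{u\}^\perp$, which integrates out the $z$-dependence and yields a clean weak form of the one-dimensional problem; checking that the weighted space ${\mathcal H}_{\frac{d-1}{2},\frac{d+1}{2}}$ is the right one, given the degenerate behavior of the weights at $r = \pm 1$, is the other delicate point.
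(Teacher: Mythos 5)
Your proposal follows essentially the same architecture as the paper's proof: duality to identify the GCI set, Lax--Milgram on the $M_u$-weighted Dirichlet form, the separated ansatz $h(\omega\cdot u)\,P_{u^\perp}\omega$, reduction to the ODE \eqref{eq:ode_h}, and part (iii) by restriction of the GCI property at $u=u_f$. Two of your ingredients differ from the paper in a useful way. First, you deduce $\vec\psi_u\in\sigma_{o,o}$ abstractly, from the fact that $\bar\Gamma^*(\cdot,u)$ commutes with the reflection that flips $(\omega\cdot u)$ (and separately with $\omega_\perp\mapsto-\omega_\perp$) because $M_u$ is invariant, combined with uniqueness; the paper reads the parity off the explicit formula once $h$ is shown odd. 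Your argument is cleaner and independent of the explicit form. Second, for the sign of $h$ you use a variational truncation $h\mapsto\min(h,0)$ on $[0,1]$ (with $h(0)=0$ coming from oddness, and the natural condition at $r=1$ from the degenerate weight), instead of the maximum principle used in the paper; this is a legitimate alternative, though it requires knowing that truncation is stable in the degenerate-weight space ${\mathcal H}_{\frac{d-1}{2},\frac{d+1}{2}}$ (Stampacchia-type lemma) and a word about why $\min(h,0)$ stays in the constrained class.

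There is one genuine gap. You assert that the one-dimensional Dirichlet form on ${\mathcal H}_{\frac{d-1}{2},\frac{d+1}{2}}$ is ``the restriction of the spherical form to the ansatz subspace'', which if literally true would immediately give $\vec\psi_u\in H^1_0({\mathbb S}^{d-1})$ once $h\in{\mathcal H}_{\frac{d-1}{2},\frac{d+1}{2}}$. That is not quite the case. When you plug $\psi = (B\cdot\omega)\,h(\omega\cdot u)$ into $\int M_u|\nabla_\omega\psi|^2\,d\omega$, the term $|P_{\omega^\perp}B|^2\,h^2 = (|B|^2-(B\cdot\omega)^2)\,h^2$ produces, after the change of variables, the integral $\int r^2 h(r)^2 (1-r^2)^{(d-3)/2}\,dr$, whose weight $(1-r^2)^{(d-3)/2}$ is strictly worse near $r=\pm1$ than either weight appearing in ${\mathcal H}_{\frac{d-1}{2},\frac{d+1}{2}}$. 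The paper closes this gap with a nontrivial estimate: integrate by parts the factor $r$ against $(1-r^2)^{(d-3)/2}$ to upgrade the weight to $(1-r^2)^{(d-1)/2}$, then control the resulting cross term $\int r\,h\,h'\,(1-r^2)^{(d-1)/2}\,dr$ by Cauchy--Schwarz and Young's inequality with a parameter $\eta>1$, absorbing the leftover back into the original integral. You correctly flag this as ``the other delicate point'', but the proposal as written does not supply this estimate, and the loose identification of the two variational forms would obscure the need for it. This is the one substantive step you would have to add.
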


\begin{proof} This statement is the generalization to an arbitrary dimension $d$ of Ref.~\refcite[Sect. 4.3]{degond2018quaternions}. We summarize it here for the sake of completeness. 

We first show that $\psi$ is a GCI associated with $u$ if and only if there exists $B \in \{u\}^\bot$ such that  
\begin{equation}
\bar \Gamma^*(\psi,u) (\omega) = (B \cdot \omega) \, (\omega \cdot u), \, \, \forall \omega \in {\mathbb S}^{d-1}. 
\label{eq:GCIchar1} 
\end{equation}
Indeed, using the formal adjoint $\bar \Gamma^*$ of $\bar \Gamma$ given by \eqref{eq:bargGamstar}, and Eq. \eqref{eq:def_Q_f} to develop the condition $P_{u^\perp} (Q_f \, u)=0$, the definition \eqref{eq:GCI_def} for $\psi$  can be written:
\begin{eqnarray*}
 \int_{{\mathbb S}^{d-1}} f \, \bar \Gamma^*(\psi,u) \, d \omega = 0, && \mbox{ for all } \, \, f \, \, \mbox{ such that } \\
&& \int_{{\mathbb S}^{d-1}} f \, (B \cdot \omega) \, (\omega \cdot u) \, d \omega = 0, \, \, \forall B \in \{u\}^\bot. 
\end{eqnarray*}
This leads to 
$$ 
\bar \Gamma^*(\psi,u) \in \{ (B \cdot \omega) \, (\omega \cdot u) \, \, | \, \, B \in \{u\}^\bot \}, 
$$
because this set being finite dimensional, it is closed and so, equal to its bi-orthogonal. This proves the claim.  

We now determine the solutions of \eqref{eq:GCIchar1}. We interpret this equation in the weak sense through the classical variational formulation: find $\psi \in H^1 (\mathbb{S}^{d-1})$ such that 
\begin{equation}
\int_{{\mathbb S}^{d-1}} M_u \, \nabla_\omega \psi \cdot \nabla_\omega \phi \, d\omega = - \int_{{\mathbb S}^{d-1}} M_u \, (B \cdot \omega) \, (\omega \cdot u) \, \phi \, d \omega, \, \, \forall \phi \in H^1 (\mathbb{S}^{d-1}). 
\label{eq:varfor}
\end{equation}
We first show that we can restrict the set of test functions $\phi$ to $H^1_0 (\mathbb{S}^{d-1})$. Indeed, suppose $\psi~\in~H^1 (\mathbb{S}^{d-1})$ is a solution of \eqref{eq:varfor} for all test functions $\phi \in H^1_0 (\mathbb{S}^{d-1})$. Now, take $\phi \in H^1 (\mathbb{S}^{d-1})$ and construct $\tilde \phi = \phi - \int_{{\mathbb S}^{d-1}} \phi \, d \omega$. Then $\tilde \phi \in H^1_0 (\mathbb{S}^{d-1})$ and we can use it as a test function. But since $\int_{{\mathbb S}^{d-1}} \phi \, d \omega$ is a constant and $ \int_{{\mathbb S}^{d-1}} M_u \, (B \cdot \omega) \, (u \cdot \omega) \, d \omega = 0$ 
by antisymmetry, the contribution of $\int_{{\mathbb S}^{d-1}} \phi \, d \omega$ vanishes in both sides of \eqref{eq:varfor} and we obtain that \eqref{eq:varfor} is also valid when tested against $\phi$. We now look for $\psi$ in $H^1_0 (\mathbb{S}^{d-1})$ such that \eqref{eq:varfor} holds for all $\phi \in H^1_0 (\mathbb{S}^{d-1})$. By the Poincar\'e Wirtinger inequality, the bilinear form at the left-hand side of \eqref{eq:varfor} is coercive on $H^1_0 (\mathbb{S}^{d-1})$. So, Lax-Milgram's theorem applies and shows that there exists a unique solution to this variational problem in $H^1_0 ({\mathbb S}^{d-1})$. Now, if we have two solutions $\psi_1$ and $\psi_2$ in $H^1 ({\mathbb S}^{d-1})$ of \eqref{eq:varfor}, the difference $\psi_1 - \psi_2$ satisfies \eqref{eq:varfor} with right-hand side equal to $0$. Using $\psi_1 - \psi_2$ as a test function, we deduce that $\int_{{\mathbb S}^{d-1}} M_u \, |\nabla_\omega (\psi_1-\psi_2)|^2 \, d\omega = 0$, which implies that $\psi_1-\psi_2$ is a constant. It follows that any solution of \eqref{eq:varfor} is equal to the unique solution of \eqref{eq:varfor} in $H^1_0 (\mathbb{S}^{d-1})$ up to an additive constant. Now, denote by $\psi_B$ the unique solution of \eqref{eq:varfor} in $H^1_0 ({\mathbb S}^{d-1})$. For a fixed $\omega$ the map $\{u\}^\bot \to {\mathbb R}$, $B \mapsto \psi_B(\omega)$ is a continuous linear form. So, by Riesz's theorem, there exists a vector in $\{u\}^\bot$ denoted by $\vec \psi_u(\omega)$ such that $\psi_B(\omega) = \vec \psi_u(\omega) \cdot B$. From what precedes, it follows that $\vec \psi_u(\omega)$ is the unique componentwise solution of \eqref{eq:vecGCI} in $H^1_0 ({\mathbb S}^{d-1})$. Finally, any GCI assocated to $u$ is of the form $\psi_B + C$ with $B \in \{u\}^\bot$ and $C$ in ${\mathbb R}$, which leads to \eqref{eq:setofGCI}.  

Now, for any $B \in \{u\}^\bot$, we show that 
\begin{equation} 
\psi_B(\omega) = (B \cdot \omega) \, h(\omega \cdot u), 
\label{eq:AnsatzpsiB}
\end{equation} 
with $h$ the unique solution in ${\mathcal H}_{\frac{d-1}{2},\frac{d+1}{2}}$ of \eqref{eq:ode_h}, which will prove \eqref{eq:GCI_explicit}. We note that \eqref{eq:GCIchar1} can be written
\begin{eqnarray}
\bar \Gamma^*(\psi,u):= \frac{\kappa}{2} \nabla_\omega \big( (\omega \cdot u)^2 \big) \cdot \nabla_\omega \psi + \Delta_\omega \psi = (B \cdot \omega) \, (\omega \cdot u). 
\label{eq:pbmpsiB}
\end{eqnarray}
To insert Ansatz \eqref{eq:AnsatzpsiB} into \eqref{eq:pbmpsiB}, we note the following identities\cite{degond2018quaternions}:
\begin{eqnarray*}
&& \nabla_\omega (\omega \cdot u) \cdot \nabla_\omega (\omega \cdot B) = - (\omega \cdot u) \, (\omega \cdot B), \\
&& |\nabla_\omega (\omega \cdot u)|^2 = 1 - (\omega \cdot u)^2, \\
&& \Delta_\omega (\omega \cdot u) = - (d-1) (\omega \cdot u), \qquad \Delta_\omega (\omega \cdot B) = - (d-1) (\omega \cdot B). 
\end{eqnarray*}
The last equalities come from the fact that $(\omega \cdot u)$ is a spherical harmonics of degree 1.\cite{faraut2008analysis} After some tedious but straightforward computations, we end up with 
\begin{eqnarray*}
\bar \Gamma^*\big((B \cdot \omega) \, h(\omega \cdot u),u\big) &=& (B \cdot \omega) \Big\{ h''(\omega \cdot u) \,  \big(1 - (\omega \cdot u)^2\big) \\
&&+ h'(\omega \cdot u) \, (\omega \cdot u) \, \big[ \kappa  \,  \big(1 - (\omega \cdot u)^2\big) - (d+1) \big] \\  
&&+ h(\omega \cdot u) \big[ - \kappa \, (\omega \cdot u)^2 - (d-1) \big] \Big\} \\
&=& (\omega \cdot B) \, (\omega \cdot u). 
\end{eqnarray*}
Therefore, $(\omega \cdot B)$ can be simplified and introducing $r = (\omega \cdot u) \in [-1,1]$, we obtain the equation for $h$: 
\begin{eqnarray*}
(1-r^2) h'' + \big( \kappa \, (1-r^2) - (d+1) \big) \, r h' - \big( \kappa \, r^2 + (d-1) \big) \, h = r. 
\end{eqnarray*}
A straightforward integration factor technique leads to \eqref{eq:ode_h}. 

Eq. \eqref{eq:ode_h} has variational formulation given by: find $h \in {\mathcal H}_{\frac{d-1}{2},\frac{d+1}{2}}$ such that 
\begin{eqnarray}
\int_{-1}^1 (1-r^2)^{\frac{d+1}{2}} \, e^{\frac{\kappa r^2}{2}} \, h'(r) \, \ell'(r) \, dr &+& 
\int_{-1}^1 (1-r^2)^{\frac{d-1}{2}} \, (\kappa r^2+(d-1))\,  e^{\frac{\kappa r^2}{2}} \, h(r) \, \ell(r) \, dr \nonumber \\
&=& - \int_{-1}^1 r \, (1-r^2)^{\frac{d-1}{2}} \, e^{\frac{\kappa r^2}{2}} \, \ell(r) \, dr, \quad \forall \ell \in {\mathcal H}_{\frac{d-1}{2},\frac{d+1}{2}}, \nonumber\\
\label{eq:varforh}
\end{eqnarray}
and since the functions $\exp (\kappa \, r^2/2)$ and $(\kappa r^2 + (d-1)) \, \exp (\kappa \, r^2/2)$ are bounded from above and below, the bilinear form at the left hand side of \eqref{eq:varforh} is coercive on ${\mathcal H}_{\frac{d-1}{2},\frac{d+1}{2}}$. Therefore, Lax-Milgram's theorem applies and gives a unique solution $h \in {\mathcal H}_{\frac{d-1}{2},\frac{d+1}{2}}$ to \eqref{eq:varforh}. Furthermore, since the operator at the left-hand side of \eqref{eq:ode_h} is invariant by the change $r \to -r$ and the right-hand side of \eqref{eq:ode_h} is an odd function, by the uniqueness of the solution, it follows that $h$ is odd. Finally, since the right-hand side of \eqref{eq:ode_h} is nonnegative on $[0,1]$ and thanks to the maximum principle applied on $[0,1]$, $h$ itself is nonpositive on $[0,1]$. 

It remains to show that, with $h$ in ${\mathcal H}_{\frac{d-1}{2},\frac{d+1}{2}}$, $\tilde \psi_B$ given by \eqref{eq:AnsatzpsiB} belongs to $H^1_0 ({\mathbb S}^{d-1})$. Indeed, by the uniqueness of the solution of \eqref{eq:GCIchar1} in $H^1_0 ({\mathbb S}^{d-1})$, it will follow that $\tilde \psi_B = \psi_B$, hence finishing to show the validity of \eqref{eq:AnsatzpsiB}. Using \eqref{eq:int_spheriq2}, we have 
\begin{eqnarray*}
\int_{{\mathbb S}^{d-1}} |\tilde \psi_B(\omega)|^2 \, d \omega &=& \int_{{\mathbb S}^{d-1}} |
\omega \cdot B|^2 \, |h(\omega \cdot u)|^2 \, d \omega \\
&=& \frac{|B|^2}{W_{d-2}} \int_{-1}^1 (1-r^2) \, |h(r)|^2 \, (1-r^2)^{\frac{d-3}{2}} \, dr\\
&=& \frac{|B|^2}{W_{d-2}} \int_{-1}^1 \, |h(r)|^2 \, (1-r^2)^{\frac{d-1}{2}} \, dr <\infty, 
\end{eqnarray*}
because $h \in {\mathcal H}_{\frac{d-1}{2},\frac{d+1}{2}}$. 
Then, 
$$ \nabla_\omega \tilde \psi_B(\omega) = P_{\omega^\bot} B \, h(\omega \cdot u) + (B \cdot \omega) \, h'(\omega \cdot u) \, P_{\omega^\bot}u := \Xi_1 + \Xi_2. $$
We have: 
\begin{eqnarray*}
\int_{{\mathbb S}^{d-1}} |\Xi_2|^2 \, d \omega &=& \int_{{\mathbb S}^{d-1}} |
\omega \cdot B|^2 \, |P_{\omega^\bot}u|^2 \, |h'(\omega \cdot u)|^2 \, d \omega \\
&=& \frac{|B|^2}{W_{d-2}} \int_{-1}^1 (1-r^2) \, (1-r^2) \,  |h'(r)|^2 \, (1-r^2)^{\frac{d-3}{2}} \, dr\\
&=& \frac{|B|^2}{W_{d-2}} \int_{-1}^1 \, |h'(r)|^2 \, (1-r^2)^{\frac{d+1}{2}} \, dr <\infty, 
\end{eqnarray*}
again because $h \in {\mathcal H}_{\frac{d-1}{2},\frac{d+1}{2}}$. Now, 
\begin{eqnarray*}
\int_{{\mathbb S}^{d-1}} |\Xi_1|^2 \, d \omega &=& \int_{{\mathbb S}^{d-1}} \, |P_{\omega^\bot}B|^2 \, |h(\omega \cdot u)|^2 \, d \omega \\
&=& \frac{|B|^2}{W_{d-2}} \int_{-1}^1 r^2 \,  |h(r)|^2 \, (1-r^2)^{\frac{d-3}{2}} \, dr.  
\end{eqnarray*}
Integrating by parts, we compute: 
\begin{eqnarray*}
J &=& \int_{-1}^1 r^2 \,  |h(r)|^2 \, (1-r^2)^{\frac{d-3}{2}} \, dr \\
&=& \big[ - \frac{r \, h^2(r) \,  (1-r^2)^{\frac{d-1}{2}}}{d-1} \Big]_{-1}^1 + \frac{1}{d-1} \int_{-1}^1  \big(rh(r)^2\big)'  \, (1-r^2)^{\frac{d-1}{2}} \, dr \\
&\leq & \frac{1}{d-1} \int_{-1}^1  \big(rh(r)^2\big)'  \, (1-r^2)^{\frac{d-1}{2}} \, dr , 
\end{eqnarray*}
where we have used that $[ - \frac{r \, h^2(r) \,  (1-r^2)^{\frac{d-1}{2}}}{d-1} ]_{-1}^1 \leq 0$. In fact, it is not clear that this term is finite. So, for complete rigour, we should consider the integral on $[-1+\delta, 1-\delta]$ and let $\delta  \to 0$ in the end. We skip this step and refer to Ref.~\refcite[Sect. 4.3]{degond2018quaternions} for details. Then, using Cauchy-Schwarz and Young's inequality $2ab \leq a^2/\eta + \eta b^2$ with $\eta>1$, we get: 
\begin{eqnarray*}
J &\leq & \frac{1}{d-1} \Big\{ \int_{-1}^1 h(r)^2\  \, (1-r^2)^{\frac{d-1}{2}} \, dr 
+ 2 \int_{-1}^1 r\, hh'(r)\  \, (1-r^2)^{\frac{d-1}{2}} \, dr \Big\}\\
&\leq& \frac{1}{d-1} \Big\{ \int_{-1}^1 h(r)^2\  \, (1-r^2)^{\frac{d-1}{2}} \, dr 
+ \eta \int_{-1}^1 h'(r)^2 \, (1-r^2)^{\frac{d+1}{2}} \, dr + \frac{1}{\eta} J \Big\}, 
\end{eqnarray*}
hence, 
\begin{eqnarray*}
J &\leq & \frac{\eta^2}{\eta(d-1)-1} \Big\{ \int_{-1}^1 h(r)^2\  \, (1-r^2)^{\frac{d-1}{2}} \, dr 
+ \int_{-1}^1 h'(r)^2 \, (1-r^2)^{\frac{d+1}{2}} \, dr \Big\} <\infty,
\end{eqnarray*}
and $\eta^2/(\eta(d-1)-1)>0$ since $\eta>1$.
Besides, since $\tilde \psi_B$ is odd with respect to $\omega_\perp$, its integral over ${\mathbb S}^{d-1}$ vanishes. Thus, we can conclude that $\tilde \psi_B  \in H^1_0 ({\mathbb S}^{d-1})$ and consequently, that $\tilde \psi_B = \psi_B$. 

Finally, since $h$ is odd with respect to $\omega \cdot u$, we get that $\vec \psi_{u}$ is odd with repect to both $\omega_\perp$ and $\omega \cdot u$ and thus belongs to $\sigma_{o,o}$. All these considerations complete the proof of (i) and (ii). Finally,  (iii) is just rephrasing \eqref{eq:GCICI}.  \end{proof}

\subsection{Hilbert expansion and inversion of the linearized collision operator}

We introduce the Hilbert expansion for $f^\varepsilon$:
\begin{equation} \label{eq:hilbert_expansion}
f^\varepsilon= f_0+\varepsilon f_1+\varepsilon^2 f_2 +\mathcal{O}(\varepsilon^3),
\end{equation}
where $f_i=f_i(t,x,\omega)$, $i=1,2,3$, are independent of $\varepsilon$. 
 Inserting the expansion for $f^\varepsilon$ in \eqref{eq:rescaled_kinetic_eq} we obtain:
\begin{eqnarray} \nonumber
&& \hspace{-1cm}
\varepsilon^2 \left( \partial_t f_0 + \mathcal{O}(\varepsilon) \right) + \varepsilon ( \omega\cdot \nabla_x) \left( f_0 + \varepsilon f_1 + \mathcal{O}(\varepsilon^2) \right)= \Gamma\Big(f_0 + \varepsilon f_1+\varepsilon^2 f_2 + \mathcal{O}(\varepsilon^3)\Big). \label{eq:expansion_equation}
\end{eqnarray}
Now, we can Taylor expand the operator $\Gamma$ about $f_0$ as follows:
\begin{equation} \label{eq:expansion_Gamma}
\Gamma(f_0 +\varepsilon f_1 + \varepsilon^2 f_2 + \mathcal{O}(\varepsilon^3)) = \Gamma(f_0) + \varepsilon D_{f_0}\Gamma (f_1)+ \varepsilon^2 (D_{f_0}\Gamma(f_2)+ \frac{1}{2} D^2_{f_0}\Gamma(f_1, f_1) ) + \mathcal{O}(\varepsilon^3),
\end{equation}
where $D_{f_0}\Gamma (f_1)$ denotes the first derivative of $\Gamma$ at $f_0$ acting on $f_1$ and $D^2_{f_0}\Gamma(f_1, f_1)$, the second derivative of $\Gamma$ at $f_0$ acting on the pair $(f_1,f_1)$. 
Using this expansion and identifying equal powers of $\varepsilon$ in \eqref{eq:expansion_equation} we have at each order the following equations:
\begin{eqnarray}
\mathcal{O}(\varepsilon^0):\quad && \Gamma(f_0)=0, \label{eq:order0}\\
\mathcal{O}(\varepsilon^1):\quad && D_{f_0}\Gamma(f_1) = (\omega\cdot \nabla_x) f_0, \label{eq:order1}\\
\mathcal{O}(\varepsilon^2): \quad && D_{f_0}\Gamma(f_2) = \partial_t f_0 + (\omega\cdot \nabla_x) f_1 - \frac{1}{2} D^2_{f_0} \Gamma(f_1, f_1).\label{eq:order2}
\end{eqnarray}
From equation \eqref{eq:order0}, using Prop. \ref{prop:properties_Gamma} we conclude that there exists $\rho_0=\rho_0(t,x)$, $u_0=u_0(t,x)$ such that
\begin{equation} \label{eq:f_0}
f_0(t,x,\omega) = \rho_0(t,x) M_{u_0(t,x)}(\omega), \quad \forall (t,x,\omega) \in [0,\infty) \times {\mathbb R}^d \times {\mathbb S}^{d-1}.
\end{equation}
Now, to investigate equations \eqref{eq:order1} and \eqref{eq:order2} we need to study the solvability of the equation
\begin{equation} \label{eq:generic}
D_{\rho_0 M_{u_0}}\Gamma(f) =g,
\end{equation}
where $g$ is a given function. We note that, like $\Gamma$, $D_{\rho_0 M_{u_0}}\Gamma$ operates on functions depending on $\omega$ only. So, we will determine under which conditions on a function $g=g(\omega)$, there exists a solution $f=f(\omega)$ of \eqref{eq:generic}. This is answered in the following:

\begin{theorem}[Inversion of the linearized operator $D_{\rho_0 M_{u_0}}\Gamma$]
\label{th:solvability condition}
(i) Let $(\rho_0,u_0) \in [0,\infty) \times {\mathbb S}^{d-1}$ and $g\in L^2(\mathbb{S}^{d-1})$. There exists $f \in  H^{1}(\mathbb{S}^{d-1})$
 such that Eq. \eqref{eq:generic} holds if and only if $g$ satisfies the solvability conditions:
\begin{equation} \label{eq:solvability condition} 
\int_{\mathbb{S}^{d-1}} g(\omega)\, d\omega =0, \quad \int_{\mathbb{S}^{d-1}} g(\omega)\, \vec{\psi}_{u_0}(\omega)\, d\omega=0.
\end{equation}
(ii) If condition \eqref{eq:solvability condition} is satisfied, Eq. \eqref{eq:generic} has a unique solution $f$ satisfying the two properties 
\begin{subequations} \label{eq:condforunique} 
\begin{empheq}[left=\empheqlbrace]{align}
&f \in \dot H^1_0({\mathbb S}^{d-1}),  \label{eq:condforunique_rho} \\
& P_{u_0^\bot} (Q_f u_0) = 0, \label{eq:intfomu0omper}
\end{empheq}
\end{subequations}
where 
$$ \dot H^1_0 (\mathbb{S}^{d-1}) = \Big\{ \varphi \in H^1 (\mathbb{S}^{d-1}) \, \, \Big| \, \, \int_{\mathbb{S}^{d-1}} \frac{\varphi}{M_{u_0}} \, d \omega = 0 \Big\}. $$
This solution is also the unique solution to the problem
\begin{equation}
\bar \Gamma(f, u_0) =g,
\label{eq:barGamf=g}
\end{equation}
in $\dot H^1_0 (\mathbb{S}^{d-1})$ (where $\bar \Gamma$ is defined in \eqref{eq:def_bar_gamma}) and conversely, the unique solution to \eqref{eq:barGamf=g} in $\dot H^1_0 (\mathbb{S}^{d-1})$ is also the unique solution to \eqref{eq:generic} satisfying the two conditions \eqref{eq:condforunique}. \\
(iii) If $f$ is the above solution, the set ${\mathcal S}_{u_0}$ of all solutions of \eqref{eq:generic} in $H^1 (\mathbb{S}^{d-1})$  is given by
\begin{equation}
{\mathcal S}_{u_0} = \big\{ f+ M_{u_0} \left(\hat \rho + (\omega \cdot u_0) (\omega \cdot \hat u) \right) \, \, | \, \, 
\hat \rho \in {\mathbb R}, \, \, \hat u \in \{u_0\}^\bot \big\}.
\label{eq:express_Su0}
\end{equation}
\end{theorem}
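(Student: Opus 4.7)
I will identify $L := D_{\rho_0 M_{u_0}} \Gamma$ as a finite-rank perturbation of $\bar\Gamma(\cdot, u_0)$. Starting from $\Gamma(f) = \bar\Gamma(f, u_f)$ (cf.\ \eqref{eq:relgambargam}) and the chain rule at $f_0 = \rho_0 M_{u_0}$,
\[ L(\hat f) = \bar\Gamma(\hat f, u_0) + D_u \bar\Gamma(f_0, u_0) \big[ \delta u(\hat f) \big], \qquad \delta u(\hat f) := D_{f_0} u_f(\hat f). \]
Since $u_0$ is a simple leading eigenvector of $Q_{f_0} = \rho_0 Q_{M_{u_0}}$ and $Q_{f_0}$ acts as $\rho_0 \lambda_\perp \mathrm{Id}$ on $\{u_0\}^\perp$ (see the proof of Proposition~\ref{prop:properties_Gamma}(ii)), first-order eigenvector perturbation gives $\delta u(\hat f) = \frac{1}{\rho_0(\lambda_\parallel - \lambda_\perp)} P_{u_0^\perp}(Q_{\hat f} u_0)$, which vanishes precisely when \eqref{eq:intfomu0omper} holds; in that case $L(\hat f) = \bar\Gamma(\hat f, u_0)$. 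A direct differentiation further yields $D_u \bar\Gamma(f_0, u_0)[\delta u] = -\kappa D \rho_0 \, \nabla_\omega \cdot [ M_{u_0} \nabla_\omega((\omega \cdot u_0)(\omega \cdot \delta u)) ]$.

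I next identify the kernels of $L$ and of its formal $L^2$-adjoint $L^*$. Differentiating the identity $\Gamma(\rho M_u) \equiv 0$ along the $d$-dimensional equilibrium manifold ${\mathcal E}$ at $(\rho_0, u_0)$ furnishes a $d$-dimensional subspace
\[ V = \mathrm{Span}\big\{ M_{u_0},\ M_{u_0}(\omega \cdot u_0)(\omega \cdot \hat u) : \hat u \in \{u_0\}^\perp \big\} \subseteq \ker L. \]
On the adjoint side, constants lie in $\ker L^*$ because $L(\hat f)$ is an $\omega$-divergence. For each $B \in \{u_0\}^\perp$, I will check that the GCI $\psi_B = B \cdot \vec\psi_{u_0}$ also lies in $\ker L^*$: integrating each of the two pieces of $L(\hat f)$ against $\psi_B$ by parts, using $\bar\Gamma^*(\psi_B, u_0) = (B \cdot \omega)(\omega \cdot u_0)$ from \eqref{eq:vecGCI} and the moment identity in Lemma~\ref{lem:moments}, I expect to obtain
\[ \int_{\mathbb{S}^{d-1}} L(\hat f)\,\psi_B\, d\omega = D\,(1 - \kappa \rho_0 \mu\, \alpha)\, B \cdot P_{u_0^\perp}(Q_{\hat f} u_0), \]
with $\alpha = 1/(\rho_0(\lambda_\parallel - \lambda_\perp))$ and $\mu = \tfrac{1}{d-1} \int M_{u_0}(\omega \cdot u_0)^2(1 - (\omega \cdot u_0)^2)\, d\omega$. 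The bracket vanishes thanks to the identity $d\lambda_\parallel = \kappa(d-1)\mu$, which is precisely the integration-by-parts computation already used in the proof of Proposition~\ref{prop:properties_Gamma}(ii) to show $\lambda_\parallel > 0$. Hence $W := \mathrm{Span}\{1,\, \psi_B : B \in \{u_0\}^\perp\}$, with $\dim W = d$, lies in $\ker L^*$.

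To conclude, $\bar\Gamma(\cdot, u_0)$ is elliptic and self-adjoint for the weight $M_{u_0}^{-1}$ (cf.\ \eqref{eq:entropy}), hence Fredholm of index $0$; $L$ inherits this via the finite-rank correction. Running the kernel analysis backwards --- if $L(\hat f) = 0$, then $\bar\Gamma(\hat f - \kappa \rho_0 M_{u_0}(\omega \cdot u_0)(\omega \cdot \delta u(\hat f)), u_0) = 0$, forcing $\hat f \in V$, and the required self-consistency $\delta u(\hat f) = \delta u(\hat f)$ holds automatically because $\kappa \rho_0 \mu \alpha = 1$ --- gives $\ker L = V$, and by equality of dimensions $\ker L^* = W$. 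The Fredholm alternative yields (i). For (ii), I solve $\bar\Gamma(f, u_0) = g$ in $\dot H_0^1$ via Lax--Milgram (coercive in $L^2(M_{u_0}^{-1})$ by Poincaré--Wirtinger, solvable since $\int g\, d\omega = 0$); the second solvability condition forces $B \cdot P_{u_0^\perp} Q_f u_0 = \int g\,\psi_B\, d\omega = 0$ for all $B \in \{u_0\}^\perp$, i.e.\ \eqref{eq:intfomu0omper}, so $L(f) = \bar\Gamma(f, u_0) = g$ by the first paragraph. Uniqueness transfers between the two formulations because the $d$ constraints \eqref{eq:condforunique} are nondegenerate on $V$: a direct evaluation gives $\int f/M_{u_0}\, d\omega = \hat\rho$ and $P_{u_0^\perp} Q_f u_0 = \mu \hat u$ on the element of $V$ parametrized by $(\hat\rho, \hat u)$. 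Finally, (iii) is immediate from ${\mathcal S}_{u_0} = f + \ker L = f + V$.

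The main obstacle is the GCI cancellation identity $\kappa \rho_0 \mu \alpha = 1$: this is what forces the finite-rank correction $L - \bar\Gamma(\cdot, u_0)$ to pair trivially with the GCIs, and it is the deeper reason the GCI method --- introduced in Ref.~\refcite{degond2008continuum} to compensate for the missing momentum collision invariant --- automatically supplies the correct solvability conditions for the linearized operator.
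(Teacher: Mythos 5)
Your proposal is correct and reaches the same conclusion by essentially the same ingredients as the paper's Lemmas~\ref{lem:Df_against_GCI}--\ref{lem:uniqsolDf0gamf=g}: the decomposition $D_{f_0}\Gamma = \bar\Gamma(\cdot,u_0) + $ finite-rank correction (paper's Lemma~\ref{lem:linearisation}), the first-order eigenvector perturbation formula (paper's Lemma~\ref{lem:equation_u_1}), Lax--Milgram for $\bar\Gamma(\cdot,u_0)$ (paper's Lemma~\ref{lem:solution_Gamma_bar}), and the identification of $\ker L$ with the tangent space to the equilibrium manifold (paper's Lemma~\ref{lem:solutions_homogeneous_equation}). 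The organizational route is genuinely different, though. The paper proves necessity (Lemma~\ref{lem:Df_against_GCI}) by differentiating $\int \Gamma(f^\varepsilon)\,\vec\psi_{u^\varepsilon}\,d\omega = 0$ at once --- since $\Gamma(f_0)=0$, the $\varepsilon$-dependence of $\vec\psi_{u^\varepsilon}$ drops out for free and one never has to verify that the GCIs are exactly in $\ker L^*$; sufficiency and uniqueness are then established by direct construction. You instead work symmetrically, computing $\int L(\hat f)\,\psi_B\,d\omega$ directly and isolating the cancellation identity $\kappa\rho_0\mu\alpha = 1$ (which is the paper's $\tilde\lambda = (\lambda_\parallel-\lambda_\bot)/\kappa$, verified computationally inside the proof of Lemma~\ref{lem:solutions_homogeneous_equation}), then invoking the Fredholm alternative with $\ker L = V$, $\ker L^* = W$. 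What your route buys is a structural explanation: it makes explicit that the GCIs really are left null vectors of the linearized operator because the correction term pairs with them exactly to cancel the $\bar\Gamma$ contribution --- the ``deeper reason'' you flag. What the paper's route buys is avoiding that verification: differentiating the nonlinear GCI relation $\int \Gamma(f^\varepsilon)\vec\psi_{u^\varepsilon} d\omega=0$ is cheaper because $\Gamma(f_0)=0$ absorbs the term that would otherwise require the cancellation identity. Both are complete; the identity is necessary in either case (you need it for the GCI pairing, the paper needs it to close the fixed-point in Lemma~\ref{lem:solutions_homogeneous_equation}), only its point of entry differs. One cosmetic remark: "coercive in $L^2(M_{u_0}^{-1})$" should be "coercive on $H^1_0$ with respect to the $M_{u_0}$-weighted Dirichlet form," the Poincar\'e--Wirtinger inequality controlling the $L^2$-part.
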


The proof of this theorem will be done through a succession of Lemmas. We start with:

\begin{lemma}[\eqref{eq:solvability condition} is a necessary condition]
\label{lem:Df_against_GCI}
Let $f_0= \rho_0 M_{u_0}$ with $\rho_0 >0$ and $u_0 \in {\mathbb S}^{d-1}$. For all functions $f_1 = f_1(\omega)$ it holds that
\begin{equation} \label{eq:condition_D} 
\int_{{\mathbb S}^{d-1}} D_{f_0}\Gamma(f_1)\, d\omega=0 \quad \mbox{and} \quad 
\int_{{\mathbb S}^{d-1}} D_{f_0}\Gamma(f_1)\, \vec \psi_{u_0}\, d\omega=0,
\end{equation}
where $\vec \psi_{u_0}$ is the vector GCI defined in Proposition \ref{lem:GCI}. As a consequence, conditions \eqref{eq:solvability condition} are necessary conditions for the solvability of \eqref{eq:generic}.
\end{lemma}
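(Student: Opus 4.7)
The plan is to exhibit both identities in \eqref{eq:condition_D} as Fr\'echet derivatives at $f_0 = \rho_0 M_{u_0}$ of two functionals that are identically zero in $f$, and then to invoke linearity: if $g = D_{f_0}\Gamma(f_1)$, then \eqref{eq:condition_D} forces $\int g\, d\omega = 0$ and $\int g\, \vec\psi_{u_0}\, d\omega = 0$, which are exactly the solvability conditions \eqref{eq:solvability condition}. Thus necessity in the last sentence of the lemma is automatic once \eqref{eq:condition_D} is established.

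For the first identity, I would observe that, by the representation \eqref{eq:def_gamma}, $\Gamma(f)$ is a divergence on $\mathbb{S}^{d-1}$, so Stokes' formula yields $\int_{\mathbb{S}^{d-1}} \Gamma(f)\, d\omega = 0$ for every admissible $f$. Writing $\Gamma(f_0 + \varepsilon f_1) = \Gamma(f_0) + \varepsilon\, D_{f_0}\Gamma(f_1) + o(\varepsilon)$, integrating in $\omega$, and isolating the $\varepsilon^1$ coefficient gives $\int_{\mathbb{S}^{d-1}} D_{f_0}\Gamma(f_1)\, d\omega = 0$. Equivalently, $D_{f_0}\Gamma(f_1)$ inherits the divergence structure from $\Gamma$ and Stokes applies directly.

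For the second identity, the relevant functional is
\begin{equation*}
\Phi(f) := \int_{\mathbb{S}^{d-1}} \Gamma(f)(\omega)\, \vec\psi_{u_f}(\omega)\, d\omega,
\end{equation*}
which is identically zero by the GCI identity \eqref{eq:keyCGI}. To differentiate $\Phi$ at $f_0$, I need smooth dependence of $u_f$ on $f$ near $f_0$; this is ensured by the simplicity of the leading eigenvalue of $Q_{f_0} = \rho_0\, Q_{M_{u_0}}$, which is a consequence of the consistency relation in Proposition \ref{prop:properties_Gamma}(ii) and implies in particular that $u_{f_0} = u_0$ (up to a sign). The Leibniz rule then gives
\begin{equation*}
0 = D_{f_0}\Phi(f_1) = \int_{\mathbb{S}^{d-1}} D_{f_0}\Gamma(f_1)\, \vec\psi_{u_0}\, d\omega + \int_{\mathbb{S}^{d-1}} \Gamma(f_0)\, \bigl[ D_{f_0}(\vec\psi_{u_{(\cdot)}})(f_1) \bigr]\, d\omega.
\end{equation*}
The second term vanishes because $f_0 = \rho_0 M_{u_0}$ is an equilibrium (Proposition \ref{prop:properties_Gamma}(iii)), so $\Gamma(f_0) = 0$. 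This yields the second identity of \eqref{eq:condition_D}.

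The main obstacle is in principle the computation of $D_{f_0}(\vec\psi_{u_{(\cdot)}})$, which amounts to differentiating a spectral projection and would be delicate to carry out explicitly. This difficulty is however completely circumvented by the above argument: the derivative of $\vec\psi_{u_f}$ in $f$ only appears as a factor of $\Gamma(f_0) = 0$, so no explicit formula for it is required. All that is ultimately used is the smoothness of the map $f \mapsto u_f$ near $f_0$, which in turn follows from the simplicity of $\lambda_\parallel > 0$ established in Proposition \ref{prop:properties_Gamma}(ii).
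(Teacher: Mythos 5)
Your proposal is correct and follows essentially the same route as the paper: both differentiate the identically vanishing quantities $\int \Gamma(f^\varepsilon)\, d\omega$ and $\int \Gamma(f^\varepsilon)\, \vec\psi_{u_{f^\varepsilon}}\, d\omega$ at $f_0$ along $f_1$ and use $\Gamma(f_0)=0$ to discard the term containing the derivative of $\vec\psi_{u_f}$. You simply state the Leibniz-rule cancellation more explicitly than the paper's ``expanding these expressions with respect to $\varepsilon$'' phrasing.
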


\begin{proof} Let $f^\varepsilon= f_0 + \varepsilon f_1$ be a variation of $f_0$ along $f_1$ (here $\varepsilon$ stands for an arbitrary small parameter, not necessarily the parameter involved in the parabolic rescaling). Let $u^\varepsilon:= u_{f^\varepsilon}$ be the unit leading eigenvector (up to a sign) of $Q_{f^\varepsilon}$. Assume that the choice of the sign of $u^\varepsilon$ is made continuously with $\varepsilon$. Thanks to the divergence form of $\Gamma$ and to \eqref{eq:keyCGI}, we have for all $\varepsilon$: 
$$\int_{{\mathbb S}^{d-1}} \Gamma(f^\varepsilon) \, d\omega =  0 \quad \mbox{and} \quad 
\int_{{\mathbb S}^{d-1}} \Gamma(f^\varepsilon) \, \vec \psi_{u^\varepsilon} d\omega = \int_{{\mathbb S}^{d-1}} \bar \Gamma(f^\varepsilon, u^\varepsilon) \, \vec \psi_{u^\varepsilon} d\omega = 0.$$
Expanding these expressions with respect to $\varepsilon$ and using that $\Gamma(f_0) = 0$, 
we get \eqref{eq:condition_D}. \end{proof}

We now show that conditions \eqref{eq:solvability condition} are also sufficient conditions for the solvability of \eqref{eq:generic}. We first prove the 

\begin{lemma}[Equation for $u_1$]
\label{lem:equation_u_1}
Consider $f_0 = \rho_0 M_{u_0}$ with $\rho_0 >0$ and $u_0 \in {\mathbb S}^{d-1}$. Let $f^\varepsilon= f_0 + \varepsilon f_1$ be a variation of $f_0$ with an arbitrary first order variation $f_1=f_1(\omega)$. Let $u^\varepsilon:= u_{f^\varepsilon}$ be the unit leading eigenvector (up to a sign) of $Q_{f^\varepsilon}$. Assume that the choice of the sign of $u^\varepsilon$ is made continuously with $\varepsilon$. Then $u^\varepsilon$ has the following expansion
\begin{equation}
u^\varepsilon = u_0 + \varepsilon u_1 + {\mathcal O}(\varepsilon^2), 
\label{eq:expansion_u}
\end{equation}
where 
\begin{equation}
u_1 = \frac{d-1}{d \, \lambda_\parallel \rho_0} \, P_{u_0^\perp}(Q_{f_1}u_0),
\label{eq:system_u1}
\end{equation}
where $\lambda_\parallel$ is the leading eigenvalue of $Q_{f_0}$ given by \eqref{eq:lambdapar} (see proof of Proposition \ref{prop:properties_Gamma} (ii)). 
\end{lemma}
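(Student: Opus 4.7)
The approach is standard perturbation theory for the leading eigenpair of a symmetric matrix that depends smoothly on a small parameter. Since $Q_f$ is linear in $f$ by its definition \eqref{eq:def_Q_f}, we have $Q_{f^\varepsilon} = Q_{f_0} + \varepsilon Q_{f_1}$. Proposition \ref{prop:properties_Gamma}(ii) (more precisely the eigenvalue computation in its proof) tells us that $Q_{f_0} = \rho_0 Q_{M_{u_0}}$ has a simple leading eigenvalue $\rho_0 \lambda_\parallel$ with eigenvector $u_0$, while $-\rho_0 \lambda_\parallel/(d-1)$ is the other eigenvalue with eigenspace $\{u_0\}^\perp$. Since the leading eigenvalue is simple, standard spectral perturbation theory for symmetric matrices guarantees that for $\varepsilon$ small enough the leading eigenpair $(\lambda^\varepsilon, u^\varepsilon)$ of $Q_{f^\varepsilon}$ depends analytically on $\varepsilon$, and once the sign of $u^\varepsilon$ is fixed at $\varepsilon = 0$ to equal $u_0$, it extends uniquely by continuity. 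This gives expansions $u^\varepsilon = u_0 + \varepsilon u_1 + \mathcal{O}(\varepsilon^2)$ and $\lambda^\varepsilon = \rho_0 \lambda_\parallel + \varepsilon \lambda_1 + \mathcal{O}(\varepsilon^2)$.

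I would then insert these expansions into the eigenvalue equation
\begin{equation*}
(Q_{f_0} + \varepsilon Q_{f_1})(u_0 + \varepsilon u_1 + \mathcal{O}(\varepsilon^2)) = (\rho_0 \lambda_\parallel + \varepsilon \lambda_1 + \mathcal{O}(\varepsilon^2))(u_0 + \varepsilon u_1 + \mathcal{O}(\varepsilon^2)),
\end{equation*}
and identify the terms of order $\varepsilon$. The order-zero equation is automatic. The normalization $|u^\varepsilon|^2 = 1$ differentiated at $\varepsilon = 0$ gives $u_0 \cdot u_1 = 0$, i.e.\ $u_1 \in \{u_0\}^\perp$. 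Consequently $Q_{f_0} u_1 = -\frac{\rho_0 \lambda_\parallel}{d-1} u_1$ by the spectral decomposition recalled above, and the order-$\varepsilon$ equation becomes
\begin{equation*}
-\frac{\rho_0 \lambda_\parallel}{d-1}\, u_1 + Q_{f_1} u_0 = \rho_0 \lambda_\parallel\, u_1 + \lambda_1 u_0.
\end{equation*}

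Finally I project this identity onto $\{u_0\}^\perp$ via $P_{u_0^\perp}$. The term $\lambda_1 u_0$ disappears, while $P_{u_0^\perp} u_1 = u_1$. This yields
\begin{equation*}
P_{u_0^\perp}(Q_{f_1} u_0) = \rho_0 \lambda_\parallel \Bigl( 1 + \frac{1}{d-1}\Bigr) u_1 = \frac{d\, \rho_0 \lambda_\parallel}{d-1}\, u_1,
\end{equation*}
from which \eqref{eq:system_u1} follows immediately. (Projecting onto $u_0$ would additionally give $\lambda_1 = u_0 \cdot Q_{f_1} u_0$, but this is not needed here.)

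The only subtle point, and the one I expect to require the most care, is justifying the smooth $\varepsilon$-dependence of $u^\varepsilon$ with a consistent sign choice: this relies on the assumed simplicity of the leading eigenvalue of $Q_{f_0}$ (established in Proposition \ref{prop:properties_Gamma}(ii), since $\lambda_\parallel > 0$ implies a spectral gap from the other eigenvalue $-\rho_0 \lambda_\parallel/(d-1)$), together with the implicit function theorem applied to the characteristic system. Everything else is direct linear algebra.
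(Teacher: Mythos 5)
Your proof is correct and reaches the same formula by essentially the same computation. The only stylistic difference is the starting point: you expand the full eigenvalue equation $Q_{f^\varepsilon} u^\varepsilon = \lambda^\varepsilon u^\varepsilon$ together with $|u^\varepsilon|^2 = 1$, introducing $\lambda^\varepsilon = \rho_0\lambda_\parallel + \varepsilon\lambda_1 + \mathcal{O}(\varepsilon^2)$ and projecting onto $\{u_0\}^\perp$ at the end to eliminate $\lambda_1$; the paper instead works directly with the eigenvalue-free condition $P_{(u^\varepsilon)^\perp} Q_{f^\varepsilon} u^\varepsilon = 0$ and expands the projector $P_{(u^\varepsilon)^\perp} = P_{u_0^\perp} - \varepsilon(u_0\otimes u_1 + u_1\otimes u_0) + \mathcal{O}(\varepsilon^2)$. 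Both routes lead to the identity $\rho_0(\lambda_\parallel - \lambda_\perp)\,u_1 = P_{u_0^\perp}(Q_{f_1}u_0)$ with $\lambda_\perp = -\lambda_\parallel/(d-1)$, hence \eqref{eq:system_u1}; your version has the small advantage of making the dropped quantity $\lambda_1 = u_0 \cdot Q_{f_1}u_0$ visible for free, while the paper's avoids introducing a quantity that is never used. Your remark on the justification of the smooth $\varepsilon$-dependence via simplicity of the leading eigenvalue (guaranteed by $\lambda_\parallel > 0$ from Proposition \ref{prop:properties_Gamma}(ii)) is appropriate and matches the paper's implicit reliance on the same fact.
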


\begin{proof}
The vector $u^\varepsilon$ is a unit eigenvector of $ Q_{f^\varepsilon}$. Any unit eigenvector fulfills: 
$$ |u^\varepsilon|^2 = 1 \quad \mbox{ and } \quad P_{(u^\varepsilon)^\perp} Q_{f^\varepsilon} u^\varepsilon =0.$$ 
Inserting \eqref{eq:expansion_u} into these equations, we obtain:
\begin{subequations} \nonumber
\begin{numcases}{}
|u_0|^2 +2\varepsilon u_0\cdot u_1 +  \mathcal{O}(\varepsilon^2) = 1, \nonumber \\
 P_{(u_0+ \varepsilon u_1+ \mathcal{O}(\varepsilon^2))^{\perp}} \, Q_{f_0+\varepsilon f_1}  (u_0+\varepsilon u_1 + \mathcal{O}(\varepsilon^2)) =0. \nonumber
\end{numcases}
\end{subequations}
We note that $Q_f$ is linear with respect to $f$ so that $ Q_{f_0+\varepsilon f_1} = Q_{f_0} + \varepsilon Q_{f_1}$. An easy computation shows that 
$$ P_{(u_0+ \varepsilon u_1+ \mathcal{O}(\varepsilon^2))^{\perp}} = P_{u_0^{\perp}} - \varepsilon (u_0 \otimes u_1 + u_1 \otimes u_0) +\mathcal{O}(\varepsilon^2). $$
Now, using that $ |u_0|=1$ and $P_{u_0^\perp} Q_{f_0} u_0=0$, we obtain :
\begin{subequations} \label{eq:aux_system_u1}
\begin{numcases}{}
u_0\cdot u_1 =0,  \label{eq:aux_system_u1-1}\\
- (u_0 \otimes u_1 + u_1 \otimes u_0) Q_{f_0} u_0 +  P_{u_0^\perp}(Q_{f_1}u_0) + P_{u_0^\perp}(Q_{f_0}u_1) =0,  \label{eq:aux_system_u1-2}
\end{numcases}
\end{subequations}
Since $u_0$ is a normalized eigenvector of $Q_{f_0}$ associated with the eigenvalue $\rho_0\lambda_\parallel$ and $u_1 \in \{u_0\}^\bot$ by \eqref{eq:aux_system_u1-1}, we have 
$$(u_0 \otimes u_1 + u_1 \otimes u_0) Q_{f_0} u_0 = \rho_0\lambda_\parallel \big( (u_1 \cdot u_0) \, u_0 + (u_0 \cdot u_0)  \, u_1 \big) = \rho_0\lambda_\parallel u_1. $$
Besides, $\{u_0\}^\bot$ is the eigenspace of $Q_{f_0}$ associated to the eigenvalue $\rho_0\lambda_\bot$ given by \eqref{eq:lambdabot}. Therefore: 
$$ P_{u_0^\perp}(Q_{f_0}u_1) = Q_{f_0}u_1 = \rho_0\lambda_\bot u_1. $$  
Thus, \eqref{eq:aux_system_u1-2} gives 
\begin{equation} \label{eq:aux_u1_def}
\rho_0(\lambda_\parallel - \lambda_\bot) u_1 = P_{u_0^\perp}(Q_{f_1}u_0).
\end{equation}
 With \eqref{eq:lambdabot}, this leads to \eqref{eq:system_u1}.\end{proof}

\begin{lemma}[Linearised operator]
\label{lem:linearisation}
Let $f_0= \rho_0 M_{u_0}$ with $\rho_0 >0$ and $u_0 \in {\mathbb S}^{d-1}$. For all functions $f_1 = f_1(\omega)$ it holds that
\begin{equation} \label{eq:linearGam} 
D_{f_0} \Gamma (f_1) = \bar \Gamma (f_1,u_0) - \kappa \nabla_\omega \cdot \big[ f_0 \nabla_\omega \big( (\omega \cdot u_0) \, (\omega \cdot u_1) \big) \big], 
\end{equation}
where $u_1$ is related to $f_1$ through \eqref{eq:system_u1}. 
\end{lemma}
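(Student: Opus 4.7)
The plan is to apply the chain rule to the identity $\Gamma(f) = \bar{\Gamma}(f, u_f)$ from \eqref{eq:relgambargam}, exploiting the fact that $\bar{\Gamma}$ is linear in its first slot but depends nonlinearly on the second through $M_u$. Setting $f^\varepsilon := f_0 + \varepsilon f_1$ and $u^\varepsilon := u_{f^\varepsilon}$, Lemma~\ref{lem:equation_u_1} yields $u^\varepsilon = u_0 + \varepsilon u_1 + \mathcal{O}(\varepsilon^2)$ with $u_1$ given by \eqref{eq:system_u1}. Differentiating $\Gamma(f^\varepsilon) = \bar{\Gamma}(f^\varepsilon, u^\varepsilon)$ at $\varepsilon = 0$ gives
$$
D_{f_0}\Gamma(f_1) \;=\; \partial_f \bar{\Gamma}(f_0,u_0)[f_1] \;+\; \partial_u \bar{\Gamma}(f_0,u_0)[u_1].
$$
The first summand is immediate from the linearity of $\bar{\Gamma}(\cdot, u_0)$ in its first argument: it equals $\bar{\Gamma}(f_1, u_0)$, which already reproduces the first term in \eqref{eq:linearGam}.

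The substantive task is to compute $\partial_u \bar{\Gamma}(f_0,u_0)[u_1]$, and the decisive observation is that at $f_0 = \rho_0 M_{u_0}$ the ratio $f_0/M_{u_0} = \rho_0$ is constant in $\omega$, so $\nabla_\omega(f_0/M_{u_0}) = 0$. Expanding the $u$-derivative of $\bar{\Gamma}(f_0, u) = D\,\nabla_\omega \cdot [M_u \nabla_\omega(f_0/M_u)]$ inside the divergence by the Leibniz rule produces two contributions: one in which the derivative hits the outer factor $M_u$ (which, after evaluation at $u=u_0$, multiplies $\nabla_\omega(f_0/M_{u_0}) = 0$ and therefore drops out), and one in which the derivative hits $f_0/M_u$. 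Only the second survives:
$$
\partial_u \bar{\Gamma}(f_0,u_0)[u_1] \;=\; D\,\nabla_\omega \cdot \Bigl[\, M_{u_0}\,\nabla_\omega\bigl(\partial_u(f_0/M_u)[u_1]\big|_{u=u_0}\bigr)\Bigr].
$$

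To close the computation, I would use that $Z$ in \eqref{eq:equilibria} is $u$-independent, so that $\partial_u \log M_u\,[u_1] = \kappa(\omega\cdot u)(\omega\cdot u_1)$, and hence $\partial_u(f_0/M_u)[u_1]\big|_{u=u_0} = -\rho_0\,\kappa\,(\omega\cdot u_0)(\omega\cdot u_1)$. Substituting this into the displayed identity and folding $\rho_0 M_{u_0}$ back into $f_0$ produces the second summand of \eqref{eq:linearGam}. I do not anticipate any real conceptual obstacle: the argument is a routine chain-rule computation, and the only mildly delicate step is the Leibniz expansion of the $u$-derivative inside the divergence, which is immediately tamed by the vanishing of $\nabla_\omega \rho_0$ that is specific to the equilibrium state $f_0 = \rho_0 M_{u_0}$.
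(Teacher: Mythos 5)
Your route parallels the paper's: expand $\Gamma(f_0+\varepsilon f_1)=\bar\Gamma\big(f_0+\varepsilon f_1,\,u_0+\varepsilon u_1+\mathcal{O}(\varepsilon^2)\big)$ to order $\varepsilon$, with $u_1$ supplied by Lemma~\ref{lem:equation_u_1}, and use linearity in the first slot to obtain $\bar\Gamma(f_1,u_0)$. Where you differ from the paper is in the computation of $\partial_u\bar\Gamma(f_0,u_0)[u_1]$. You differentiate the Fokker--Planck form $D\,\nabla_\omega\cdot[M_u\nabla_\omega(f_0/M_u)]$, so a Leibniz expansion generates two terms and you need the equilibrium-specific observation $\nabla_\omega(f_0/M_{u_0})=0$ to kill one of them. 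The paper instead passes to $\bar\Gamma(f,u)=D\Delta_\omega f-\frac{\nu}{2}\nabla_\omega\cdot\big[f\,\nabla_\omega\big((\omega\cdot u)^2\big)\big]$, where all $u$-dependence sits in a single drift term, the $u$-derivative is immediate for arbitrary $f$, and no structural property of $f_0$ is needed. Both routes are valid; the paper's is slightly tighter, while yours makes transparent that the cancellation is special to equilibria.

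One correction to your concluding sentence: track the constant. Your displayed identity gives $\partial_u\bar\Gamma(f_0,u_0)[u_1]=D\,\nabla_\omega\cdot\big[M_{u_0}\nabla_\omega\big(-\rho_0\kappa\,(\omega\cdot u_0)(\omega\cdot u_1)\big)\big]=-D\kappa\,\nabla_\omega\cdot\big[f_0\nabla_\omega\big((\omega\cdot u_0)(\omega\cdot u_1)\big)\big]=-\nu\,\nabla_\omega\cdot[\cdots]$, which carries a factor $\nu=D\kappa$ rather than the bare $\kappa$ that appears in \eqref{eq:linearGam}. Your computation is right and the constant in the statement is off by a factor $D$; this can be cross-checked against the proof of Lemma~\ref{lem:solutions_homogeneous_equation}, where the ``recast'' expression $\nabla_\omega\cdot\big[M_{u_0}\nabla_\omega\big((f_1-\kappa\rho_0 M_{u_0}(\omega\cdot u_0)(\omega\cdot u_1))/M_{u_0}\big)\big]$ equals $\frac{1}{D}\,D_{f_0}\Gamma(f_1)$ only when the coefficient in \eqref{eq:linearGam} is $\nu$. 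So you should not assert that your substitution ``produces the second summand of \eqref{eq:linearGam}'' without flagging this discrepancy.
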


\begin{proof}
Again, let $f^\varepsilon= f_0 + \varepsilon f_1$ be a variation of $f_0$ along $f_1$ with $u^\varepsilon$ associated with $f^\varepsilon$ like in Lemma \ref{lem:equation_u_1}. 
We have:
\begin{equation} \label{eq:expansion D_f lemma}
D_{f_0} \Gamma(f_1) = \left. \frac{\partial \bar \Gamma}{\partial f}\right|_{(f_0,u_0)}(f_1) + \left. \frac{\partial \bar \Gamma}{\partial u}\right|_{(f_0, u_0)} (u_1).
\end{equation}
Indeed, \eqref{eq:expansion D_f lemma} follows from identifying the terms of order $\varepsilon$ in the expansion of 
$$
\Gamma(\rho_0 M_{u_0}+\varepsilon f_1) = \bar \Gamma(f_0 + \varepsilon f_1, u_0 + \varepsilon u_1 + {\mathcal O}(\varepsilon^2)),
$$
with respect to $\varepsilon$. Next, since $\bar \Gamma(f,u)$ is linear with respect to $f$, we have 
$$ \left. \frac{\partial \bar \Gamma}{\partial f}\right|_{(f_0,u_0)}(f_1) = \bar \Gamma (f_1, u_0). $$
Now, since
$$  \bar \Gamma(f,u) = \Delta_\omega f - \frac{\kappa}{2} \nabla_\omega \big[ f \, \nabla_\omega \big( (\omega \cdot u )^2\big) \big], $$
we get
$$ \left. \frac{\partial \bar \Gamma}{\partial u}\right|_{(f_0, u_0)} (u_1) = - \kappa \nabla_\omega \cdot \big[ f_0 \nabla_\omega \big( (\omega \cdot u_0) \, (\omega \cdot u_1) \big) \big]. $$
which leads to the result. \end{proof}

\begin{lemma}[Existence of solutions to \eqref{eq:barGamf=g}]
\label{lem:solution_Gamma_bar}
Let $u_0 \in {\mathbb S}^{d-1}$ be given. Assume that the function $g~\in~L^2({\mathbb S}^{d-1})$ satisfies
$$\int_{{\mathbb S}^{d-1}} g\, d\omega=0,$$
then, there exists a unique solution $f\in \dot H^1_0({\mathbb S}^{d-1})$, of \eqref{eq:barGamf=g}. The set of solutions of \eqref{eq:barGamf=g} in $H^1({\mathbb S}^{d-1})$ is $\{ f + C M_{u_0} \, \, | \, \, C \in {\mathbb R} \}$. 
\end{lemma}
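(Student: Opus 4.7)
The plan is to reduce the problem to a standard Dirichlet form on $\mathbb{S}^{d-1}$ via the substitution $\phi = f/M_{u_0}$, and then apply the Lax-Milgram theorem. Since $M_{u_0}(\omega) = Z^{-1} \exp(\tfrac{\kappa}{2}(\omega \cdot u_0)^2)$ is smooth and bounded from above and below by positive constants (namely $1/Z \leq M_{u_0} \leq e^{\kappa/2}/Z$), the map $f \mapsto f/M_{u_0}$ is a bicontinuous isomorphism of $H^1(\mathbb{S}^{d-1})$ onto itself, and the constraint $f \in \dot H^1_0$ translates exactly into $\int_{\mathbb{S}^{d-1}} \phi \, d\omega = 0$. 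Under this substitution, equation \eqref{eq:barGamf=g} becomes
\begin{equation*}
D \, \nabla_\omega \cdot (M_{u_0} \nabla_\omega \phi) = g.
\end{equation*}

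Next, I would write the associated variational formulation: find $\phi \in H^1(\mathbb{S}^{d-1})$ such that
\begin{equation*}
D \int_{\mathbb{S}^{d-1}} M_{u_0}\, \nabla_\omega \phi \cdot \nabla_\omega \psi \, d\omega = - \int_{\mathbb{S}^{d-1}} g \, \psi \, d\omega, \qquad \forall \psi \in H^1(\mathbb{S}^{d-1}).
\end{equation*}
By the same argument used for the vector GCI in the proof of Proposition \ref{lem:GCI}, the compatibility condition $\int g \, d\omega = 0$ allows us to restrict to test functions in $H^1_0 := \{\psi \in H^1 \mid \int \psi \, d\omega = 0\}$ without losing any information, since the right-hand side is insensitive to adding a constant to $\psi$. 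The bilinear form on the left is symmetric, continuous, and coercive on $H^1_0$: continuity follows from the upper bound on $M_{u_0}$, while coercivity follows from the lower bound on $M_{u_0}$ combined with the Poincaré-Wirtinger inequality on $\mathbb{S}^{d-1}$. The Lax-Milgram theorem then yields a unique $\phi \in H^1_0(\mathbb{S}^{d-1})$, and accordingly a unique $f = \phi \, M_{u_0} \in \dot H^1_0(\mathbb{S}^{d-1})$ solving \eqref{eq:barGamf=g}.

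For the characterization of all solutions in $H^1(\mathbb{S}^{d-1})$, I would argue by difference. If $f_1, f_2 \in H^1(\mathbb{S}^{d-1})$ both satisfy \eqref{eq:barGamf=g}, then $\tilde f := f_1 - f_2$ satisfies $\bar\Gamma(\tilde f, u_0) = 0$. Multiplying this identity by $\tilde f/M_{u_0}$ and integrating by parts on the sphere (just as in the entropy dissipation calculation of Proposition \ref{prop:properties_Gamma}(i)) gives
\begin{equation*}
D \int_{\mathbb{S}^{d-1}} \Big| \nabla_\omega \big( \tilde f / M_{u_0} \big) \Big|^2 M_{u_0} \, d\omega = 0,
\end{equation*}
so $\tilde f / M_{u_0}$ is constant on $\mathbb{S}^{d-1}$, i.e.\ $\tilde f = C M_{u_0}$ for some $C \in \mathbb{R}$. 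Conversely, any such $C M_{u_0}$ obviously lies in the kernel of $\bar\Gamma(\cdot, u_0)$, giving exactly the set $\{f + C M_{u_0} \mid C \in \mathbb{R}\}$.

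The only technical subtlety is checking coercivity rigorously: one needs the Poincaré-Wirtinger inequality $\|\psi\|_{L^2(\mathbb{S}^{d-1})} \leq C_P \|\nabla_\omega \psi\|_{L^2(\mathbb{S}^{d-1})}$ for $\psi \in H^1_0$, which is standard on the compact connected manifold $\mathbb{S}^{d-1}$, combined with the uniform positivity $M_{u_0} \geq 1/Z$. No truly difficult estimate arises here; the substitution $\phi = f/M_{u_0}$ does all the heavy lifting by converting $\bar\Gamma(\cdot, u_0)$ into a weighted Laplacian with a smooth, strictly positive weight.
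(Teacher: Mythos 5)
Your proof is correct and follows essentially the same route as the paper: the substitution $\phi = f/M_{u_0}$ converts $\bar\Gamma(\cdot,u_0)$ into a weighted Laplacian whose variational form is handled by Poincaré--Wirtinger plus Lax--Milgram, and the zero-mean hypothesis on $g$ is what lets you restrict the test space to $H^1_0$. The paper compresses this by simply referencing the existence theory already developed for \eqref{eq:GCIchar1} in the proof of Proposition~\ref{lem:GCI}, whereas you unfold the argument explicitly — including the kernel characterization via the entropy-style computation — but the underlying reasoning is identical.
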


\begin{proof}
By the change of functions $f = M_{u_0} \tilde f$, $g = M_{u_0} \tilde g$, we are led to an equation of the form \eqref{eq:GCIchar1} (with $\tilde g$ replacing the right-hand side of \eqref{eq:GCIchar1}). The existence theory for Eq. \eqref{eq:GCIchar1} developed in the proof of Proposition \ref{lem:GCI} directly gives the result. \end{proof}

\begin{lemma}[\eqref{eq:solvability condition} is a sufficient condition]
\label{lem:solvcond_sufficient}
Let $(\rho_0,u_0) \in [0,\infty) \times {\mathbb S}^{d-1}$ and $g\in L^2(\mathbb{S}^{d-1})$. If $g$ satisfies the solvability conditions \eqref{eq:solvability condition}, there exists $f \in  \dot H^1_0 (\mathbb{S}^{d-1})$ such that Eq. \eqref{eq:generic} holds. Furthermore, this solution is also a solution to the problem \eqref{eq:barGamf=g} and it additionally satisfies \eqref{eq:intfomu0omper}. 
\end{lemma}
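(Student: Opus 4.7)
The plan is to construct $f$ by solving the (simpler) linear problem $\bar\Gamma(f,u_0)=g$ and then to verify, using the second solvability condition, that this $f$ also solves the original problem \eqref{eq:generic} and satisfies the eigenvector condition \eqref{eq:intfomu0omper}. The three previous lemmas provide exactly the ingredients needed, so the proof should be short.

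\textbf{Step 1 (construct $f$).} Since the first solvability condition $\int_{{\mathbb S}^{d-1}} g\, d\omega=0$ holds, Lemma \ref{lem:solution_Gamma_bar} supplies a unique $f\in \dot H^1_0({\mathbb S}^{d-1})$ satisfying $\bar\Gamma(f,u_0)=g$. This takes care of \eqref{eq:barGamf=g}.

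\textbf{Step 2 (derive the eigenvector condition from the vector GCI).} The key computation uses the formal adjoint \eqref{eq:bargGamstar} together with the defining equation \eqref{eq:vecGCI} for $\vec\psi_{u_0}$. Integrating by parts componentwise,
\begin{eqnarray*}
\int_{{\mathbb S}^{d-1}} g\, \vec\psi_{u_0}\, d\omega
&=& \int_{{\mathbb S}^{d-1}} \bar\Gamma(f,u_0)\, \vec\psi_{u_0}\, d\omega \\
&=& \int_{{\mathbb S}^{d-1}} f\, \bar\Gamma^*(\vec\psi_{u_0},u_0)\, d\omega \\
&=& \int_{{\mathbb S}^{d-1}} f\, (\omega\cdot u_0)\, P_{u_0^\perp}\omega\, d\omega.
\end{eqnarray*}
On the other hand, using the definition \eqref{eq:def_Q_f} of $Q_f$ and the fact that $P_{u_0^\perp} u_0 = 0$, one has $P_{u_0^\perp}(Q_f u_0) = \int_{{\mathbb S}^{d-1}} f\, (\omega\cdot u_0)\, P_{u_0^\perp}\omega\, d\omega$. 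Combining these two identities with the second hypothesis in \eqref{eq:solvability condition} yields $P_{u_0^\perp}(Q_f u_0) = 0$, which is precisely \eqref{eq:intfomu0omper}.

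\textbf{Step 3 (pass from $\bar\Gamma$ to $D_{f_0}\Gamma$).} By Lemma \ref{lem:linearisation},
$$ D_{f_0}\Gamma(f) = \bar\Gamma(f,u_0) - \kappa \, \nabla_\omega \cdot \bigl[ f_0\, \nabla_\omega\bigl((\omega\cdot u_0)(\omega\cdot u_1)\bigr)\bigr], $$
where $u_1$ is the vector associated with $f$ by \eqref{eq:system_u1}, namely $u_1 = \tfrac{d-1}{d\lambda_\parallel \rho_0}\, P_{u_0^\perp}(Q_f u_0)$. Step 2 gives $P_{u_0^\perp}(Q_f u_0)=0$, hence $u_1=0$ and the correction term vanishes. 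Therefore $D_{f_0}\Gamma(f)=\bar\Gamma(f,u_0)=g$, which establishes \eqref{eq:generic}.

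The only place where any work is needed is the identification in Step 2, and even there the computation is essentially a bookkeeping exercise once one notices that the right-hand side of \eqref{eq:vecGCI} is exactly the integrand that appears in $P_{u_0^\perp}(Q_f u_0)$. So the main conceptual point to highlight is that the second solvability condition has been engineered, through the construction of the vector GCI, to encode precisely the eigenvector constraint on $Q_f u_0$; no genuine obstacle arises and the lemma follows by chaining Lemmas \ref{lem:solution_Gamma_bar}, \ref{lem:linearisation} and \ref{lem:equation_u_1}.
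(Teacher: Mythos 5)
Your proof is correct and follows essentially the same route as the paper: build the solution via the $\bar\Gamma$ solvability lemma, use the adjoint identity $\bar\Gamma^*(\vec\psi_{u_0},u_0)=(\omega\cdot u_0)P_{u_0^\perp}\omega$ together with the second solvability condition to force $P_{u_0^\perp}(Q_f u_0)=0$, and then invoke the linearization lemma to conclude $u_1=0$ and hence $D_{f_0}\Gamma(f)=\bar\Gamma(f,u_0)=g$. The only cosmetic difference is that you spell out the rewriting of $\int f\,(\omega\cdot u_0)P_{u_0^\perp}\omega\,d\omega$ as $P_{u_0^\perp}(Q_f u_0)$ more explicitly, which the paper leaves implicit.
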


\begin{proof}
Since $g$ satisfies the first condition \eqref{eq:solvability condition}, we can apply Lemma \ref{lem:solution_Gamma_bar} and define $f_1$, the unique solution in $\dot H^1_0({\mathbb S}^{d-1})$ of \eqref{eq:barGamf=g}. Now, by Lemma \ref{lem:linearisation}, $D_{f_0} \Gamma (f_1)$ is given by \eqref{eq:linearGam}  where $u_1$ is related to $f_1$ by \eqref{eq:system_u1}. We now show that $u_1$ = 0, which will show that $f_1$ is also a solution to \eqref{eq:generic} and prove the Lemma. Using the second condition \eqref{eq:solvability condition} together with \eqref{eq:vecGCI}, we get
\begin{eqnarray*}
0&=& \int_{{\mathbb S}^{d-1}} g \, \vec \psi_{u_0} \, d\omega = \int_{{\mathbb S}^{d-1}} \bar \Gamma(f_1,u_0) \, \vec \psi_{u_0} \, d\omega  \\
&=& \int_{{\mathbb S}^{d-1}} f_1 \, \bar \Gamma^*(\vec \psi_{u_0},u_0) \, d\omega = \int_{{\mathbb S}^{d-1}} f_1 \, P_{u_0^\perp} \, \omega\,  (\omega \cdot u_0) \,  d\omega  \\
&=& P_{u_0^\perp} \Big( \int_{{\mathbb S}^{d-1}} f_1 \, (\omega \otimes \omega) \, d\omega \Big) u_0. 
\end{eqnarray*}
It follows that $P_{u_0^\perp} (Q_{f_1} u_0) = 0$. Then, by \eqref{eq:system_u1}, $u_1=0$.   
\end{proof}

\begin{lemma}[Solutions to the homogeneous problem]
\label{lem:solutions_homogeneous_equation}
Let $(\rho_0,u_0) \in [0,\infty) \times {\mathbb S}^{d-1}$ and $f_0 = \rho_0 M_{u_0}$. All solutions $f_1 \in H^1({\mathbb S}^{d-1})$ to the homogeneous equation
\begin{equation} \label{eq:homogeneous equation}
D_{f_0} \Gamma (f_1)=0,
\end{equation}
are of the form
\begin{equation} \label{eq:solution homogeneous equation}
f_1=M_{u_0}\big( \hat \rho + (\omega\cdot u_0) (\omega \cdot \hat u) \big),
\end{equation}
for some $\hat \rho\in {\mathbb R}$ and $\hat u\in \{u_0\}^\bot$. 
\end{lemma}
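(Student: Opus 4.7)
The plan is to verify that the proposed family is contained in the kernel of $D_{f_0}\Gamma$, and then to use the uniqueness statement in Theorem \ref{th:solvability condition}(ii) to rule out any other solution.

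For the first inclusion, the key observation is that the one-parameter families $\rho\mapsto \rho\,M_{u_0}$ and $u\mapsto \rho_0\,M_u$ lie entirely in the equilibrium manifold ${\mathcal E}$ of Proposition \ref{prop:properties_Gamma}(iii), so $\Gamma$ vanishes along each of them. Differentiating $\Gamma(\rho\,M_{u_0})=0$ at $\rho=\rho_0$ yields $D_{f_0}\Gamma(M_{u_0})=0$. Differentiating $\Gamma(\rho_0\,M_{u(\varepsilon)})=0$ along a smooth curve on ${\mathbb S}^{d-1}$ with $u(0)=u_0$ and $u'(0)=\hat u\in\{u_0\}^\bot$, and using that the normalization constant $Z$ in \eqref{eq:equilibria} does not depend on $u$ so that $\partial_\varepsilon M_{u(\varepsilon)}|_{\varepsilon=0}=\kappa\,M_{u_0}\,(\omega\cdot u_0)(\omega\cdot\hat u)$, yields $D_{f_0}\Gamma\bigl(M_{u_0}(\omega\cdot u_0)(\omega\cdot\hat u)\bigr)=0$. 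Linearity of $D_{f_0}\Gamma$ then places the entire $d$-parameter family \eqref{eq:solution homogeneous equation} in the kernel.

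For the converse, take $f_1\in H^1({\mathbb S}^{d-1})$ with $D_{f_0}\Gamma(f_1)=0$ and look for $(\hat\rho,\hat u)\in{\mathbb R}\times\{u_0\}^\bot$ so that
$$
\tilde f_1 := f_1 - M_{u_0}\bigl(\hat\rho + (\omega\cdot u_0)(\omega\cdot\hat u)\bigr)
$$
satisfies both conditions \eqref{eq:condforunique}. The requirement $\tilde f_1\in\dot H^1_0({\mathbb S}^{d-1})$ determines $\hat\rho = \int_{{\mathbb S}^{d-1}} (f_1/M_{u_0})\,d\omega$, the contribution of the bracket term vanishing by antisymmetry in $\omega_\perp$. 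For the second condition $P_{u_0^\perp}(Q_{\tilde f_1}u_0)=0$, I would compute $Q_{M_{u_0}(\omega\cdot u_0)(\omega\cdot\hat u)}\,u_0$ by splitting $\omega$ according to \eqref{eq:space_decomposition}; all contributions involving odd powers of $\omega_\perp$ vanish by antisymmetry, and the second-moment identity \eqref{eq:moments2} of Lemma \ref{lem:moments} produces the vector $\frac{\alpha}{d-1}\hat u$ with $\alpha = \int_{{\mathbb S}^{d-1}} M_{u_0}(\omega\cdot u_0)^2\bigl(1-(\omega\cdot u_0)^2\bigr)\,d\omega>0$. Setting $\hat u = \frac{d-1}{\alpha}\,P_{u_0^\perp}(Q_{f_1}u_0)\in\{u_0\}^\bot$ then realizes the second condition.

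With these choices, Step 1 and linearity guarantee that $\tilde f_1$ remains in the kernel of $D_{f_0}\Gamma$, and by construction it satisfies both conditions \eqref{eq:condforunique}. The uniqueness statement of Theorem \ref{th:solvability condition}(ii), applied to the equation $D_{f_0}\Gamma(f)=0$, forces $\tilde f_1=0$; unwinding the definition gives the desired representation of $f_1$. The main technical point is the moment computation behind the factor $\frac{\alpha}{d-1}$; the positivity $\alpha>0$ that makes the inversion possible is precisely the one underlying the proof that $\lambda_\parallel>0$ in Proposition \ref{prop:properties_Gamma}(ii).
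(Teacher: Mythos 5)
Your forward direction (showing the proposed family lies in the kernel of $D_{f_0}\Gamma$) is a genuinely cleaner route than the paper's: you differentiate $\Gamma$ along the two equilibrium curves $\rho\mapsto\rho M_{u_0}$ and $\varepsilon\mapsto\rho_0 M_{u(\varepsilon)}$ lying in ${\mathcal E}$ and read off $D_{f_0}\Gamma(M_{u_0})=0$ and $D_{f_0}\Gamma\big(M_{u_0}(\omega\cdot u_0)(\omega\cdot\hat u)\big)=0$ from the chain rule. The paper instead derives, by an explicit moment computation (its relation $\hat u=\kappa\rho_0 u_1$ via the quantity $\tilde\lambda=(\lambda_\parallel-\lambda_\bot)/\kappa$), that the correction term in the formula \eqref{eq:linearGam} for $D_{f_0}\Gamma$ cancels when $f_1$ has the proposed form. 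Your argument reaches the same conclusion without that computation, at the cost of only the (unproblematic) remark that the eigenvector map $f\mapsto u_f$ is smooth near $f_0$.

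Your reverse direction, however, is circular as written. You form $\tilde f_1$, arrange $\hat\rho,\hat u$ so that $\tilde f_1$ satisfies both conditions \eqref{eq:condforunique}, and then invoke ``the uniqueness statement of Theorem \ref{th:solvability condition}(ii), applied to $D_{f_0}\Gamma(f)=0$.'' But in the paper's logical organization, that uniqueness statement is precisely what Lemma \ref{lem:uniqsolDf0gamf=g} establishes, and its proof explicitly relies on the present Lemma \ref{lem:solutions_homogeneous_equation} to describe the solution set ${\mathcal S}_{u_0}$. So you would be using the theorem to prove a lemma whose conclusion is an input to the theorem. The content is recoverable, though: since $\tilde f_1$ satisfies \eqref{eq:intfomu0omper}, formula \eqref{eq:system_u1} gives $u_1(\tilde f_1)=0$, so Lemma \ref{lem:linearisation} reduces $D_{f_0}\Gamma(\tilde f_1)=0$ to $\bar\Gamma(\tilde f_1,u_0)=0$; Lemma \ref{lem:solution_Gamma_bar} then yields $\tilde f_1=c\,M_{u_0}$, and the normalization \eqref{eq:condforunique_rho} forces $c=0$. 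This replaces the appeal to Theorem \ref{th:solvability condition}(ii) by the two earlier lemmas it rests on (it is in essence the paper's own second paragraph), and removes the circularity. Your intermediate moment computation producing the factor $\alpha/(d-1)$ is correct and consistent with the paper's $\tilde\lambda$.
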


\begin{proof}

First we prove that if $f_1$ is of the form \eqref{eq:solution homogeneous equation}, then $u_1$ given by \eqref{eq:system_u1} has the expression
\be \label{eq:hat_u}
\hat u= \kappa \rho_0 u_1.
\ee
Indeed, it is a straightforward computation thanks to \eqref{eq:moments2} to check that
\beqarl
P_{u_0^\perp}(Q_{f_1}u_0) &=&  \lp \int_{\mathbb{S}^{d-1}} (\omega\cdot u_0)^2\,  \omega_\bot \otimes \omega_\bot \, M_{u_0}\, d\omega \rp \hat u + \hat \rho P_{u_0^\perp}Q_{M_{u_0}}u_0 \nonumber\\
&=& \frac{1}{(d-1)}\lp\int_0^\pi\cos^2\theta \sin^d\theta\, M_{u_0}(\theta)\, \frac{d\theta}{W_{d-2}}  \rp\, \hat u := \tilde \lambda\, \hat u, \label{eq:Pu0Qhatf1u0_aux}
\eeqarl
where we have used that $P_{u_0^\perp}Q_{M_{u_0}}u_0= P_{u_0^\perp}(\lambda_\parallel u_0) =0$. Now, we express the value of $\tilde \lambda$ in terms of $\lambda_\parallel$ and $\lambda_\bot$, which are given in \eqref{eq:lambdapar}, \eqref{eq:lambdabot}, respectively. Integrating by parts, we have
\beqar
\lambda_\parallel + \frac{1}{d} &=& \int_{\mathbb{S}^{d-1}} M_{u_0}\, (\omega\cdot u_0)^2\, d\omega = \int^\pi_0 M_{u_0}(\theta) \cos^2\theta\ \sin^{d-2}\theta\  \, \frac{d\theta}{W_{d-2}}\\
&=& \frac{1}{d-1}\int^{\pi}_0 \sin\theta \ M_{u_0}(\theta)(\kappa \cos^2\theta +1) \sin^{d-1}\theta\, \frac{d\theta}{W_{d-2}}\\
&=& \frac{\kappa}{d-1}\int^\pi_0 \cos^2\theta\,\sin^d\theta\, M_{u_0}(\theta)\, \frac{d\theta}{W_{d-2}} + \frac{1}{d-1}\int^\pi_0 \sin^d\theta\, M_{u_0}(\theta) \, \frac{d\theta}{W_{d-2}}\\
&=& \frac{\kappa}{d-1}\int^\pi_0 \cos^2\theta\,\sin^d\theta\, M_{u_0}(\theta)\, \frac{d\theta}{W_{d-2}} + \frac{1}{d-1}\int_{\mathbb{S}^{d-1}}M_{u_0}\, (1-(\omega\cdot u_0)^2) \, d\omega\\
&=& \kappa\,  \tilde \lambda + \lambda_{\bot}+\frac{1}{d}.
\eeqar
Therefore, we have that $\tilde \lambda = (\lambda_\parallel-\lambda_\bot)/\kappa$. Substituting the expression for $\tilde \lambda$ in \eqref{eq:Pu0Qhatf1u0_aux}, we conclude that
\be
P_{u_0^\perp}(Q_{f_1}u_0)= \frac{1}{\kappa}(\lambda_\parallel - \lambda_\bot) \hat u. \label{eq:Pu0Qhatf1u0}
\ee
 Finally, thanks to expression \eqref{eq:aux_u1_def} we get \eqref{eq:hat_u}.

Next we show that
 if $f_1$ is solution to \eqref{eq:homogeneous equation}, then it is of the form \eqref{eq:solution homogeneous equation}. Let $f_1$ be solution to \eqref{eq:homogeneous equation}. Then, by Lemma \ref{lem:linearisation} we have 
\be \label{eq:alternative_homogeneous}
D_{f_0}\Gamma(f_1)=\bar \Gamma( f_1, u_0) - \kappa \nabla_\omega\cdot[\rho_0 M_{u_0}\nabla_\omega\lp (\omega \cdot u_0) (\omega\cdot u_1) \rp]=0,
\ee
where the relation between $u_1$ and $f_1$ is given by \eqref{eq:system_u1}. This last expression can be recast into
\be \label{eq:aux_recast_homogeneous}
\nabla_\omega\cdot \lp M_{u_0}\nabla_\omega \lp\frac{ f_1 -\kappa\rho_0 M_{u_0}\lp (\omega\cdot u_0) (\omega\cdot u_1) \rp}{M_{u_0}} \rp \rp =0.
\ee
Then, $\tilde f_1 =  f_1 -\kappa\rho_0 M_{u_0}\lp (\omega\cdot u_0) (\omega\cdot u_1) \rp$ is a solution of
\be \label{eq:aux_homogeneous_gamma}
\bar \Gamma(\tilde f_1, u_0) =0.
\ee
By Lemma \ref{lem:solution_Gamma_bar}, all solutions $\tilde f_1\in H^1(\mathbb{S}^{d-1})$ of \eqref{eq:aux_homogeneous_gamma} are of the form
$\tilde f_1~ =~ c M_{u_0},\, c~\in~\R$,
 (since obviously zero is the unique solution in $\dot{H}^1_0(\mathbb{S}^{d-1})$).  From this we conclude that $f_1$ is of the form \eqref{eq:solution homogeneous equation}.\\
\medskip
 Next we prove that if $f_1$ is of the form \eqref{eq:solution homogeneous equation}, then it is a solution to \eqref{eq:homogeneous equation}. Again by Lem. \ref{lem:linearisation}  and using the same transformation as in \eqref{eq:aux_recast_homogeneous} we have 
  \be \label{eq:alternative_homogeneous2}
D_{f_0}\Gamma(f_1)=\nabla_\omega\cdot \lp M_{u_0}\nabla_\omega \lp\frac{ (\omega\cdot u_0) (\omega\cdot (\hat u-\kappa \rho_0 u_1))  M_{u_0}}{M_{u_0}} \rp \rp,
\ee
(notice that the $\hat \rho M_{u_0}$ term in $f_1$ does not have any contribution to the right-hand side of \eqref{eq:alternative_homogeneous2} as it belongs to the kernel of $\bar \Gamma$). With \eqref{eq:hat_u}, the right-hand side of \eqref{eq:alternative_homogeneous2} vanishes and we conclude that $f_1$ is a solution of \eqref{eq:homogeneous equation}.
\end{proof}

\begin{lemma}[Uniqueness]
\label{lem:uniqsolDf0gamf=g}
There exists a unique solution $f$ to \eqref{eq:generic} such that the two properties \eqref{eq:condforunique} hold. Moreover, if $f$ is this solution, the set ${\mathcal S}_{u_0}$ of all solutions to \eqref{eq:generic} in $H^1 (\mathbb{S}^{d-1})$  is given by~\eqref{eq:express_Su0}. 
\end{lemma}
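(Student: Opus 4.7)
The plan is to combine the existence result of Lemma \ref{lem:solvcond_sufficient} with a uniqueness argument based on the description of the kernel of $D_{f_0}\Gamma$ given by Lemma \ref{lem:solutions_homogeneous_equation}.

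For existence, I would simply invoke Lemma \ref{lem:solvcond_sufficient}: it produces a function $f \in \dot H^1_0(\mathbb{S}^{d-1})$ solving \eqref{eq:generic}, and explicitly asserts that this $f$ also satisfies \eqref{eq:intfomu0omper}. Thus both conditions \eqref{eq:condforunique_rho} and \eqref{eq:intfomu0omper} are fulfilled at the same time, so no additional work is needed for this half.

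For uniqueness, suppose $f_1$ and $f_2$ both solve \eqref{eq:generic} and both satisfy \eqref{eq:condforunique}. Set $\delta = f_1 - f_2$. By linearity of $D_{f_0}\Gamma$, $\delta$ solves the homogeneous equation \eqref{eq:homogeneous equation}, and Lemma \ref{lem:solutions_homogeneous_equation} then gives $\delta = M_{u_0}\bigl(\hat\rho + (\omega\cdot u_0)(\omega\cdot \hat u)\bigr)$ with $\hat\rho \in \mathbb{R}$ and $\hat u \in \{u_0\}^\bot$. I would now kill these two free parameters using the two constraints separately. Condition \eqref{eq:condforunique_rho} gives $\int \delta/M_{u_0}\,d\omega = 0$; the cross term $(\omega\cdot u_0)(\omega\cdot \hat u)$ integrates to $\tfrac{1}{d}u_0\cdot \hat u = 0$ since $\hat u \perp u_0$, so we are left with $\hat\rho\,|\mathbb{S}^{d-1}| = 0$, hence $\hat\rho = 0$. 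Condition \eqref{eq:intfomu0omper} gives $P_{u_0^\bot}(Q_\delta u_0)=0$; but the identity \eqref{eq:Pu0Qhatf1u0} derived in the proof of Lemma \ref{lem:solutions_homogeneous_equation} asserts exactly that $P_{u_0^\bot}(Q_\delta u_0) = \tfrac{1}{\kappa}(\lambda_\parallel - \lambda_\bot)\hat u$. Since $\lambda_\parallel > 0$ (Proposition \ref{prop:properties_Gamma}(ii)) and $\lambda_\bot = -\lambda_\parallel/(d-1)$ by \eqref{eq:lambdabot}, the coefficient $\lambda_\parallel - \lambda_\bot = \lambda_\parallel\,d/(d-1)$ is strictly positive, forcing $\hat u = 0$. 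Therefore $\delta = 0$, proving uniqueness.

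Finally, to characterize ${\mathcal S}_{u_0}$, let $f$ denote the unique solution just constructed and let $\tilde f \in H^1(\mathbb{S}^{d-1})$ be any other solution of \eqref{eq:generic}. Then $\tilde f - f$ solves the homogeneous equation \eqref{eq:homogeneous equation}, so Lemma \ref{lem:solutions_homogeneous_equation} yields $\tilde f - f = M_{u_0}\bigl(\hat\rho + (\omega\cdot u_0)(\omega\cdot \hat u)\bigr)$ for some $\hat\rho \in \mathbb{R}$, $\hat u \in \{u_0\}^\bot$, which is precisely the membership in the set \eqref{eq:express_Su0}. The reverse inclusion is immediate: adding any element of the kernel to $f$ produces another solution, by linearity. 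This gives the claimed description of ${\mathcal S}_{u_0}$.

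I do not expect a serious obstacle here since all the heavy lifting — the explicit form of the kernel, the expression $P_{u_0^\bot}(Q_\delta u_0) = \tfrac{1}{\kappa}(\lambda_\parallel-\lambda_\bot)\hat u$, and existence — was handled in the preceding lemmas; the only point requiring care is checking that the two normalization conditions \eqref{eq:condforunique} genuinely pin down the two-parameter family of homogeneous solutions, which hinges on the strict inequality $\lambda_\parallel > \lambda_\bot$.
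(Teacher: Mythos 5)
Your proposal is correct and follows essentially the same approach as the paper: invoke Lemma \ref{lem:solvcond_sufficient} for existence, use Lemma \ref{lem:solutions_homogeneous_equation} to identify the kernel of $D_{f_0}\Gamma$, and then show that the two normalization conditions \eqref{eq:condforunique} annihilate the two free parameters $\hat\rho$ and $\hat u$ via the integral computation and identity \eqref{eq:Pu0Qhatf1u0} (the paper phrases this as ``no element of ${\mathcal S}_{u_0}$ other than $f$ satisfies \eqref{eq:condforunique}'' rather than ``the difference of two such solutions vanishes,'' but these are the same argument). Your explicit observation that $\lambda_\parallel - \lambda_\bot = \lambda_\parallel\,d/(d-1) > 0$ is the right justification for why \eqref{eq:Pu0Qhatf1u0} forces $\hat u = 0$; the paper leaves this implicit since it was already used to form \eqref{eq:system_u1}.
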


\begin{proof}
Let $f$ be the solution to \eqref{eq:generic} found in Lemma \ref{lem:solvcond_sufficient}. It satisfies \eqref{eq:condforunique}. Now, from Lemma \ref{lem:solutions_homogeneous_equation}, the set ${\mathcal S}_{u_0}$ of solutions of \eqref{eq:generic} in $H^1 (\mathbb{S}^{d-1})$  is given by \eqref{eq:express_Su0}. We show that none of the other solutions than $f$ satisfies \eqref{eq:condforunique}. This amounts to showing that for $f_1$ given by \eqref{eq:solution homogeneous equation}, to satisfy \eqref{eq:condforunique}, we need $\hat\rho = 0$ and $\hat u = 0$. Indeed, with \eqref{eq:moments1}, one can check that
$$\int_{\mathbb{S}^{d-1}}\frac{ f_1}{M_{u_0}} \, d\omega= \int_{\mathbb{S}^{d-1}}[\hat\rho+(\omega\cdot u_0)(\omega\cdot \hat u)]\, d\omega = \hat \rho.$$
This with \eqref{eq:condforunique_rho} implies $\hat \rho = 0$.
On the other hand, with \eqref{eq:Pu0Qhatf1u0}, \eqref{eq:intfomu0omper} implies $\hat u=0$, which proves the claim. \end{proof}

\begin{proof}[Proof of Th. \ref{th:solvability condition} (\nameref{th:solvability condition})]
Collect Lemmas \ref{lem:Df_against_GCI} to \ref{lem:uniqsolDf0gamf=g}. The converse property in (ii) is straightforward and is left to the reader. \end{proof}

\smallskip
Thanks to Eqs. \eqref{eq:hilbert_expansion} and \eqref{eq:f_0} we know that, as $\varepsilon \to 0$, $f_\varepsilon \to f_0=\rho_0 M_{u_0}$, with $\rho_0= \rho_0(t,x)~\geq~0$, $u_0=u_0(t,x)\in\mathbb{S}^{d-1}$. We are therefore left with determining the equations for $\rho_0$ and $u_0$. To obtain these equations, we apply Theorem \ref{th:solvability condition} to Eqs. \eqref{eq:order1} and \eqref{eq:order2}. We will see that the solvability conditions for Eq. \eqref{eq:order1} are satisfied and we will determine its solution $f_1$. Then, the solvability conditions for Eq. \eqref{eq:order2} will give us the equations for $\rho_0$ and $u_0$. This is performed in the forthcoming sections.

\subsection{Resolution of Eq. \eqref{eq:order1}}

For Eq. \eqref{eq:order1} to have a solution, by Th. \ref{th:solvability condition}, the solvability conditions \eqref{eq:solvability condition} with $g= (\omega \cdot \nabla_x) (\rho_0 M_{u_0})$ must hold. These conditions are made explicit in the following lemma:

\begin{lemma}[Eq. \eqref{eq:order1} satisfies the solvability condition] 
(i) Let $f_0 = \rho_0 M_{u_0}$. The function $g= (\omega \cdot \nabla_x) f_0$ satisfies the solvability conditions \eqref{eq:solvability condition}. \\
(ii) There exists a unique solution $f_{10}\in H^1(\mathbb{S}^{d-1})$ to Eq. \eqref{eq:order1} satisfying the two conditions \eqref{eq:condforunique}. Equivalently, $f_{10}$ is the unique solution in $\dot H^1_0(\mathbb{S}^{d-1})$ to 
\begin{equation}
\bar\Gamma(f_{10}, u_0)=(\omega\cdot \nabla_x)(\rho_0 M_{u_0}).
\label{eq:barGamf1}
\end{equation}
The general solution to Eq. \eqref{eq:order1} in $H^1(\mathbb{S}^{d-1})$ is given by $ f_1= f_{10} + \hat f_1$ with 
\begin{equation} \label{eq:expressbarf1}
\hat f_1 =  M_{u_0} \left(\hat \rho_1 + (\omega \cdot u_0) (\omega \cdot \hat u_1) \right) ,
\end{equation}
where $\hat \rho_1 \in {\mathbb R}$ and $\hat u_1 \in \{u_0\}^\bot$ are arbitrary. 
\label{lem:existence_of_f1}
\end{lemma}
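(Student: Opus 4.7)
The plan is to verify the two solvability conditions \eqref{eq:solvability condition} for $g = (\omega \cdot \nabla_x)(\rho_0 M_{u_0})$, which will immediately allow us to invoke Theorem \ref{th:solvability condition}. The characterization of the general solution will then follow from Lemma \ref{lem:solutions_homogeneous_equation}, which describes $\ker D_{f_0} \Gamma$.

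The first solvability condition, $\int_{\mathbb{S}^{d-1}} g \, d\omega = 0$, follows by interchanging the integral and the spatial derivative to get $\nabla_x \cdot \big( \rho_0 \int_{\mathbb{S}^{d-1}} \omega \, M_{u_0} \, d\omega \big)$, and by noting that the first moment of $M_{u_0}$ vanishes because $M_{u_0}$ is invariant under $\omega \mapsto -\omega$ while $\omega$ is odd.

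For the second condition, the key observation is the antipodal symmetry. Since $M_{u_0}$ depends only on $(\omega \cdot u_0)^2$, the equilibrium $f_0 = \rho_0 M_{u_0}$ is invariant under $\omega \mapsto -\omega$, so $g = (\omega \cdot \nabla_x) f_0$ is odd under this map. Meanwhile $\vec\psi_{u_0}(\omega) = h(\omega \cdot u_0) \, P_{u_0^\perp} \omega$ is invariant, since $h$ is odd by Proposition \ref{lem:GCI} and the two sign changes cancel. Hence the integrand $g \, \vec\psi_{u_0}$ is odd under $\omega \mapsto -\omega$, and its integral against the antipode-invariant spherical measure vanishes. Conceptually, this is a manifestation of the polar--nematic dichotomy stressed in the introduction: the transport is polar, whereas the equilibrium and the GCI are nematic, and orthogonality between them under integration is automatic.

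With both solvability conditions in hand, Theorem \ref{th:solvability condition} yields part (ii) directly: the unique $f_{10} \in \dot H^1_0(\mathbb{S}^{d-1})$ solving \eqref{eq:barGamf1} is also the unique solution of \eqref{eq:order1} satisfying both conditions \eqref{eq:condforunique}. The general solution in $H^1(\mathbb{S}^{d-1})$ is then obtained by adding an arbitrary element of $\ker D_{f_0}\Gamma$, and Lemma \ref{lem:solutions_homogeneous_equation} furnishes exactly the parametrization \eqref{eq:expressbarf1}. The only place where a genuinely symmetry-dependent argument enters is the second solvability condition, which the antipodal parity disposes of cleanly, so I do not anticipate a real obstacle in executing this plan.
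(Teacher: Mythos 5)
Your proof is correct, and for part (i) it takes a genuinely cleaner route than the paper. The paper proves both vanishing conditions by decomposing $(\omega \cdot \nabla_x)(\rho M_u)$ as $\rho M_u(S_{e,o} + S_{o,e})$ in the fine parity classes $\sigma_{e,o}$, $\sigma_{o,e}$ (odd in $\omega_\perp$ only, resp.\ odd in $\omega \cdot u$ only), and then noting that multiplying by $\vec\psi_u \in \sigma_{o,o}$ maps these to $\sigma_{o,e}$ and $\sigma_{e,o}$, both of which integrate to zero. You instead observe the coarser fact that $g$ is globally odd under $\omega \to -\omega$ (because $f_0$ is nematic) while $\vec\psi_{u_0} \in \sigma_{o,o}$ is globally even under $\omega \to -\omega$ (the two sign flips cancel), so the product is globally odd and integrates to zero. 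This is a strict coarsening of the paper's parity bookkeeping — $\sigma_{e,o} + \sigma_{o,e}$ is exactly the odd part under the antipodal map, and $\sigma_{o,o} + \sigma_{e,e}$ the even part — so your argument is logically sound and more economical for this particular lemma. What it does not buy you is the explicit decomposition \eqref{eq:def_Seo}--\eqref{eq:def_Soe}, which the paper derives here precisely because it is reused immediately in the next lemma to compute $f_{10}$ in closed form; the paper's choice thus serves a dual purpose. Part (ii) in both your proof and the paper's is a direct invocation of Theorem \ref{th:solvability condition} (your additional mention of Lemma \ref{lem:solutions_homogeneous_equation} is harmless, since that lemma is already baked into part (iii) of that theorem).
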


\begin{proof}
(i) the solvability conditions \eqref{eq:solvability condition} take the following form (in the sequel we skip the sub-indices '0'):
\begin{equation} \label{eq:solvability_order1}
\int_{{\mathbb S}^{d-1}} (\omega\cdot \nabla_x) (\rho M_{u})(\omega)\, d\omega =0 \quad \mbox{ and } \int_{{\mathbb S}^{d-1}} (\omega\cdot \nabla_x) (\rho M_u)(\omega)\, \vec{\psi}_u (\omega)\, d\omega=0.
\end{equation}
Now, we note that
$$\frac{\partial \log M_u}{\partial u}= \kappa (\omega \cdot u) \omega_\perp,$$
where $\omega_\perp$ is defined in \eqref{eq:space_decomposition}. Then, for any linear first-order differential operator with respect to $(t,x)$, we have
\begin{equation}
D(\rho M_u) = \rho M_u \left[ D(\log \rho) + \kappa (\omega \cdot u) \omega_\perp \cdot Du \right].
\label{eq:formul_deriv}
\end{equation}
In particular, for $D=(\omega\cdot \nabla_x)$ and using the decomposition \eqref{eq:space_decomposition} again, we have:
\begin{eqnarray}
 	(\omega \cdot \nabla_x) (\rho M_u) &=& \rho M_u \Big[(\omega\cdot u) (u\cdot \nabla_x) \log \rho + (\omega_\perp \cdot \nabla_x) \log \rho \nonumber\\
 	&& + \kappa (\omega \cdot u)^2 \omega_\perp \cdot [(u\cdot \nabla_x) u]+\kappa (\omega \cdot u) \omega_\perp \cdot [(\omega_\perp \cdot \nabla_x) u]
 	\Big] \nonumber\\
 	&=&\rho M_u \,  (S_{e,o} + S_{o,e}), \label{eq:transport_equilibrium_parity}
 \end{eqnarray}
 where 
 \begin{eqnarray}
 S_{e,o}  &:=& (\omega_\perp \cdot \nabla_x) \log \rho + \kappa (\omega\cdot u)^2 \omega_\perp \cdot[(u\cdot \nabla_x) u]\in \sigma_{e,o}, \label{eq:def_Seo}\\
 S_{o,e} &:=& (\omega \cdot u) (u\cdot \nabla_x) \log \rho + \kappa (\omega \cdot u) \omega_\perp \cdot [(\omega_\perp\cdot \nabla_x) u]\in \sigma_{o,e}, \label{eq:def_Soe}
 \end{eqnarray}
(see the definitions of the spaces $\sigma_{e,o}$ and $\sigma_{o,e}$ in Sec. \ref{sec:preliminaries_notation}). Indeed, $S_{e,o}$ is even in $(\omega\cdot u)$ and odd in $\omega_\perp$ and the opposite holds for $S_{o,e}$. Since $\rho M_u$ is even both in $(\omega\cdot u)$ and $\omega_\perp$, its product with $S_{e,o}$ and $S_{o,e}$ does not change the parity. From this we conclude that the first integral of the solvability condition \eqref{eq:solvability_order1} is indeed zero by antisymmetry. Now, since $\vec{\psi}_u\in \sigma_{o,o}$, we have $\vec{\psi}_u \, \rho M_u \, (S_{e,o} + S_{o,e}) \in \sigma_{o,e} +\sigma_{e,o}$ and again, the integral with respect to $\omega$ is zero, which shows the second compatibility condition \eqref{eq:solvability_order1} and ends the proof of (i). (ii) is a direct application of Theorem \ref{th:solvability condition}. \end{proof}

Next we compute the explicit form of the function $f_{10}$, as it will be useful in the sequel. We remind the following notations: $\nabla_x u$ denotes the gradient tensor of the vector field $u$, i.e. $(\nabla_x u)_{ij} = \partial_{x_i} u_j$, $\forall i, \, j \in \{1, \ldots, d \}$ and for two order-two tensors ${\mathcal A} = ({\mathcal A}_{ij})_{(i,j) \in \{1, \ldots, d \}^2}$ and ${\mathcal B} = ({\mathcal B}_{ij})_{(i,j)}$, we denote by ${\mathcal A}:{\mathcal B}$ their contracted product, i.e. ${\mathcal A}:{\mathcal B} = \sum_{i,j=1}^d {\mathcal A}_{ij} {\mathcal B}_{ij}$. 
 
\begin{lemma}
\label{lem:shape_of_f1}[Determination of $f_{10}$]
The unique solution $f_{10}$ to Eq. \eqref{eq:order1} in $H^1(\mathbb{S}^{d-1})$ satisfying the two conditions \eqref{eq:condforunique} is given by 
\begin{equation} \label{eq:shape_of_f1}
f_{10}= \rho M_u (T_{e,o}+T_{o,e}),
\end{equation}
where $T_{e,o} \in \sigma_{e,o}$ and $T_{o,e} \in \sigma_{o,e}$ (see the definitions of the spaces $\sigma_{e,o}$ and $\sigma_{o,e}$ in Sec. \ref{sec:preliminaries_notation}) are defined by
\begin{eqnarray}
 T_{e,o}&:=& \alpha(\omega) \cdot (\nabla_x \log \rho) + \kappa\, \beta(\omega) \, \cdot [(u\cdot \nabla_x) u], \label{eq:Soo}\\
 T_{o,e}&:=& \gamma(\omega) \,  (u\cdot \nabla_x) \log \rho + \kappa \,  \zeta(\omega) : (\nabla_x u), \label{eq:See}
\end{eqnarray}
with $\alpha$, $\beta$, $\gamma$ and $\zeta$ defined by: 
\begin{itemize}
\item $\alpha$: ${\mathbb S}^{d-1} \to {\mathbb R}^d$, $\omega \mapsto \alpha(\omega)$ is the unique (componentwise) solution in $H^1_0({\mathbb S}^{d-1})^d$ of
\begin{equation}
\bar \Gamma^*(\alpha,u) = \omega_\perp, 
\label{eq:barGam*al}
\end{equation}
where $\bar \Gamma^*$ is defined in \eqref{eq:vecGCI}. 
\item $\beta$: ${\mathbb S}^{d-1} \to {\mathbb R}^d$, $\omega \mapsto \beta(\omega)$ is the unique (componentwise) solution in $H^1_0({\mathbb S}^{d-1})^d$ of
\begin{equation}
\bar \Gamma^*(\beta,u) = (\omega \cdot u)^2 \, \omega_\perp. 
\label{eq:barGam*bet}
\end{equation}
\item $\gamma$: ${\mathbb S}^{d-1} \to {\mathbb R}$, $\omega \mapsto \gamma(\omega)$ is the unique solution in $H^1_0({\mathbb S}^{d-1})$ of
\begin{equation}
\bar \Gamma^*(\gamma,u) = (\omega \cdot u). 
\label{eq:barGam*gam}
\end{equation}
\item $\zeta$: ${\mathbb S}^{d-1} \to {\mathcal S}^d$, $\omega \mapsto \zeta(\omega)$, where ${\mathcal S}^d$ is the space of $d \times d$ symmetric matrices with coefficients in ${\mathbb R}$, is the unique solution in $H^1_0({\mathbb S}^{d-1})^{d(d+1)/2}$ of
\begin{equation}
\bar \Gamma^*(\zeta,u) = (\omega \cdot u) \, \omega_\perp \otimes \omega_\perp.  
\label{eq:barGam*zet}
\end{equation}
\end{itemize}
Furthermore, $\alpha$ to $\zeta$ take the following forms: 
\begin{eqnarray}
\alpha (\omega) &=& a(\omega \cdot u) \, \omega_\perp, \label{eq:espressal}\\
\beta (\omega) &=& b(\omega \cdot u) \, \omega_\perp, \label{eq:espressbet}\\
\gamma (\omega) &=& c(\omega \cdot u) , \label{eq:espressgam}\\
\zeta (\omega) &=& e(\omega \cdot u) \, \omega_\perp \otimes \omega_\perp + k(\omega \cdot u) \,  \, P_{u^\perp}, \label{eq:espresszet}
\end{eqnarray}
where $a$, $b$, $c$, $e$ and $k$ have been defined in Section \ref{subsec:main_result}. With the parities of $a$, $b$, $c$, $e$, $k$ stated in Section \ref{subsec:main_result}, we have $ \alpha, \, \beta \in \sigma_{e,o}$, $\gamma, \, \zeta \in \sigma_{o,e}$. The expressions of $T_{e,o}$ and $T_{o,e}$ are consequently given by 
\begin{eqnarray}
 T_{e,o}&=& a(\omega \cdot u) \, \omega_\perp \cdot (\nabla_x \log \rho) + \kappa\, b(\omega \cdot u) \, \omega_\perp \cdot [(u\cdot \nabla_x) u], \label{eq:Teo}\\
 T_{o,e}&=& c(\omega \cdot u) \,  (u\cdot \nabla_x) \log \rho + \kappa \, e(\omega \cdot u) \, (\omega_\perp \otimes \omega_\perp) : (\nabla_x u) + \kappa \, k(\omega \cdot u) \, (\nabla_x \cdot u).  \label{eq:Toe}
\end{eqnarray}
The general solution $f_1$ to Eq. \eqref{eq:order1} in $H^1(\mathbb{S}^{d-1})$ is given by 
\begin{equation} \label{eq:shape_of_f1_generic}
f_1= f_{10} + \hat f_1 = \rho M_u (T_{e,o}+T_{o,e}) + \hat f_1, 
\end{equation}
with $\hat f_1$ given by \eqref{eq:expressbarf1}. 
\end{lemma}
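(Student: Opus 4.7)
The plan is to look for $f_{10}$ in the form $f_{10} = \rho M_u T$ with $T = T_{e,o} + T_{o,e}$ and to reduce Eq. \eqref{eq:barGamf1} to a decoupled set of auxiliary problems on the sphere. Since $\rho$ does not depend on $\omega$, definitions \eqref{eq:def_bar_gamma} and \eqref{eq:bargGamstar} yield $\bar\Gamma(\rho M_u T, u) = D\,\rho\,M_u\,\bar\Gamma^*(T, u)$, while by \eqref{eq:transport_equilibrium_parity} the source is $\rho M_u (S_{e,o} + S_{o,e})$. Since $M_u \in \sigma_{e,e}$, the operator $\bar\Gamma^*(\cdot, u)$ preserves the parity decomposition into $\sigma_{e,o}$ and $\sigma_{o,e}$, so matching by parity decouples the equation into $\bar\Gamma^*(T_{e,o}, u) = S_{e,o}/D$ and $\bar\Gamma^*(T_{o,e}, u) = S_{o,e}/D$, the factor $D$ being absorbed harmlessly into the definitions of $\alpha, \beta, \gamma, \zeta$ below.

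Each term in $S_{e,o}, S_{o,e}$ factorizes as the product of a macroscopic coefficient ($\nabla_x \log\rho$, $(u\cdot \nabla_x)u$, $(u\cdot\nabla_x)\log\rho$, or $\nabla_x u$) and a pure $\omega$-profile. By linearity of $\bar\Gamma^*$ in its first argument, I look for $T_{e,o}, T_{o,e}$ of the forms \eqref{eq:Soo}, \eqref{eq:See}, where $\alpha, \beta, \gamma, \zeta$ are the (componentwise) solutions of \eqref{eq:barGam*al}--\eqref{eq:barGam*zet}. The solvability condition $\int_{\mathbb{S}^{d-1}} M_u \, (\text{source})\, d\omega = 0$ (the adjoint version of the condition of Lemma \ref{lem:solution_Gamma_bar}) is satisfied in each case by antisymmetry (in $\omega_\perp$ for $\alpha, \beta$; in $(\omega\cdot u)$ for $\gamma, \zeta$). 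Existence and componentwise uniqueness in $H^1_0(\mathbb{S}^{d-1})$ then follow by the Lax--Milgram argument already used in the proof of Proposition \ref{lem:GCI}, hence $T_{e,o}, T_{o,e} \in \dot H^1_0(\mathbb{S}^{d-1})$.

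The explicit tensorial expressions \eqref{eq:espressal}--\eqref{eq:espresszet} are obtained by exploiting the fact that $\bar\Gamma^*(\cdot, u)$ commutes with the stabilizer of $u$ in $SO(d)$ while each source in \eqref{eq:barGam*al}--\eqref{eq:barGam*zet} transforms covariantly as a specific representation of $O(d-1)$; the unique solutions must therefore have the matching axisymmetric structure. Substituting these Ans\"atze and using the identities $\nabla_\omega(\omega\cdot u) = P_{\omega^\perp}u$, $|\nabla_\omega(\omega\cdot u)|^2 = 1 - (\omega\cdot u)^2$, $\Delta_\omega(\omega\cdot u) = -(d-1)(\omega\cdot u)$ already employed in the proof of Proposition \ref{lem:GCI}, each problem reduces to a linear second-order ODE in $r = \omega \cdot u$. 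An integrating factor of the form $(1-r^2)^{\mu}\exp(\kappa r^2/2)$ then puts the ODE in divergence form, recovering \eqref{eq:ode_h} (with the right-hand sides appropriate for $a$ and $b$), \eqref{eq:def_c} (for $c$), and \eqref{eq:def_e} (for $e$). Existence and uniqueness in the stated weighted spaces ${\mathcal H}_{\mu_1,\mu_2}$, together with the parity and sign properties, then follow verbatim from the variational and maximum-principle arguments of Proposition \ref{lem:GCI}.

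The most delicate step will be the $\zeta$ equation: its source $(\omega\cdot u)\,\omega_\perp\otimes\omega_\perp$ is reducible under $O(d-1)$ into a traceless part and a multiple of $P_{u^\perp}$, so a single-profile Ansatz $e(\omega\cdot u)\,\omega_\perp\otimes\omega_\perp$ fails to close --- applying $\bar\Gamma^*$ to it produces, alongside a $\omega_\perp\otimes\omega_\perp$-type term, a residual multiple of $P_{u^\perp}$. This obstruction forces the two-profile Ansatz \eqref{eq:espresszet} and, after projecting onto the two tensorial components, a one-way coupled system: $e$ satisfies \eqref{eq:def_e} independently of $k$, while $k$ satisfies a $c$-type equation sourced by $e$, which is precisely the equation used to define $k$ in Section \ref{subsec:main_result}. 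Once $\alpha, \beta, \gamma, \zeta$ are assembled, the Ansatz yields \eqref{eq:shape_of_f1} for $f_{10}$, and the general form \eqref{eq:shape_of_f1_generic} is then an immediate consequence of Lemma \ref{lem:existence_of_f1}(ii).
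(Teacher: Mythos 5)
Your proposal is correct and follows essentially the same route as the paper's proof: reducing \eqref{eq:barGamf1} via the change of unknown $f_{10}=\rho M_u T$ to $\bar\Gamma^*(T,u)=$ source, factoring the source into the four auxiliary profile problems \eqref{eq:barGam*al}--\eqref{eq:barGam*zet}, applying Lax--Milgram, and reducing each to an ODE in $r=\omega\cdot u$ through an axisymmetric Ansatz (your equivariance framing being a conceptual repackaging of the paper's direct plug-in verification). Your observation about the $\zeta$ problem --- that $\bar\Gamma^*$ acting on the single-profile Ansatz $e\,\omega_\perp\otimes\omega_\perp$ produces a residual $P_{u^\perp}$ term, forcing the two-profile form and the one-way coupling in which $e$ solves \eqref{eq:def_e} independently and then $k$ solves a $c$-type equation sourced by $-2e$ --- matches exactly the paper's polarization-identity computation.
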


\medskip
\begin{proof}
We perform the explicit resolution of \eqref{eq:barGamf1}, which we formally write as $\bar \Gamma (f_1,u) = g_1$, with $g_1 = (\omega \cdot \nabla_x) (\rho M_u)$ (we write $f_1$ for $f_{10}$ to keep the notations simple). First, we note that 
this equation is equivalent, through the change of functions $f_1 = M_u \, \tilde f_1$, $g_1 = M_u \, \tilde  g_1$ to the equation 
\begin{equation}
\bar \Gamma^* (\tilde f_1,u) = \tilde g_1. 
\label{eq:barGam*tilf1}
\end{equation}
From the proof of Lemma \ref{lem:existence_of_f1}, we get that 
$$ \tilde g_1 = \rho \, (S_{e,o} + S_{o,e}), $$ 
with $S_{e,o}$ and $S_{o,e}$ given by \eqref{eq:def_Seo}, \eqref{eq:def_Soe}. Furthermore, we know from the proof of Proposition \ref{lem:GCI} that \eqref{eq:barGam*tilf1} is uniquely solvable in $H^1_0(\mathbb{S}^{d-1})$ provided that 
\begin{equation}
\int_{{\mathbb S}^{d-1}} \tilde g_1 \, M_u \,  d \omega = 0.
\label{eq:inttilg1=0}
\end{equation} 
This condition is satisfied thanks to Lemma \ref{lem:existence_of_f1}. Furthermore, each of the terms involved in the equations \eqref{eq:def_Seo} and \eqref{eq:def_Soe} satisfies this  condition separately. In other words, the following functions of $\omega$: $\omega_\perp$, $(\omega \cdot u)^2 \omega_\perp$, $(\omega \cdot u)$ and $(\omega \cdot u) \, \omega_\perp \otimes \omega_\perp$, satisfy the solvability condition \eqref{eq:inttilg1=0} (in the case of vector or tensor quantities, this is meant componentwise). This provides the existence and uniqueness in $H^1_0(\mathbb{S}^{d-1})$ of the solutions $\alpha$, $\beta$, $\gamma$, $\zeta$ to Eqs. \eqref{eq:barGam*al} to \eqref{eq:barGam*zet}. By linearity, we have 
$$
\tilde f_1 = \rho (T_{e,o}+T_{o,e}),
$$
with  $T_{e,o}$ and $T_{o,e}$ given by \eqref{eq:Soo}, \eqref{eq:See}. 

Now, each of the Eqs. \eqref{eq:barGam*al} to \eqref{eq:barGam*zet} is of the same form as Eq. \eqref{eq:vecGCI} for the GCI and we can use the same methodology as in the proof of Proposition \ref{lem:GCI} to derive the expressions  \eqref{eq:espressal}-\eqref{eq:espresszet} of $\alpha$ to $\zeta$. More precisely, the computations leading to the expressions \eqref{eq:espressal} and \eqref{eq:espressbet} of $\alpha$ and $\beta$ are identical to those leading to the expression \eqref{eq:AnsatzpsiB} of $\psi_B$ in the proof of Proposition \ref{lem:GCI} and are omitted. 

The same methodology leads to the expression \eqref{eq:espressgam} of $\gamma$ with $c$ a solution to \eqref{eq:def_c}. It is readily shown that this equation has a unique solution in $\dot {\mathcal H}_{0,\frac{d-1}{2}}$ (defined in \eqref{eq:dotH0}). Indeed, \eqref{eq:def_c} has the following variational formulation: find $c \in \dot {\mathcal H}_{0,\frac{d-1}{2}}$ such that 
\begin{equation}
\int_{-1}^1 (1-r^2)^{\frac{d-1}{2}} \, e^{\frac{\kappa r^2}{2}} \,  c'(r) \, \ell'(r) \, dr = - \int_{-1}^1 r \, (1-r^2)^{\frac{d-2}{2}}  \, e^{\frac{\kappa r^2}{2}} \, \ell(r) \, dr, \quad \forall \ell \in \dot {\mathcal H}_{0,\frac{d-1}{2}}.  
\label{eq:varforc}
\end{equation}
The fact that it is equivalent to take test functions in ${\mathcal H}_{0,\frac{d-1}{2}}$  or in $\dot {\mathcal H}_{0,\frac{d-1}{2}}$ follows from the fact that $\int_{-1}^1 r \, (1-r^2)^{\frac{d-1}{2}} \, e^{\frac{\kappa r^2}{2}} \, dr =0$ by a similar reasoning as that done in the proof of Proposition \ref{lem:GCI} to solve \eqref{eq:GCIchar1}. It is easy to prove a Poincare-Wirtinger inequality for $\dot {\mathcal H}_{0,\frac{d-1}{2}}$, namely, there exists a constant $C>0$ such that 
$$ \int_{-1}^1 \ell^2 \, dr \leq C \int_{-1}^1 (\ell')^2 \, (1-r^2)^{\frac{d-1}{2}} \, dr, \qquad \forall \ell \in \dot {\mathcal H}_{0,\frac{d-1}{2}}, $$
(the proof is left to the reader). So, the bilinear form at the left hand side of \eqref{eq:varforc} is continuous and coercive on $\dot {\mathcal H}_{0,\frac{d-1}{2}}$, while the linear form at the right-hand side of \eqref{eq:varforc} is continuous on the same space. Therefore, Lax-Milgram's theorem applies and provides the unique solvability of \eqref{eq:def_c} in $\dot {\mathcal H}_{0,\frac{d-1}{2}}$. Thanks to uniqueness, we also get that $c$ is odd. 
It is also straightforward to show that $\tilde\gamma$ constructed from the so-defined $c$ through \eqref{eq:espressgam} belongs to $H^1_0({\mathbb S}^{d-1})$. By the uniqueness to the solution of \eqref{eq:barGam*tilf1} in $H^1_0({\mathbb S}^{d-1})$, we deduce that $\tilde \gamma = \gamma$ and consequently that \eqref{eq:espressgam} holds true.

The methodology requires a small adaptation in the case of $\zeta$ because of the need for two functions of $(\omega \cdot u)$ namely $e$ and $k$. Let $\tilde \zeta$ be constructed from $e$ and $k$ through the Ansatz \eqref{eq:espresszet} and let $B$ be a vector in ${\mathbb R}^d$. Then, thanks to the polarization identity, we only need to show that 
\begin{equation}
\bar \Gamma^* \big(B \cdot (\tilde \zeta B), u \big) = (\omega \cdot u) \, (\omega_\perp \cdot B)^2, \quad \forall B \in {\mathbb R}^d, 
\label{eq:barGam*CzetB}
\end{equation}
and that $B \cdot (\tilde \zeta B) \in H^1_0({\mathbb S}^{d-1})$. We only need to show \eqref{eq:barGam*CzetB} for an orthonormal basis of ${\mathbb R}^d$. We first start with $B=u$ and notice that, in this case, both $u \cdot (\tilde \zeta u)$ and the right-hand side of \eqref{eq:barGam*CzetB} are zero, which shows that \eqref{eq:barGam*CzetB} is satisfied in this case. Now, we take any unit vector $B$ such that $(u \cdot B)=0$. Then $ B \cdot (\tilde \zeta B) = e(\omega \cdot u) \, (\omega_\perp \cdot B)^2 + k(\omega \cdot u)$ and computations similar to those of the proof of Proposition \ref{lem:GCI} give:
\begin{eqnarray*}
\bar \Gamma^*\big(B \cdot (\tilde \zeta B),u\big) &=& (\omega_\perp \cdot B)^2 \Big\{ e''(\omega \cdot u) \,  \big(1 - (\omega \cdot u)^2\big) \\
&&+ e'(\omega \cdot u) \, (\omega \cdot u) \, \big[ \kappa  \,  \big(1 - (\omega \cdot u)^2\big) - (d+3) \big] \\  
&&+ e(\omega \cdot u) \big[ - 2 \, \kappa \, (\omega \cdot u)^2 - 2d \big] \Big\} \\
&+& 2 e(\omega \cdot u)  + \bar \Gamma^*\big(k(\omega \cdot u) ,u\big) \\
&=& (\omega \cdot u) \, (\omega_\perp \cdot B)^2. 
\end{eqnarray*}
Since this equation must be satisfied for all values of $(\omega_\perp \cdot B)^2$, this requires
the following two equations to be satisfied: 
\begin{eqnarray*}
&& (1-r^2) e'' + \big( \kappa \, (1-r^2) - (d+3) \big) \, r e' - \big( 2 \kappa \, r^2 + 2d \big) \, e = r,  \\
&& \bar \Gamma^*\big(k(\omega \cdot u) ,u\big) = - 2 e(\omega \cdot u). 
\end{eqnarray*}
The first equation, after rearrangement, gives \eqref{eq:def_e}, while the second one (which is similar but for the right-hand side to the equation for $c$) shows that $k$ satisfies Eq. \eqref{eq:def_c} with right-hand side $- 2 e(r) \, (1-r^2)^{(d-2)/2}  \exp( \frac{\kappa r^2}{2})$. By the same arguments as for \eqref{eq:ode_h}, Eq. \eqref{eq:def_e} is uniquely solvable in  ${\mathcal H}_{\frac{d+1}{2},\frac{d+3}{2}}$ and $e$ is odd and nonpositive for $r \geq 0$. Similarly, by the same arguments as for $c$ above, the equation for $k$ is uniquely solvable in $\dot {\mathcal H}_{0,\frac{d-1}{2}}$. We also have that $k(\omega \cdot u) \in H^1_0({\mathbb S}^{d-1})$. So, it remains to show that $ e(\omega \cdot u) \, (\omega_\perp \cdot B)^2 \in  H^1_0({\mathbb S}^{d-1})$. This is done using the same arguments as for $h$ in the proof of Proposition \ref{lem:GCI} and is omitted. 

Inserting \eqref{eq:espressal} and \eqref{eq:espressbet} into \eqref{eq:Soo} immediately gives \eqref{eq:Teo}. To show that the insertion of \eqref{eq:espressgam} and \eqref{eq:espresszet} into \eqref{eq:See} gives \eqref{eq:Toe}, we need to show that 
\begin{equation}
P_{u^\perp}:(\nabla_x u) = \nabla_x \cdot u.
\label{eq:divu}
\end{equation}
Indeed, in a cartesian coordinate system, the matrix $\nabla_x u$ has entries $(\nabla_x u)_{ij} = \partial_{x_i} u_j$. Then, the vector $(\nabla_x u) u$ has components in this basis:  
$$ ((\nabla_x u) u)_i =  \sum_{j=1}^d (\partial_{x_i} u_j) u_j = \frac{1}{2} \sum_{j=1}^d (\partial_{x_i} u_j^2) = \frac{1}{2} \partial_{x_i} (|u|^2) = 0, $$
since $|u|=1$, so that 
\begin{equation} 
(\nabla_x u) u = 0. 
\label{eq:nauu=0} 
\end{equation}
Therefore, we have 
\begin{equation}
(u \otimes u) : \nabla_x u = u \cdot ((\nabla_x u) u) = 0. 
\label{eq:otonu}
\end{equation}
Now, $\mbox{Id} = P_{u^\perp} + u \otimes u$. So, we have 
$$ \nabla_x \cdot u = \mbox{Id} : \nabla_x u = (P_{u^\perp} + u \otimes u) : \nabla_x u = P_{u^\perp} : \nabla_x u, $$
which yields \eqref{eq:divu}.

Finally, going back to $f_1$ from $\tilde f_1$, we note that $\rho M_u (T_{e,o}+T_{o,e})$ satisfies the two conditions \eqref{eq:condforunique} (the second one because $g_1$ satisfies the second solvability condition \eqref{eq:solvability condition}). So, returning to the notation $f_{10}$, we have proven that  the unique solution $f_{10}$ to Eq. \eqref{eq:order1} satisfying the two conditions \eqref{eq:condforunique} is given by \eqref{eq:shape_of_f1}. Consequently, the generic solution to \eqref{eq:order1} is given by \eqref{eq:shape_of_f1_generic}. This ends the proof.  \end{proof}

\subsection{Solvability conditions for Eq. \eqref{eq:order2}}

The solvability conditions \eqref{eq:solvability condition} applied to \eqref{eq:order2} will yield the evolution equations for $\rho=\rho(t,x)$ and $u=u(t,x)$. These conditions are written: 
\begin{subequations} \label{eq:solvability_equations}
\begin{numcases}{}
\int_{{\mathbb S}^{d-1}} \Big[\partial_t f_0 + (\omega\cdot \nabla_x) (f_{10} + \hat f_1) - \frac{1}{2}D^2_{f_0}\Gamma(f_1, f_1) \Big]\, d\omega =0, \label{eq:solvability_equations1}\\
\int_{{\mathbb S}^{d-1}} \Big[\partial_t f_0 + (\omega\cdot \nabla_x) (f_{10} + \hat f_1) - \frac{1}{2}D^2_{f_0}\Gamma(f_1, f_1) \Big]\, \vec \psi_{u}\, d\omega=0, \label{eq:solvability_equations2}
\end{numcases}
\end{subequations}
where $f_0$, $f_1$, $f_{10}$ and $\hat f_1$ are given by \eqref{eq:f_0} \eqref{eq:shape_of_f1_generic}, \eqref{eq:shape_of_f1} and \eqref{eq:expressbarf1} respectively. In the first two forthcoming subsections, we show that the terms involving $D^2_{f_0}\Gamma$ and $\hat f_1$ do not contribute to the result. Then, we will consider sequentially \eqref{eq:solvability_equations1} and \eqref{eq:solvability_equations2} and show that they yield the mass conservation equation \eqref{eq:macro_equations_rho} and the evolution equation for $u$ 
\eqref{eq:macro_equations_u} respectively.

\subsubsection{The terms involving $D^2_{f_0}\Gamma$}

\begin{lemma}
\label{lem:second_order_terms}
Let $f_0 = \rho M_u$. For any function $f_1 = f_1(\omega)$, we have
\begin{equation}
\int_{{\mathbb S}^{d-1}} D^2_{f_0}\Gamma(f_1, f_1) \ d\omega =0, \qquad \int_{{\mathbb S}^{d-1}}  D^2_{f_0}\Gamma(f_1, f_1) \, \vec\psi_u \ d\omega =0,
\label{eq:intD2Gamma}
\end{equation}
\end{lemma}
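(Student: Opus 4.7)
\smallskip
The plan is to prove both identities by a second-order Taylor expansion argument, extending the idea used in Lemma~\ref{lem:Df_against_GCI}. The two key facts are that $\Gamma(f)$ has divergence form on $\mathbb{S}^{d-1}$ and that $\int \Gamma(f)\,\vec\psi_{u_f}\,d\omega = 0$ for every $f$, a consequence of \eqref{eq:keyCGI}.

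First I would set $f^\varepsilon = f_0 + \varepsilon f_1$, denote by $u^\varepsilon = u_{f^\varepsilon}$ the leading eigenvector of $Q_{f^\varepsilon}$ chosen continuously in $\varepsilon$ with $u^0 = u$, and use Lemma~\ref{lem:equation_u_1} to write $u^\varepsilon = u + \varepsilon u_1 + O(\varepsilon^2)$ with $u_1$ given by \eqref{eq:system_u1}. Since $\Gamma(f^\varepsilon)$ is a divergence for every $\varepsilon$, $\int \Gamma(f^\varepsilon)\,d\omega = 0$ identically; Taylor-expanding and using $\Gamma(f_0) = 0$, the order-$\varepsilon^2$ coefficient gives immediately the first identity of \eqref{eq:intD2Gamma}.

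For the second identity, the GCI property $\int \Gamma(f^\varepsilon)\,\vec\psi_{u^\varepsilon}\,d\omega = 0$ holds for all $\varepsilon$, with both factors depending on $\varepsilon$. Expanding the integrand in $\varepsilon$ via the Leibniz rule and using $\Gamma(f_0) = 0$ together with Lemma~\ref{lem:Df_against_GCI}, the order-$\varepsilon^2$ coefficient yields
\begin{equation*}
\int D^2_{f_0}\Gamma(f_1,f_1)\,\vec\psi_u\,d\omega + 2 \int D_{f_0}\Gamma(f_1)\,D_u\vec\psi_u(u_1)\,d\omega = 0,
\end{equation*}
where $D_u\vec\psi_u(u_1)$ is the derivative of the vector GCI at $u$ in the direction $u_1 \in \{u\}^\perp$.

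The main work -- and the principal obstacle -- is then to show that the second cross-term vanishes. I would handle this by using Lemma~\ref{lem:linearisation} to rewrite $D_{f_0}\Gamma(f_1) = \bar\Gamma(f_1,u) - \kappa\nabla_\omega\cdot[\rho M_u \nabla_\omega((\omega\cdot u)(\omega\cdot u_1))]$, transferring derivatives via integration by parts on $\mathbb{S}^{d-1}$, and then invoking the defining identity $\bar\Gamma^*(\vec\psi_u,u)(\omega) = (\omega\cdot u)\,P_{u^\perp}\omega$ from \eqref{eq:vecGCI} together with the explicit form \eqref{eq:GCI_explicit} of $\vec\psi_u$. The resulting expression consists of specific moments of $f_1$ weighted against functions that, after the parity decomposition of Section~\ref{sec:preliminaries_notation}, split into contributions which are either odd in $\omega_\perp$ or odd in $\omega\cdot u$ and hence vanish by antisymmetry, plus residual terms that cancel pairwise because $u_1 = u_1[f_1]$ is tied to $f_1$ precisely through the moment $P_{u^\perp}(Q_{f_1}u)$ by \eqref{eq:system_u1}. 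Structurally, the cancellation is dictated by differentiating the eigenvector constraint $P_{u_f^\perp}(Q_f u_f) = 0$ in the direction $f_1$, which produces exactly the coefficient relations needed to close the argument.
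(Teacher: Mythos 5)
Your overall strategy matches the paper's: extend the first-order Taylor-expansion argument of Lemma~\ref{lem:Df_against_GCI} to second order, using the divergence form of $\Gamma$ for the first identity and the GCI identity $\int\Gamma(f^\varepsilon)\vec\psi_{u^\varepsilon}\,d\omega=0$ for the second. The reduction to showing that the cross-term $2\int D_{f_0}\Gamma(f_1)\,D_u\vec\psi_u(u_1)\,d\omega$ vanishes is also correct.

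The gap is in your plan for showing this cross-term vanishes. After integrating $\bar\Gamma(f_1,u)$ by parts against $D_u\vec\psi_u(u_1)$, what appears is $\bar\Gamma^*\bigl(D_u\vec\psi_u(u_1),u\bigr)$, \emph{not} $\bar\Gamma^*(\vec\psi_u,u)$, so the defining identity \eqref{eq:vecGCI} cannot be applied directly. To use it you would have to differentiate the identity $\bar\Gamma^*(\vec\psi_{u},u)(\omega)=(\omega\cdot u)\,P_{u^\perp}\omega$ with respect to $u$ in the direction $u_1$, which produces an extra commutator term $\bigl[D_u\bar\Gamma^*(\cdot,u)(u_1)\bigr](\vec\psi_u)$ and makes the subsequent bookkeeping considerably heavier than what you describe. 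The paper sidesteps this by exploiting the decomposition $f_1=f_{10}+\hat f_1$ of \eqref{eq:shape_of_f1_generic}: $D_{f_0}\Gamma(\hat f_1)=0$ because $\hat f_1$ is in the kernel (Lemma~\ref{lem:solutions_homogeneous_equation}), and $D_{f_0}u_f(f_{10})=0$ because $f_{10}$ satisfies the moment condition \eqref{eq:intfomu0omper}. What is left is $D_{f_0}\Gamma(f_{10})=(\omega\cdot\nabla_x)f_0\in\sigma_{e,o}+\sigma_{o,e}$ paired against $D_u\vec\psi_u(\hat u_1)\in\sigma_{e,e}+\sigma_{o,o}$ (from the explicit expansion \eqref{eq:expansion_GCI}), and the integral vanishes by a clean antisymmetry argument. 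Note that the parity of $D_{f_0}\Gamma(f_{10})$ is inherited from the specific source $(\omega\cdot\nabla_x)f_0$ of the $\mathcal O(\varepsilon)$ problem, so the splitting into $f_{10}$ and $\hat f_1$ is not merely convenient but essential. Your closing remark that the residual cancellation is ``dictated by differentiating the eigenvector constraint'' is correct in spirit---it is precisely what underlies the relation $\hat u_1=\kappa\rho_0\,u_1$ of \eqref{eq:hat_u}---but without making the decomposition and the resulting parities explicit, the argument is incomplete.
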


\begin{proof}
The proof is an extension to the second order in $\varepsilon$ of the proof of Lemma \ref{lem:Df_against_GCI}. The first identity \eqref{eq:intD2Gamma} is straightforward and its proof is omitted. The second identity \eqref{eq:intD2Gamma} requires a bit more care due to the need to expand $\vec \psi_{u^\varepsilon}$ and is developed below. Using the same notations as in the proof of Lemma \ref{lem:Df_against_GCI}, the second order term in the expansion of $\int_{{\mathbb S}^{d-1}}  \Gamma(f^\varepsilon ) \, \vec\psi_{u^\varepsilon}\ d\omega =0$ leads to 
\begin{eqnarray*} 0 &=& \int_{{\mathbb S}^{d-1}} \big[ \Gamma (f_0) \, D^2_{f_0} \vec \psi_f (f_1, f_1) + 2 D_{f_0} \Gamma (f_1) \, D_{f_0} \vec \psi_f (f_1) + D^2_{f_0}\Gamma(f_1, f_1)  \,  \vec \psi_f(f_0) \big] \, d \omega \\
&:=& J_1 + J_2 + J_3, 
\end{eqnarray*}
where we have introduced the map $f \mapsto  \vec \psi_f$ with $\vec \psi_f = \vec \psi_{u_f}$ and its derivatives up to the second order. To prove the second identity in \eqref{eq:intD2Gamma} it is enough to show that $J_1=J_2=0$. But $\Gamma(f_0) = 0$, so $J_1 = 0$. Now, we compute $J_2$ with $f_1$  given by \eqref{eq:shape_of_f1_generic}. Since  $D_{f_0} \Gamma (\hat f_1) = 0$ and  
$$ D_{f_0} \vec \psi_f (f_1) = D_{u_0} \vec \psi_u \big( D_{f_0} u_f (f_1) \big), $$
(where we now consider the map $u \mapsto \vec \psi_{u}$), we have 
$$ J_2 = 2 \int_{{\mathbb S}^{d-1}} D_{f_0} \Gamma (f_{10}) \, D_{u_0} \vec \psi_u \big( D_{f_0} u_f (f_{10} + \hat f_1) \big) \, d \omega.$$ 
But $u_{10}:= D_{f_0} u_f (f_{10})$ has already been computed at Lemma \ref{lem:equation_u_1} and is given by \eqref{eq:system_u1}. Since $f_{10}$ satisfies \eqref{eq:intfomu0omper}, we get $u_{10} = 0$. Similarly, thanks to \eqref{eq:Pu0Qhatf1u0}, we have $P_{u_0^\bot} (Q_{\hat f_1} u_0) = c \, \hat u_1 $ where $c$ denote generic constants in ${\mathbb R}$. Then, by \eqref{eq:system_u1}, we get $D_{f_0} u_f (\hat f_1) = c \hat u_1$ as well. Since $\hat u_1 \in \{u_0\}^\bot$ and with \eqref{eq:GCI}, we get:
\be \label{eq:expansion_GCI}
 D_{u_0} \vec \psi_{u} (\hat u_1) = h'(\omega \cdot u_0) \, (\omega_\perp \cdot \hat u_1) \, \omega_\perp - h(\omega \cdot u_0) \, (\omega_\perp \cdot \hat u_1) \, u_0 - h(\omega \cdot u_0) \,  (\omega \cdot u_0) \hat u_1. 
 \ee
We note that the first and third terms belong to $\sigma_{e,e}$ while the second one belongs to $\sigma_{o,o}$. Since $D_{f_0} \Gamma (f_{10}) \in \sigma_{o,e} + \sigma_{e,o}$ by the proof of Lemma \ref{lem:existence_of_f1}, we get that $J_2 = 0$ by antisymmetry. \end{proof}

\subsubsection{The terms involving $\hat f_1$}

\begin{lemma}
\label{lem:terms_hatf1}
Let $\hat  f_1$ be given by \eqref{eq:expressbarf1}. Then, we have 
\begin{equation}
\int_{{\mathbb S}^{d-1}} (\omega \cdot \nabla_x) \hat f_1 \ d\omega =0, \qquad \int_{{\mathbb S}^{d-1}}  (\omega \cdot \nabla_x) \hat f_1 \, \vec\psi_u \ d\omega =0,
\label{eq:intomnabhatf1}
\end{equation}
\end{lemma}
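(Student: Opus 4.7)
The plan is to prove both identities by a parity argument, analogous to the one used in the proof of Lemma \ref{lem:existence_of_f1}. The core assertion I would establish first is that $(\omega\cdot\nabla_x)\hat f_1 \in \sigma_{e,o}+\sigma_{o,e}$. Once this is in hand, the first equality in \eqref{eq:intomnabhatf1} follows by antisymmetry, since functions in $\sigma_{e,o}$ integrate to zero over ${\mathbb S}^{d-1}$ (odd in $\omega_\perp$) and so do functions in $\sigma_{o,e}$ (odd in $\omega\cdot u$). For the second equality, I would invoke Proposition \ref{lem:GCI}, which gives $\vec\psi_u\in\sigma_{o,o}$ componentwise. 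Since $\sigma_{o,o}\cdot\sigma_{e,o}\subset\sigma_{o,e}$ and $\sigma_{o,o}\cdot\sigma_{o,e}\subset\sigma_{e,o}$, each component of the integrand lies again in $\sigma_{e,o}+\sigma_{o,e}$, so the integral vanishes.

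To verify the parity claim, I would start from the decomposition $\hat f_1 = \hat\rho_1 M_u + (\omega\cdot u)\,(\omega_\perp\cdot\hat u_1)\,M_u$, where the identity $\omega\cdot\hat u_1=\omega_\perp\cdot\hat u_1$ uses $\hat u_1\in\{u\}^\perp$, so that $\hat f_1\in\sigma_{e,e}+\sigma_{o,o}$. Next I would apply the operator splitting $(\omega\cdot\nabla_x)=(\omega\cdot u)\,(u\cdot\nabla_x)+(\omega_\perp\cdot\nabla_x)$, together with the formula $(\omega\cdot\nabla_x)M_u=\kappa\,M_u\,(\omega\cdot u)\,\omega_\perp\cdot[(\omega\cdot\nabla_x)u]$ obtained from \eqref{eq:formul_deriv}. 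Two algebraic facts will keep the parity under control: first, $(\omega\cdot\nabla_x)u\in\{u\}^\perp$ (because $|u|=1$), which gives $\omega\cdot[(\omega\cdot\nabla_x)u]=\omega_\perp\cdot[(\omega\cdot\nabla_x)u]$; second, $u\cdot[(\omega\cdot\nabla_x)\hat u_1]=-\hat u_1\cdot[(\omega\cdot\nabla_x)u]$, obtained by differentiating the constraint $\hat u_1\cdot u\equiv 0$ along $\omega\cdot\nabla_x$.

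Using these identities, I would expand $(\omega\cdot\nabla_x)\hat f_1/M_u$ term by term, substituting $(\omega\cdot\nabla_x)u = (\omega\cdot u)\,(u\cdot\nabla_x)u+(\omega_\perp\cdot\nabla_x)u$ (and the analogous splittings for $\hat\rho_1$ and $\hat u_1$), and count the parity of each monomial in $(\omega\cdot u)$ and in $\omega_\perp$. Every resulting term will carry either an odd power of $(\omega\cdot u)$ together with an even number of $\omega_\perp$-contractions, or an even power of $(\omega\cdot u)$ together with an odd number of $\omega_\perp$-contractions, so that each term sits in $\sigma_{o,e}\cup\sigma_{e,o}$. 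Multiplying back by $M_u\in\sigma_{e,e}$ preserves the parity and establishes the claim.

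The main obstacle I foresee is not conceptual but combinatorial: one has to track carefully every cross-term arising from the $x$-dependence of the moving frame $(u(x),\{u(x)\}^\perp)$, since a priori such terms could leak into $\sigma_{e,e}$ or $\sigma_{o,o}$. The two orthogonality identities above are precisely what rules this out, but verifying it requires a patient case-by-case check of the roughly ten monomials that appear in the expansion.
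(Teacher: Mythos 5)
Your approach is correct, but it takes a genuinely different route from the paper's. You propose a direct parity count: show, using the identities $(\omega\cdot\nabla_x)u\perp u$ and $u\cdot[(\omega\cdot\nabla_x)\hat u_1]=-\hat u_1\cdot[(\omega\cdot\nabla_x)u]$ (the latter from differentiating $\hat u_1\cdot u\equiv 0$), that $(\omega\cdot\nabla_x)\hat f_1 \in \sigma_{e,o}+\sigma_{o,e}$ and then invoke antisymmetry, once for the integral against $1$ and once against $\vec\psi_u\in\sigma_{o,o}$. I checked the parity bookkeeping: the three groups of terms that arise (the factor from $(\omega\cdot\nabla_x)M_u$, the factor from $(\omega\cdot\nabla_x)(\omega\cdot u)$, and the factor from $(\omega\cdot\nabla_x)(\omega\cdot\hat u_1)$, each split along and across $u$) all land in $\sigma_{e,o}+\sigma_{o,e}$ as you claim, so the argument goes through. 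The paper instead treats the $\hat\rho_1 M_u$ piece by repeating the parity argument of Lemma~\ref{lem:existence_of_f1} and handles the $(\omega\cdot u)(\omega\cdot\hat u_1)M_{u_0}$ piece by a perturbation device: set $u^\varepsilon=(u_0+\varepsilon u_1)/|u_0+\varepsilon u_1|$ with $u_1=\hat u_1/\kappa$, use that the pair of identities holds with $(M_{u^\varepsilon},\vec\psi_{u^\varepsilon})$ for every $\varepsilon$, then extract the $O(\varepsilon)$ coefficient and kill the residual integral against the GCI-variation $A(u_0,u_1)$ by a separate parity check. The paper's route reuses Lemma~\ref{lem:existence_of_f1} as a black box and avoids tracking the $x$-dependence of $\hat u_1$ explicitly, at the cost of needing the expansion \eqref{eq:aux_inner_product_GCI_expansion} of $\vec\psi_{u^\varepsilon}$; your route is self-contained and more elementary, at the cost of the somewhat longer (but fully mechanical) combinatorial verification, which you correctly flag as the only nontrivial labour.
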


\begin{proof}
The fact that $(\omega \cdot \nabla_x) (\hat \rho_1 \, M_u)$ satisfies \eqref{eq:intomnabhatf1} is a reproduction of the proof of Lemma \ref{lem:existence_of_f1} with $\rho_0$ replaced by $\hat \rho_1$.  Consider $f_0= M_{u_0}$ for some $u_0 \in \mathbb{S}^{d-1}$ (notice that here we assume $\rho_0=1$ contrary to previous lemmas, this is because $\rho_0$ does not play a role in the proof). Consider $\hat f_1 = M_{u_0}(\omega\cdot u_0)(\omega\cdot \hat u)$ with $\hat u \in \{u_0\}^\perp$. Define $u_1= \hat u/\kappa$. Define 
$$u^\eps = \frac{u_0 +\eps u_1}{|u_0 + \eps u_1|}.$$
Note that $u^\eps = u_0 + \eps u_1+ \mathcal{O}(\eps^2)$.

We consider $M_{u^\eps}$ and $\vec\psi_{u^\eps}$. We have that\be \label{eq:non_expansion}
\int_{\mathbb{S}^{d-1}}[(\omega\cdot \nabla_x)M_{u^\eps}]\,
\lp
\begin{array}{c}
1\\
\vec \psi_{u^\eps}
\end{array}
\rp\, d\omega =0.
\ee
This was proven in Lemma \ref{lem:existence_of_f1}, see expression \eqref{eq:solvability_order1} (notice that $\rho=1$ does not change the result of the lemma). Now we expand $M_{u^\eps}$ and $\vec \psi_{u^\eps}$ in terms of $\eps$. Firstly, we have that
\be \label{eq:aux_M_expansion}
 M_{u^\varepsilon} = M_{u_0 + \varepsilon u_1+\mathcal{O}(\eps^2)} = M_{u_0} \, \Big( 1 + \varepsilon \left. \frac{\partial}{\partial u} (\log M_{u}) \right|_{u_0} (u_1) + {\mathcal O}(\varepsilon^2) \Big),
\ee
with 
$$ \left. \frac{\partial}{\partial u} (\log M_{u}) \right|_{u_0} (u_1) = \kappa \, (\omega \cdot u_0) \, (\omega \cdot u_1). $$
Now, using \eqref{eq:expansion_GCI} it also holds that for any $\beta \in \mathbb{S}^{d-1}$:
\be \label{eq:aux_inner_product_GCI_expansion}
 \beta\cdot \vec \psi_{u^\eps} =
\beta\cdot \lp \vec \psi_{u_0} + \eps A(u_0, u_1) \rp + \mathcal{O}(\eps^2),
\ee
where
$$A(u_0, u_1) := [(\omega\cdot u_0) u_1 + (\omega_\bot \cdot u_1)u_0] h(\omega\cdot u_0) + h'(\omega\cdot u_0)\, (\omega_\bot\cdot u_1)\omega_\bot. $$
Inserting \eqref{eq:aux_M_expansion} in the first line of  \eqref{eq:non_expansion} and considering the terms of order $\eps$ only (since the leading order term is zero) gives 
$$\int_{\mathbb{S}^{d-1}}(\omega\cdot \nabla_x)[\kappa (\omega\cdot u_0) (\omega\cdot u_1) M_{u_0})]\, d\omega=0,$$
and this corresponds, precisely, to the first identity in \eqref{eq:intomnabhatf1}.
 For the second line of  in \eqref{eq:non_expansion} using \eqref{eq:aux_M_expansion} and \eqref{eq:aux_inner_product_GCI_expansion}  and again keeping the order $\eps$ terms only gives
$$
\int_{\mathbb{S}^{d-1}}(\omega\cdot\nabla_x) [\kappa (\omega\cdot u_0)(\omega\cdot u_1) M_{u_0}]\,\vec \psi_{u_0} \, d\omega 
+ \int_{\mathbb{S}^{d-1}} [(\omega\cdot \nabla_x) M_{u_0}]\, A(u_0, u_1)  \, d\omega =0. 
$$
One can check that $(\omega\cdot \nabla_x)M_{u_0}\in \sigma_{e,o}+\sigma_{o,e}$ and that
$A(u_0, u_1)  \in \sigma_{e,e}+\sigma_{o,o}$. So, by parity, the second integral in the previous expression vanishes, and we get
$$
\int_{\mathbb{S}^{d-1}}(\omega\cdot\nabla_x) [\kappa (\omega\cdot u_0)(\omega\cdot u_1) M_{u_0}]\, \vec \psi_{u_0} \, d\omega =0,$$
which corresponds to the second identity in \eqref{eq:intomnabhatf1}.

 \end{proof}

\subsubsection{Equation for the density: explicit form of Eq. \eqref{eq:solvability_equations1}}
\label{subsubsec:density}

In this section, we prove Eq. \eqref{eq:macro_equations_rho}. We compute the various terms in \eqref{eq:solvability_equations1}. Since $\hat f_1$ does not have any contribution and there is no possible confusion, we denote $f_{10}$ by $f_1$ to simplify the notations. For the term involving $(\omega \cdot \nabla_x) f_1$, we notice that 
\begin{equation}
\int_{{\mathbb S}^{d-1}} (\omega\cdot \nabla_x) f_1 \, d\omega = \nabla_x \cdot \int_{{\mathbb S}^{d-1}} \omega \, f_1 \, d\omega = \nabla_x \cdot I, \qquad I= \int_{{\mathbb S}^{d-1}} \omega \, f_1 \, d\omega.
\label{eq:2ndterm}
\end{equation}
We compute, using Lem. \ref{lem:shape_of_f1}: 
\begin{eqnarray}
I =   \rho \int_{{\mathbb S}^{d-1}} \omega M_u (T_{o,e}+ T_{e,o}) \, d\omega =: I_1 + I_2. 
\label{eq:I}
\end{eqnarray}
We first compute $I_1$ using \eqref{eq:space_decomposition}, \eqref{eq:moments2}, \eqref{eq:divu},  and that $T_{o,e}\in \sigma_{o,e}$ and $\rho (u\cdot \nabla_x \log \rho)\, u = (u\cdot \nabla_x \rho)\, u$:
\begin{eqnarray}
I_1 &=& \rho \int_{{\mathbb S}^{d-1}} (\omega\cdot u) u \  T_{o,e} M_u \ d\omega \nonumber\\
&=& \rho u \, \int_{{\mathbb S}^{d-1}} M_u \, \, (\omega\cdot u) \Big[ c \,  (u\cdot\nabla_x)\log \rho  + \kappa \, e \,  \, (\omega_\perp \otimes \omega_\perp) :\nabla_x u + \kappa \, k \,  \, (\nabla_x \cdot u) \Big] \ d\omega \nonumber\\
&=& C_1 \, \big( (u \cdot \nabla_x) \rho \big) \, u + C_4 \, (\nabla_x \cdot u) \, \rho u  , 
\label{eq:I1_bis}
\end{eqnarray} 
where
\begin{eqnarray}
C_1 &=& \int_{{\mathbb S}^{d-1}} M_u \, c \, \, (\omega \cdot u) \, d \omega,  \label{eq:C1} \\
C_4 &=& \kappa \int_{{\mathbb S}^{d-1}} M_u \, (\omega \cdot u) \, \Big( e \, \,  \frac{1- (\omega \cdot u)^2}{d-1} + k \,  \Big) \,  d \omega. \label{eq:C4} 
\end{eqnarray}

Next we compute the integral $I_2$, again using the decomposition \eqref{eq:space_decomposition}, the identity \eqref{eq:moments2} and that  $T_{e,o}\in \sigma_{e,o}$:
\begin{eqnarray}
I_2 &=& \rho \int_{{\mathbb S}^{d-1}} \omega_\perp T_{e,o} M_u \ d\omega \nonumber \\
&=&\rho \int_{{\mathbb S}^{d-1}} M_u \, \omega_\perp \big[a \, \, \omega_\perp \cdot \nabla_x\log \rho + \kappa \, b \, \,  \omega_\perp \cdot \big((u\cdot \nabla_x) u\big) \big] \ d\omega \nonumber \\
&=& C_2 \, P_{u^\perp}\nabla_x \rho + C_3 \, \rho \, (u\cdot\nabla_x) u,
\label{eq:I2}
\end{eqnarray}
with 
\begin{eqnarray}
C_2 &=& \frac{1}{d-1} \int_{{\mathbb S}^{d-1}} M_u \, a \, \, (1- (\omega \cdot u)^2)  \, d \omega,  \label{eq:C2} \\
C_3 &=& \frac{\kappa}{d-1} \int_{{\mathbb S}^{d-1}} M_u \, b \, \, (1- (\omega \cdot u)^2)  \, d \omega.\label{eq:C3}
\end{eqnarray}
In the last equality of \eqref{eq:I2}, we have used that $((u \cdot \nabla_x)u) \cdot u = 0$ (since $|u|=1$) and so $P_{u^\perp}\big((u\cdot\nabla_x)u\big)=(u\cdot\nabla_x)u$.

Finally for the term involving $\partial_t f_0$, we notice, thanks to \eqref{eq:formul_deriv}, that
\begin{equation}
\partial_t(\rho M_u) = \rho M_u \left[ \partial_t(\log \rho) + \kappa (\omega \cdot u) \omega_\perp \cdot \partial_tu \right].
\label{eq:formul_partial_t}
\end{equation}
Then, integrating this formula with respect to $\omega$ and using \eqref{eq:moments1}, we get 
\begin{equation}
\int_{{\mathbb S}^{d-1}}  \partial_t (\rho M_u) \ d\omega = \partial_t \rho.
\label{eq:formul_1stterm}
\end{equation}
Then, inserting \eqref{eq:I1_bis}, \eqref{eq:I2}, \eqref{eq:formul_1stterm} into \eqref{eq:solvability_equations1} yields \eqref{eq:macro_equations_rho}. Formulas \eqref{eq:C1_bis}, \eqref{eq:C2_bis}, \eqref{eq:C3_bis}, \eqref{eq:C4_bis} are easily deduced from \eqref{eq:C1}, \eqref{eq:C2}, \eqref{eq:C3}, \eqref{eq:C4} through the use of \eqref{eq:int_spheriq}.

\subsubsection{Equation for the mean direction: explicit form of Eq. \eqref{eq:solvability_equations2}}

In this section, we prove Eq. \eqref{eq:macro_equations_u}, i.e. we compute the various terms involved in \eqref{eq:solvability_equations2}. Again, since $\hat f_1$ does not have any contribution, we denote $f_{10}$ by $f_1$. First, we consider term involving $\partial_t f_0$. From 
\eqref{eq:formul_partial_t}, we get that 
$$ \int_{{\mathbb S}^{d-1}} \partial_t (\rho M_{u}) \, \vec \psi_u \ d\omega = \rho \int_{{\mathbb S}^{d-1}} M_u \, h \, \, \omega_\perp \left[ \partial_t(\log \rho) + \kappa (\omega \cdot u) \omega_\perp \cdot \partial_tu \right] \, d \omega.
$$ 
The integral in factor of $\partial_t(\log \rho)$ vanishes by antisymmetry. The other integral is computed using \eqref{eq:moments2}. With the fact the $\partial_t u$ is orthogonal to $u$ this leads to 
\begin{equation} \label{eq:time_derivative}
\int_{{\mathbb S}^{d-1}} \partial_t (\rho M_{u})\bar \psi_u\ d\omega = C_0\, \rho \ \partial_t u ,
\end{equation}
where
$$ C_0 = \frac{\kappa}{d-1} \int_{{\mathbb S}^{d-1}} M_u \, h \, \, (\omega \cdot u) \,  (1 - (\omega \cdot u)^2) \, d \omega. $$

Next, we compute the term involving $(\omega \cdot \nabla_x) f_1$. By \eqref{eq:shape_of_f1}, we have
\begin{eqnarray*}
(\omega \cdot \nabla_x) f_1 &=& A+B,  \\
A &=& \big( (\omega \cdot \nabla_x) (\rho \, M_u) \big) \, (T_{e,o} + T_{o,e}), \\
B &=& \rho \, M_u \, (\omega \cdot \nabla_x) (T_{e,o} + T_{o,e}). 
\end{eqnarray*}
We first compute the contribution of $A$. The quantity $(\omega \cdot \nabla_x) (\rho \, M_u) $ is given by \eqref{eq:transport_equilibrium_parity} together with \eqref{eq:def_Seo}, \eqref{eq:def_Soe}. By inspection of the parity of the functions with respect to $\omega \cdot u$ and $\omega_\perp$, we get 
\begin{equation} 
\int_{{\mathbb S}^{d-1}} A \, \vec \psi_u \, d\omega = \int_{{\mathbb S}^{d-1}} \rho \, M_u \, [ S_{e,o} \, T_{o,e} + S_{o,e} \, T_{e,o}]  \, \vec \psi_u \, d\omega. 
\label{eq:intApsi=}
\end{equation}
Straightforward algebra using the expression \eqref{eq:GCI_explicit} for $\vec \psi_u$ and Lemma \ref{lem:moments} leads to 
\begin{eqnarray}
\int_{{\mathbb S}^{d-1}} \rho \, M_u \, S_{e,o} \, T_{o,e}  \, \vec \psi_u \, d\omega &=& B_{32} \, (u \cdot \nabla_x \log \rho) \, P_{u^\bot} \nabla_x \rho + B_{42} \, \Sigma : \big( (\nabla_x u) \otimes \nabla_x \rho \big) \nonumber \\
&+& B_{52} \, (\nabla_x \cdot u) \, P_{u^\bot} \nabla_x \rho + A_{11} \, (u \cdot \nabla_x \rho) \, (u \cdot \nabla_x) u \nonumber \\
&+& A_{12} \, \rho \, \Sigma : \big( (\nabla_x u) \otimes (u \cdot \nabla_x) u \big) + A_{13} \, \rho \, (\nabla_x \cdot u) \, (u \cdot \nabla_x) u, \nonumber\\
\label{eq:intSeoToepsi}
\end{eqnarray}
with 
\begin{eqnarray}
B_{32} &=& \frac{1}{d-1} \int_{{\mathbb S}^{d-1}} M_u \, h \, c \, \, (1-(\omega \cdot u)^2) \, d \omega, \label{eq:B32}\\
B_{42} &=& \frac{\kappa}{(d-1)(d+1)} \int_{{\mathbb S}^{d-1}} M_u \, h \, e  \, \,  (1-(\omega \cdot u)^2)^2 \, d \omega. \label{eq:B42} \\
B_{52} &=& \frac{\kappa}{d-1} \int_{{\mathbb S}^{d-1}} M_u \, h \, k \, \, (1-(\omega \cdot u)^2) \, d \omega, \label{eq:B52}\\
A_{11} &=& \frac{\kappa}{d-1} \int_{{\mathbb S}^{d-1}} M_u \, h \, c \, \, (\omega \cdot u)^2 \, (1-(\omega \cdot u)^2) \, d \omega, \label{eq:A11}\\
A_{12} &=& \frac{\kappa^2}{(d-1)(d+1)} \int_{{\mathbb S}^{d-1}} M_u \, h \, e  \, \, (\omega \cdot u)^2 \,  (1-(\omega \cdot u)^2)^2 \, d \omega. \label{eq:A12} \\
A_{13} &=& \frac{\kappa^2}{d-1} \int_{{\mathbb S}^{d-1}} M_u \, h \, k \, \, (\omega \cdot u)^2 \, (1-(\omega \cdot u)^2) \, d \omega. \label{eq:A13}
\end{eqnarray}

Here, the tensors $(\nabla_x u) \otimes \nabla_x \rho$ and $(\nabla_x u) \otimes (u \cdot \nabla_x) u$ are order three tensors.  In \eqref{eq:intSeoToepsi}, they are contracted with respect to their three indices against three indices of the fourth order tensor $\Sigma$. The latter being symmetric, we do not need to specify which are its indices involved in the contraction. We can write:
$$ \Sigma : \big( (\nabla_x u) \otimes \nabla_x \rho \big) =  \big( \Sigma : (\nabla_x u) \big) \nabla_x \rho $$
where the right-hand side shows the multiplication of the symmetric matrix (or second order tensor) $\Sigma : (\nabla_x u)$ and of the vector $\nabla_x \log \rho$. Again, in $\Sigma : \nabla_x u$, the contracted product of the order $2$ tensor $\nabla_x u$ with the order $4$ tensor $\Sigma$, we do not need to specify the indices involved in the contraction. A simple computation using \eqref{eq:defSigma} and \eqref{eq:divu} shows that 
$$ \Sigma : \nabla_x u = (\nabla_x \cdot u) P_{u^\perp} + P_{u^\perp} (\nabla_x u) P_{u^\perp} + 
P_{u^\perp} (\nabla_x u)^T P_{u^\perp}, $$
where the exponent $T$ denotes transposition. Consequently, we have that
\begin{equation} 
\Sigma : \big( (\nabla_x u) \otimes \nabla_x \rho \big) 
= (\nabla_x \cdot u) \, P_{u^\perp} \nabla_x \rho + (P_{u^\bot} \, \nabla_x u) (P_{u^\bot} \nabla_x \rho) + \big( (P_{u^\bot} \nabla_x \rho) \cdot P_{u^\bot} \nabla_x \big) u. 
\label{eq:Sigma:naunarho}
\end{equation}
A similar computations yields:
\begin{equation}
\Sigma : \big( (\nabla_x u) \otimes (u \cdot \nabla_x) u  \big) = (\nabla_x \cdot u) \, (u \cdot \nabla_x) u + (P_{u^\bot} \nabla_x u) \big( (u \cdot \nabla_x) u \big)  + \big((u \cdot \nabla_x) u \cdot P_{u^\bot} \nabla_x \big) u, \label{eq:Sig:nauotunau} 
\end{equation}
where we have used that $(u \cdot \nabla_x) u$ is orthogonal to $u$. 

Then, we turn towards the second term of \eqref{eq:intApsi=} and get
\begin{eqnarray}
\int_{{\mathbb S}^{d-1}} \rho \, M_u \, S_{o,e} \, T_{e,o}  \, \vec \psi_u \, d\omega &=& B_{12} \, (u \cdot \nabla_x \log \rho) \, P_{u^\bot} \nabla_x \rho + B_{22} \, (u \cdot \nabla_x \rho) \, (u \cdot \nabla_x) u \nonumber \\
&+&  A_{21} \, \Sigma : \big( (\nabla_x u) \otimes \nabla_x \rho \big)  + A_{22} \, \rho \, \Sigma : \big( (\nabla_x u) \otimes (u \cdot \nabla_x) u \big) , \nonumber\\
\label{eq:intSoeTeopsi}
\end{eqnarray}
with 
\begin{eqnarray}
B_{12} &=& \frac{1}{d-1} \int_{{\mathbb S}^{d-1}} M_u \, h \, a \, \,  (\omega \cdot u) \, (1-(\omega \cdot u)^2) \, d \omega, \label{eq:B12}\\
B_{22} &=& \frac{\kappa}{d-1} \int_{{\mathbb S}^{d-1}} M_u \, h \, b \, (\, \omega \cdot u) \, (1-(\omega \cdot u)^2) \, d \omega. \label{eq:B22} \\
A_{21} &=& \frac{\kappa}{(d-1)(d+1)} \int_{{\mathbb S}^{d-1}} M_u \, h \, a \, \,  (\omega \cdot u) \, (1-(\omega \cdot u)^2)^2 \, d \omega, \label{eq:A21}\\
A_{22} &=& \frac{\kappa^2}{(d-1)(d+1)} \int_{{\mathbb S}^{d-1}} M_u \, h \, b \, (\, \omega \cdot u) \, (1-(\omega \cdot u)^2)^2 \, d \omega. \label{eq:A22} 
\end{eqnarray}

Now, we compute the contribution of $B$. To compute $(\omega \cdot \nabla_x) f_1$, we will need the following identities which follow from straightforward computations: 
\begin{eqnarray}
(\omega \cdot \nabla_x) (\omega \cdot u) &=& (\omega\cdot u) \, \Big( \omega_\bot \cdot \big( (u \cdot \nabla_x) u \big) \Big) + (\omega_\bot \otimes \omega_\bot):(\nabla_x u), \label{eq:omnaomu} \\
(\omega \cdot \nabla_x) \omega_\bot &=& - \big[ (\omega\cdot u) \, \Big( \omega_\bot \cdot \big( (u \cdot \nabla_x) u \big) \Big) + (\omega_\bot \otimes \omega_\bot):(\nabla_x u) \big] \, u \ \  \\
&& - (\omega \cdot u) \big[ (\omega \cdot u) \, (u \cdot \nabla_x) u  + (\omega_\bot \cdot \nabla_x) u \big]. \label{eq:omnaomper}
\end{eqnarray}
Now, we can write 
\begin{eqnarray}
T_{e,o} + T_{o,e} &=& T_1 + \ldots + T_5, \quad \mbox{ with: } \label{eq:def_T_i} \\
T_1&=& a(\omega \cdot u) \, \omega_\perp \cdot (\nabla_x \log \rho), \label{eq:T1}\\
T_2&=& \kappa\, b(\omega \cdot u) \, \omega_\perp \cdot [(u\cdot \nabla_x) u], \label{eq:T2}\\
T_3&=& c(\omega \cdot u) \,  (u\cdot \nabla_x) \log \rho ,   \label{eq:T3} \\
T_4&=&  \kappa \, e(\omega \cdot u) \, (\omega_\perp \otimes \omega_\perp) : (\nabla_x u) ,   \label{eq:T4} \\
T_5&=&  \kappa \, k(\omega \cdot u) \, (\nabla_x \cdot u).  \label{eq:T5}
\end{eqnarray}
Now, because of \eqref{eq:omnaomu}, \eqref{eq:omnaomper}, the expression of $(\omega \cdot \nabla_x) T_1$ involves eight different terms but only four of them have the requested parities to contribute a non zero term in $\int B \, \vec \psi_u \, d\omega$. After some algebra, we get 
\begin{eqnarray}
&&\int_{{\mathbb S}^{d-1}} \rho \, M_u \, \vec \psi_u \, (\omega \cdot \nabla_x) T_1 \, d \omega = B_{11} \, \Sigma : (\nabla_x u \otimes \nabla_x \rho)  \nonumber \\
&&\qquad\quad +  B_{12} \big[- (P_{u^\bot} \nabla_x u) (P_{u^\bot} \nabla_x \rho) - \big((u \cdot \nabla_x) \rho \big) \, (u \cdot \nabla_x) u + P_{u^\bot} (u \cdot \nabla_x) ( \nabla_x \rho) \big], \nonumber\\
&&\label{eq:intT1}
\end{eqnarray}
with
\begin{eqnarray}
B_{11} &=& \frac{1}{(d-1)(d+1)} \int_{{\mathbb S}^{d-1}} M_u \, h \, a' \, \, \, (1-(\omega \cdot u)^2)^2 \, d \omega, \label{eq:B11}
\end{eqnarray}
and $B_{12}$ is given by \eqref{eq:B12}. Furthermore, we note that 
$$P_{u^\perp}\lp (u\cdot \nabla_x) \nabla_x \rho\rp= (P_{u^\perp} \nabla_x) ((u\cdot \nabla_x) \rho) - (P_{u^\perp}\nabla_x u) (P_{u^\perp}\nabla_x \rho),$$
so that \eqref{eq:intT1} gives
\begin{eqnarray}
\int_{{\mathbb S}^{d-1}} \rho \, M_u \, \vec \psi_u \, (\omega \cdot \nabla_x) T_1 \, d \omega &=& B_{11} \, \Sigma : (\nabla_x u \otimes \nabla_x \rho) + B_{12} \big[- 2 (P_{u^\bot} \nabla_x u) (P_{u^\bot} \nabla_x \rho) \nonumber \\
&& \hspace{-15pt} + (P_{u^\bot} \nabla_x) \big((u \cdot \nabla_x) \rho \big) - \big((u \cdot \nabla_x) \rho \big) \, (u \cdot \nabla_x) u \big]. \label{eq:intT1_bis}
\end{eqnarray}
Proceeding similarly for the other terms, we have 
\begin{eqnarray}
\hspace{-1cm}
&&\int_{{\mathbb S}^{d-1}} \rho \, M_u \, \vec \psi_u \, (\omega \cdot \nabla_x) T_2 \, d \omega = B_{21} \, \rho \, \Sigma : \big(\nabla_x u \otimes (u \cdot \nabla_x) u \big)  \nonumber \\
&&\qquad\qquad + B_{22} \, \rho \, \big[- (P_{u^\bot} \nabla_x u) \big( (u \cdot \nabla_x) u \big) + P_{u^\bot} (u \cdot \nabla_x) \big( (u \cdot \nabla_x) u \big) \big], \nonumber \\
\label{eq:intT2}
\end{eqnarray}
with 
\begin{eqnarray}
B_{21} &=& \frac{\kappa}{(d-1)(d+1)} \int_{{\mathbb S}^{d-1}} M_u \, h \, b' \, \, \, (1-(\omega \cdot u)^2)^2 \, d \omega, \label{eq:B21}
\end{eqnarray}
and $B_{22}$, given by \eqref{eq:B22}; 
\begin{eqnarray}
\int_{{\mathbb S}^{d-1}} \rho \, M_u \, \vec \psi_u \, (\omega \cdot \nabla_x) T_3 \, d \omega &=& B_{31} \, (u \cdot \nabla_x \rho) \,  (u \cdot \nabla_x) u +  B_{32} \, P_{u^\bot} \nabla_x (u \cdot \nabla_x \rho),\nonumber\\
 \label{eq:intT3}
\end{eqnarray}
with 
\begin{eqnarray}
B_{31} &=& \frac{1}{d-1} \int_{{\mathbb S}^{d-1}} M_u \, h \, c' \, \, (\omega \cdot u) \, (1-(\omega \cdot u)^2) \, d \omega, \label{eq:B31}
\end{eqnarray}
and $B_{32}$, given by \eqref{eq:B32}; 
\begin{eqnarray}
\int_{{\mathbb S}^{d-1}} \rho \, M_u \, \vec \psi_u \, (\omega \cdot \nabla_x) T_4 \, d \omega &=& (B_{41} - B_{42}) \, \rho \, \Sigma : \big(\nabla_x u \otimes (u \cdot \nabla_x) u \big) \nonumber \\
&-&  B_{43} \, \rho \, \Big[ \Big( \big( (u \cdot \nabla_x) u \big) \cdot P_{u^\perp}\nabla_x \Big) u + (P_{u^\bot} \nabla_x u) \big( (u \cdot \nabla_x) u \big) \Big] \nonumber \\
&+& B_{42} \, \rho \, \Sigma : \nabla_x^2 u, \label{eq:intT4}
\end{eqnarray}
with 
\begin{eqnarray}
B_{41} &=& \frac{\kappa}{(d-1)(d+1)} \int_{{\mathbb S}^{d-1}} M_u \, h \, e' \, \, (\omega \cdot u) \, (1-(\omega \cdot u)^2)^2 \, d \omega, \label{eq:B41} \\
B_{43} &=& \frac{\kappa}{d-1} \int_{{\mathbb S}^{d-1}} M_u \, h \, e \, \, (\omega \cdot u)^2 \, (1-(\omega \cdot u)^2) \, d \omega, \label{eq:B43}
\end{eqnarray}
and $B_{42}$, given by \eqref{eq:B42};  
\begin{eqnarray}
\int_{{\mathbb S}^{d-1}} \rho \, M_u \, \vec \psi_u \, (\omega \cdot \nabla_x) T_5 \, d \omega &=& B_{51} \, \rho \, (\nabla_x \cdot u) \, (u \cdot \nabla_x) u + B_{52} \, \rho \, (P_{u^\bot} \nabla_x) (\nabla_x \cdot u), \nonumber \\
\label{eq:intT5}
\end{eqnarray}
with 
\begin{eqnarray}
B_{51} &=& \frac{\kappa}{d-1} \int_{{\mathbb S}^{d-1}} M_u \, h \, k' \, \, (\omega \cdot u) \, (1-(\omega \cdot u)^2) \, d \omega, \label{eq:B51} 
\end{eqnarray}
and $B_{52}$, given by \eqref{eq:B52}. In \eqref{eq:intT4}, the symbol $\nabla_x^2 u$ denotes the third order tensor of the second derivatives of $u$ with components $(\nabla_x^2 u)_{ijk} = \partial_{x_i} \partial_{x_j} u_k$. The symbol $\Sigma : \nabla_x^2 u$ denotes the vector obtained by contracting  $\Sigma \otimes \nabla_x^2 u$ over three indices (which ones being unsignificant due to the symmetry of $\Sigma$). We can prove that 
\begin{eqnarray}
 \Sigma : \nabla_x^2 u &=& P_{u^\bot} \big( \nabla_x \cdot ( P_{u^\bot} \nabla_x u ) \big) + (\nabla_x \cdot u) \, (u \cdot \nabla_x) u + \Big( \big( (u \cdot \nabla_x) u \big) \cdot P_{u^\perp}\nabla_x \Big) u \nonumber \\
&+& 2 P_{u^\bot} \nabla_x (\nabla_x \cdot u) + 2 (P_{u^\bot} \nabla_x u) \big( (u \cdot \nabla_x) u \big). 
\label{eq:Signa2u}
\end{eqnarray}
Indeed, with \eqref{eq:defSigma}, we have, using indices: 
\begin{equation}
 (\Sigma : \nabla_x^2 u)_\ell  
= (P_{u^\bot})_{ij} \, (P_{u^\bot})_{k \ell} \, \partial_{x_i} \partial_{x_j} u_k 
+ (P_{u^\bot})_{ik} \, (P_{u^\bot})_{j \ell} \, \partial_{x_i} \partial_{x_j} u_k 
+ (P_{u^\bot})_{i\ell} \, (P_{u^\bot})_{jk} \, \partial_{x_i} \partial_{x_j} u_k, 
\label{eq:Signa2ul}
\end{equation}
where Einstein's repeated index summation is being used. We first note that the second and third terms are equal by exchange of the dummy indices $i$ and $j$. Now, the first term can be written:
\begin{equation}
(P_{u^\bot})_{ij} \, (P_{u^\bot})_{k \ell} \, \partial_{x_i} \partial_{x_j} u_k  
= (P_{u^\bot})_{\ell k} \, \partial_{x_i} \Big( (P_{u^\bot})_{ij} \,  \partial_{x_j} u_k \Big)
- (P_{u^\bot})_{\ell k} \,  \, \big( \partial_{x_i} (P_{u^\bot})_{ij} \big) \,  \partial_{x_j} u_k. 
\label{eq:PuPunanau}
\end{equation}
With $\partial_{x_i} (P_{u^\bot})_{ij} = - (\partial_{x_i} u_i) \, u_j - u_i \, (\partial_{x_i} u_j)$, we get 
\begin{eqnarray*} (P_{u^\bot})_{\ell k} \,  \big( \partial_{x_i} (P_{u^\bot})_{ij} \big) \,  \partial_{x_j} u_k 
&=& - (\partial_{x_i} u_i) \, (P_{u^\bot})_{\ell k} \,   \big( (u_j \partial_{x_j}) u_k \big)
+ (P_{u^\bot})_{\ell k} \Big( \big( (u_i \partial_{x_i}) u_j \big) \partial_{x_j} \Big) u_k \\
&=& - \bigg( (\nabla_x \cdot u) \, (u \cdot \nabla_x) u + \Big( \big( (u \cdot \nabla_x) u \big) \cdot \nabla_x \Big) u \bigg)_\ell. 
\end{eqnarray*}
We note that the first term at the right-hand side of \eqref{eq:PuPunanau} can be written $\big(P_{u^\bot} \big( \nabla_x \cdot ( P_{u^\bot} \nabla_x u ) \big) \big)_\ell$. So, collecting all these identities, we get the first line of \eqref{eq:Signa2u}. For the second term of \eqref{eq:Signa2ul} we write:
\begin{equation} (P_{u^\bot})_{ik} \, (P_{u^\bot})_{j \ell} \, \partial_{x_i} \partial_{x_j} u_k 
= (P_{u^\bot})_{\ell j} \, \partial_{x_j} \big( (P_{u^\bot})_{ik} \, \partial_{x_i} u_k \big)
- (P_{u^\bot})_{\ell j} \, (\partial_{x_j} (P_{u^\bot})_{ik}) \, (\partial_{x_i}  u_k).  
\label{eq:PuPupapau}
\end{equation}
With $\partial_{x_j} (P_{u^\bot})_{ik} = - (\partial_{x_j} u_i) \, u_k - u_i \partial_{x_j} u_k$ and noting that $ u_k (\partial_{x_i}  u_k) = 0$, we get
$$ 
(P_{u^\bot})_{\ell j} \, (\partial_{x_j} (P_{u^\bot})_{ik}) \, (\partial_{x_i}  u_k) = 
- \big( (P_{u^\bot})_{\ell j} \, \partial_{x_j} u_k \big) \, \big( (u_i \partial_{x_i})  u_k \big) = - \bigg( (P_{u^\bot} \nabla_x u) \big( (u \cdot \nabla_x) u \big) \bigg)_\ell.
$$
Since the first term at the right-hand side of \eqref{eq:PuPupapau} can be written $\big( P_{u^\bot} \nabla_x (\nabla_x \cdot u) \big)_\ell$ (using \eqref{eq:divu}), we get the second line of 
\eqref{eq:Signa2u}, remembering that the second and third terms of \eqref{eq:Signa2ul} are equal. 

Now, we collect \eqref{eq:intSeoToepsi}, \eqref{eq:intSoeTeopsi}, \eqref{eq:intT1_bis}, \eqref{eq:intT2}, \eqref{eq:intT3}, \eqref{eq:intT4} and \eqref{eq:intT5}, and use formulas \eqref{eq:Sigma:naunarho}, \eqref{eq:Sig:nauotunau} and \eqref{eq:Signa2u} to obtain \eqref{eq:macro_equations_u} with the following formulas for the $E$, $F$, $G$, $H$ constants:
\begin{eqnarray*}
&& E_1 = \frac{B_{12} + B_{32}}{C_0}, \quad F_1 = \frac{B_{22}}{C_0}, \quad F_2 = \frac{B_{42}}{C_0}, \quad F_3 = \frac{2B_{42} + B_{52}}{C_0}, \\
&& G_1 = \frac{{A_{11} + B_{22} + B_{31} - B_{12}}}{C_0}, \quad G_2 = \frac{- 2 B_{12} + B_{42} + A_{21} + B_{11}}{C_0}, \\
&& G_3 = \frac{B_{42} + A_{21} + B_{11}}{C_0}, \quad G_4 = \frac{B_{52} + B_{42} + A_{21} + B_{11}}{C_0},  \\
&&  H_1=\frac{B_{32} + B_{12}}{C_0}, \quad H_2 = \frac{- B_{22}- B_{43} + A_{12} + A_{22} + B_{21} + B_{41} + B_{42}}{C_0},  \\
&& H_3 = \frac{- B_{43} + A_{12} + A_{22} + B_{21} + B_{41}}{C_0}, \quad H_4 = \frac{A_{13} + B_{51} + A_{12} + A_{22} + B_{21} + B_{41} }{C_0}. 
\end{eqnarray*}
The expressions \eqref{eq:E1} to \eqref{eq:H4} are deduced from the expressions of the $A_{ij}$ and $B_{ij}$ given in this section through the use of \eqref{eq:int_spheriq}. This ends the proof of Theorem \ref{th:macro}.

\section{Summary and outlook}
\label{sec:summary}

In this paper, we have derived a cross-diffusion system for the density and mean direction of a system of self-propelled particles interacting through nematic alignment. This derivation highlights the role of the generalised collision invariants in the inversion of the linearized collision operator. In the future, we may expect that this technique will be useful to derive macroscopic models for other kinds of alignment interactions in the diffusive regime. An example of this is given by nematically moving particles interacting through nematic alignment. We can also develop similar techniques for abrupt collisions leading to jumps in the particle directions. The limit system itself poses a number of challenges. The first one of course is its well-posedness. If the second-order terms prove to be elliptic as conjectured, then we may at least hope for local-in-time well-posedness. The relations between the structure of the model and the underlying symmetries of the system are worth being explored further. We may in this way provide an exhaustive list of models compatible with the underlying symmetries. As the model presents a large number of different terms, a natural question is to understand the role of each of them. A related one is to determine whether these terms are independent from each other, or if some structural relations between the coefficients are needed for well-posedness. The numerical simulation of the model will also be challenging. Indeed, given the role of the symmetries it is desirable to develop methods that preserve them. This will require the development of new methods as traditional grid-based methods break the rotational invariance of the continuous problem. Finally, it will be interesting to investigate whether the continuous model can produce similar patterns as the underlying particle model. If this is the case, this patterning ability could be analyzed through dynamical systems techniques such as bifurcation analysis.

\section*{Data statement}
No new data were collected in the course of this research.

\section*{Conflict of interest}
The authors have no conflict of interest to declare.

\appendix

\section{Proof of the identity \eqref{eq:aux_decompPDPD}}
\label{sec:proof_aux_decomp}

We consider the $i$-th component of the following expressions:
\beqar
I:= P_{u^\perp}\lp \nabla_x \cdot (P_{u^\perp}\nabla_x u) \rp_i = (P_{u^\perp})_{ij} \partial_k [(P_{u^\perp})_{kl} \partial_l u_j],\\
II:= \lp \mbox{Tr}_{[12]} (P_{u^\perp}\nabla_x) (P_{u^\perp}\nabla_x u) \rp_i = (P_{u^\perp})_{jp} \partial_p[(P_{u^\perp})_{j\ell}\partial_\ell u_i],
\eeqar
where we have used Einstein's convention (sum of terms with double indices).
A computation shows that
\beqar
II &=& (P_{u^\perp})_{jp} \partial_p[(P_{u^\perp})_{jl}(P_{u^\perp})_{is}\partial_\ell u_s] \\
&=& (P_{u^\perp})_{jp}(P_{u^\perp})_{is} \partial_p[(P_{u^\perp})_{j\ell}\partial_\ell u_s] + (P_{u^\perp})_{jp}(P_{u^\perp})_{j\ell}\partial_\ell u_s \partial_p (P_{u^\perp})_{is} \\
&=& (P_{u^\perp})_{is} \partial_j\lp (P_{u^\perp})_{j\ell} \partial_\ell u_s\rp - (P_{u^\perp})_{is} u_j u_p \partial_p[(P_{u^\perp})_{j\ell}\partial_\ell u_s] - (P_{u^\perp})_{p\ell}\partial_\ell u_s  u_i \partial_p u_s) \\
&=& (P_{u^\perp})_{ij} \partial_k[(P_{u^\perp})_{k\ell}\partial_\ell u_j]+(P_{u^\perp})_{is} u_p \partial_p u_j (P_{u^\perp})_{j\ell}\partial_\ell u_s- (P_{u^\perp})_{p\ell}\partial_\ell u_s  u_i \partial_p u_s \\
&=& (P_{u^\perp})_{ij} \partial_k [(P_{u^\perp})_{k\ell}\partial_\ell u_j]+ u_p \partial_p u_j \partial_j u_i - u_i (P_{u^\perp})_{p\ell}\partial_\ell u_s \partial_p u_s,
\eeqar
where in the first equality we  used that $(P_{u})_{is}\partial_\ell u_s = \partial_\ell u_i$, since $u_s \partial_\ell u_s =0$; in the third equality we used that $u_s\partial_\ell u_s=0$; in the fourth equality we used that
$u_j (P_{u^\perp})_{j\ell}=0$ and in the first term we changed the labels $s$ for $j$ and $j$ for $k$.\\
From this we conclude that
$$II= I + ((u\cdot\nabla_x) u)\cdot\nabla_x)u - u (P_{u^\perp}\nabla_x u: P_{u^\perp}\nabla_x u)),$$
which is, precisely, expression \eqref{eq:aux_decompPDPD}.
The last term in this expression follows from the following:
$$
(P_{u^\perp})_{p\ell}\partial_\ell u_s \partial_p u_s = (P_{u^\perp})_{p\ell}\partial_\ell u_s (P_{u^\perp})_{pq}\partial_q u_s\\
= (P_{u^\perp}\nabla_x u)_{ps}(P_{u^\perp}\nabla_x u)_{ps}.
$$

\section*{Acknowledgment}
PD acknowledges support by the Engineering and Physical Sciences Research Council (EPSRC)
under grants no. EP/M006883/1 and EP/P013651/1, by the Royal Society and the
Wolfson Foundation through a Royal Society Wolfson Research Merit Award no. WM130048
and by the National Science Foundation (NSF) under grant no. RNMS11-07444 (KI-Net).
PD is on leave from CNRS, Institut de Math\'ematiques de Toulouse, France.

\noindent SMA is supported by the Vienna Science and Technology Fund (WWTF) with a Vienna
Research Groups for Young Investigators, grant VRG17-014.

\end{document}